\newcommand\fverb{\setbox\fverbbox=\hbox\bgroup\verb}
\newcommand\fverbdo{\egroup\medskip\noindent%
			\fbox{\unhbox\fverbbox}\ }
\newcommand\fverbit{\egroup\item[\fbox{\unhbox\fverbbox}]}
\newbox\fverbbox
\newtheorem{proposition}{Proposition}[section]
\newtheorem{lemma}[proposition]{Lemma}
\newtheorem{corollary}[proposition]{Corollary}
\newtheorem{theorem}[proposition]{Theorem}
\theoremstyle{definition}
\newtheorem{definition}[proposition]{Definition}
\newtheorem{example}[proposition]{Example}
\theoremstyle{remark}
\newtheorem{remark}[proposition]{Remark}
\newcommand{\lelabel}[1]{\label{le:#1}}
\newcommand{\leref}[1]{Lemma~\ref{le:#1}}
\newcommand{\prlabel}[1]{\label{pr:#1}}
\newcommand{\prref}[1]{Proposition~\ref{pr:#1}}
\newcommand{\relabel}[1]{\label{re:#1}}
\newcommand{\reref}[1]{Remark~\ref{re:#1}}
\newcommand{\exlabel}[1]{\label{ex:#1}}
\newcommand{\exref}[1]{Example~\ref{ex:#1}}
\newcommand{\delabel}[1]{\label{de:#1}}
\newcommand{\deref}[1]{Definition~\ref{de:#1}}
\newcommand{\eqlabel}[1]{\label{eq:#1}}
\newcommand{\equref}[1]{(\ref{eq:#1})}
\title{The notion of abstract manifold: a pedagogical approach}
\author{Konstantinos Kanakoglou \\
School of Mathematics, \\
Aristotle University of Thessaloniki (AUTh), \\
54124, \\
Thessaloniki,  \\
Greece  \\  \\
\small{kanakoglou@hotmail.com, \\ kanakoglou@ifm.umich.mx}}
\abstract{A self-contained introduction is presented of the notion of the (abstract) differentiable manifold and its tangent vector fields. The way in which elementary topological ideas stimulated the passage from Euclidean (vector) spaces and linear maps to abstract spaces (manifolds) and diffeomorphisms is emphasized. Necessary topological ideas are introduced at the beginning in order to keep the text as self-contained as possible. Connectedness is presupposed in the definition of the manifold. Definitions and statements are laid rigorously, lots of examples and figures are scattered to develop the intuitive understanding and exercises of various degree of difficulty are given in order to stimulate the pedagogical character of the manuscript. The text can be used for self-study or as part of the lecture notes of an advanced undergraduate or beginning graduate course, for students of mathematics, physics or engineering. \\ 
\\   \\ 
\textbf{MSC2010:} 5301, 9701, 5101, 97U30, 97D80 \\ 

\textbf{keywords:} topological manifolds, differentiable manifolds, tangent space, tangent bundle, tangent vector field}
\begin{document}

\maketitle

\acknowledgments{The manuscript grew from the teaching of a postgraduate course of Differential Geometry, during the second semester of the academic period 2009-2010, at physics students of the Institute of Physics and Mathematics (IFM) of the University of Michoacan (UMSNH) at Morelia, Michoacan, Mexico. The author would like to kindly thank the students attending the course, whose inspiring  interest on the foundations of modern geometry, stimulated the writing of a set of lecture notes. It was from the first part, of the initial form of these lecture notes, that this manuscript evolved. The author would also like to express his gratitude to Prof. Alfredo Herrera-Aguilar. During the preparation of the present manuscript KK was supported by a position of Visiting Professor (2009-2010) at IFM, UMSNH and by a postdoctoral fellowship (2011-2012) granted from the Research Committee of the Aristotle University of Thessaloniki (AUTh).  \\ }

%\textbf{MSC2010:} 5301, 9701, 5101, 97U30, 97D80

%\textbf{keywords:} topological manifolds, differentiable manifolds, tangent space, tangent bundle, tangent vector field

\newpage

\section{General Topology: a Review}

Topology began with the investigation of certain questions in geometry. Euler's $1736$ paper on the Seven Bridges of K\"{o}nigsberg is regarded as one of the first academic treatises in modern topology. Towards the end of the $19^{th}$ century, a distinct discipline developed, which was referred to in Latin as the \emph{Geometria Situs} ("geometry of place") or \emph{Analysis Situs}, and which later acquired the modern name of topology.
The concept of topological space grew out of the study and the abstraction of the properties of the real line and Euclidean space and the study of continuous functions and the notion of ``\emph{closedness}" in these spaces. The definition of topological space that is now standard -and presented here- was for a long time being formulated. Various mathematicians -between them, Frechet, Hausdorff, Kuratowski and others as well- proposed different definitions over a period of years ranging from the late $19^{th}$ to the first half of the $20^{th}$ century and it took quite a while before we settled on the one that seemed most suitable.

\subsection{General notions}

We start by recalling the fundamental notions of general Topology.
\begin{definition}[\textbf{Topological Space}] \delabel{TopologicalSpace}
Let $X$ be an (arbitrary) set and consider a collection $\tau = \{\omega / \omega \subseteq X \}$ of subsets of $X$. Any
$\omega \in \tau$ will be called an \textbf{open set}. The collection $\tau$ will be said to define a \textbf{topology on} $\mathbf{X}$ if
the following conditions are satisfied
\begin{enumerate}
\item $\emptyset, X \in \tau$
\item For any collection $\{\omega_{i}\}_{i \in I}$ with $\omega_{i} \in \tau$ $\forall i \in I$  ($I$ an arbitrary index set), we have
$\bigcup_{i \in I}\omega_{i} \in \tau$
\item For any finite collection $\omega_{1}, \omega_{2}, ..., \omega_{\kappa} \in \tau$  we have $\bigcap_{i = 1}^{\kappa} \omega_{i} \in \tau$
\end{enumerate}
$\mathbf{(X, \tau)}$ will be called a \textbf{topological space}.
\end{definition}
The notion of \textbf{closed set} is defined as a set whose complement is open: Thus, if $\omega \in \tau$ is an open set then its complement $X \smallsetminus \omega$ is
closed and conversely if $U \subseteq X$ is a closed subset of $X$ then $X \smallsetminus U$ is open. Note that by definition both $X$ and $\emptyset$ are open and
closed at the same time (thus the notions of open and closed sets are not mutually exclusive!).

We will now present a series of examples, starting from a couple of somewhat extreme cases:
\begin{example}
Let $X$ be a set and consider the collection of subsets $\tau_{0} = \{ \emptyset, X \}$. It is topology (verify!). We will call this as the \textbf{trivial}
or \textbf{minimal} topology on $X$
\end{example}
\begin{example}
Let $X$ be a set and consider the collection of subsets given by $\tau_{1} = \{ all \ possible \ subsets \ of \ X \}$. It is a topology (verify!). We will call
this as the \textbf{discrete} or \textbf{maximal} topology on $X$
\end{example}
\begin{example}[\textsf{Topologies on a finite set}]
Let $X = \{ a, b, c \}$ be a three-element set. There are many possible topologies on $X$, some of which are indicated schematically in the first nine drawings of the following figure.
\begin{center}
\includegraphics[angle=270, scale=0.4]{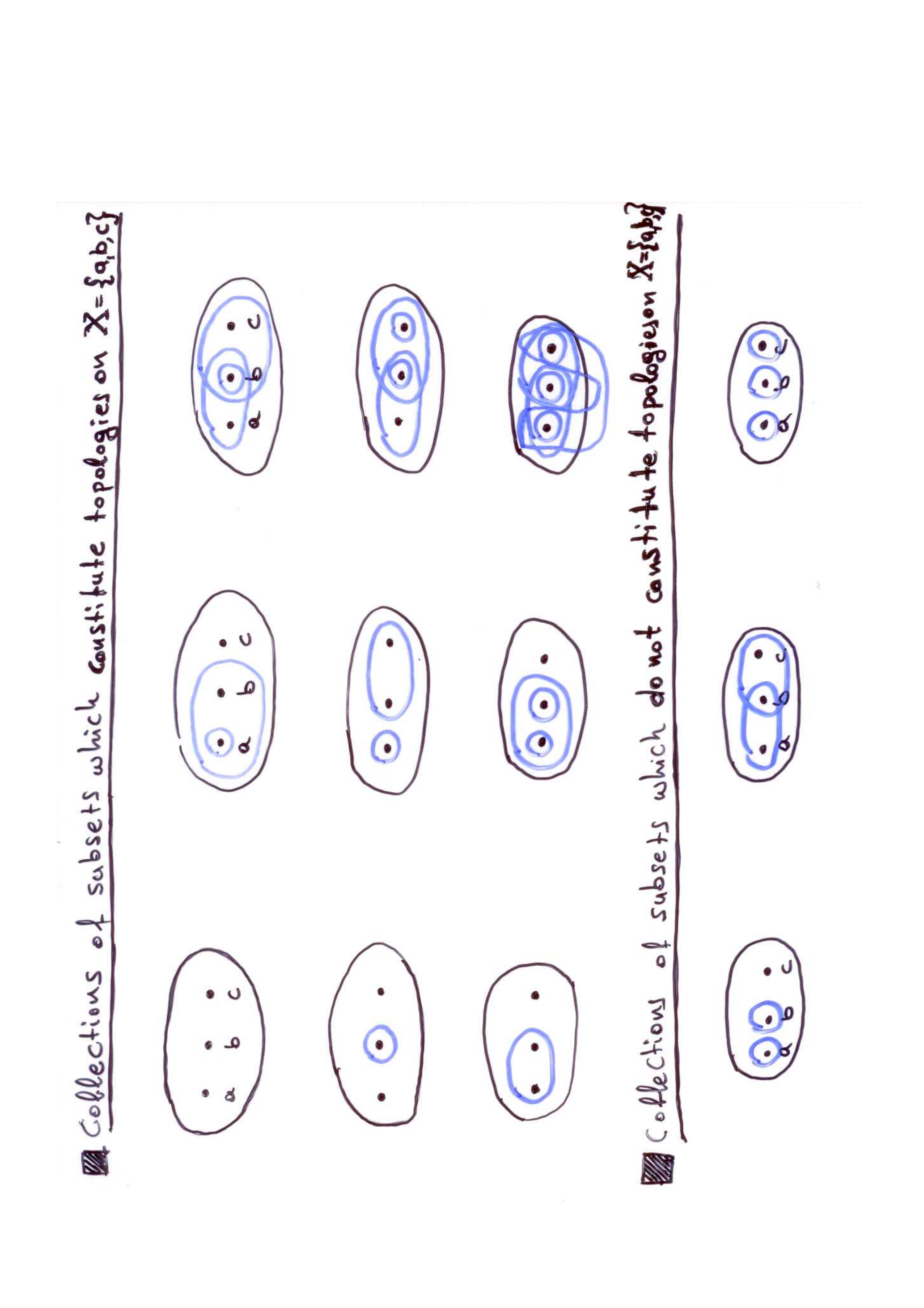}
\end{center}
But -as is schematically indicated in the last three drawings of the previous figure- not every collection of subsets of $X$ constitutes a topology on $X$
\end{example}
\begin{example}[\textsf{Metric Topological spaces}]
Any metric \footnote{We recall here the definition of a metric on an arbitrary set $X$: \\ \textbf{Def.:} Given any set $X$, a \textbf{metric on} $\mathbf{X}$ is a \underline{real},
\underline{finite}, \underline{non-negative} function $\rho : X \times X \rightarrow \mathbb{R}$ such that the following axioms are satisfied $\forall x,y,z \in X$ \\
$\bullet$ \emph{identity}: $\rho(x,y) = 0$  $\ \Leftrightarrow$ $\ x=y$ \\
$\bullet$ \emph{symmetry}:  $\rho(x,y)=\rho(y,x)$  \\
$\bullet$ \emph{triangle inequality}:  $\rho(x,y) \leq \rho(x,z) + \rho(z,y)$   \\
The pair $\mathbf{(X, \rho)}$ will be called a \textbf{metric space}.
} space is a topological space with topology induced by the metric. In other words $\tau$ is equal to the collection of all subsets of $X$ which are open in the metric space sense.
This is called the \textbf{metric topology} on $X$. Let us describe this in some more detail: In any metric space $(X, \rho)$ we can define $\forall x_{0} \in X$ the following subsets
\begin{equation} \eqlabel{balls-spheres}
\begin{array}{c}
B(x_{0}, r) = \{ x \in X \diagup \rho(x, x_{0}) < r \} \rightsquigarrow \mathbf{Open \ ball} \ (\underline{centered \ at \ \ \mathbf{x_{0}}}, \ with \ \underline{radius \ \mathbf{r}})  \\
\widetilde{B}(x_{0}, r) = \{ x \in X \diagup \rho(x, x_{0}) \leq r \} \rightsquigarrow \mathbf{Closed \ ball} \ (\underline{centered \ at \ \mathbf{x_{0}}}, \ with \ \underline{radius \ \mathbf{r}})  \\
S(x_{0}, r) = \{ x \in X \diagup \rho(x, x_{0}) = r \} \rightsquigarrow \mathbf{Sphere} \ (\underline{centered \ at \ \mathbf{x_{0}}}, \ with \ \underline{radius \ \mathbf{r}})
\end{array}
\end{equation}
A subset of a metric space will be called \emph{open} -by defin.- if it contains an open ball about each of its points. It can be shown, that in a metric space a set is open if and only if it can be expressed as a union of open balls.
\end{example}
\begin{example}[\textsf{Euclidean Topology on $\mathbb{R}^{n}$}]
The typical example of metric topological spaces is of course the $\mathbb{R}^{n} = \{ \overline{x} = (x_{1}, x_{2}, ..., x_{n}) \diagup x_{i} \in \mathbb{R}, \ \forall \ i = 1,2, ..., n \}$ with metric given by
\begin{equation}
\rho(\overline{x},\overline{y}) = \sqrt{\sum_{k=1}^{n}(x_{k}-y_{x})^{2}} = |\overline{x}-\overline{y}|
\end{equation}
The metric topology induced on $\mathbb{R}^{n}$ by the above metric, is frequently called in the bibliography as \textbf{Euclidean topology}.
\end{example}
\textbf{\underline{Exercises:}}  \\
$\mathbf{1.}$ Let $X = \{x_{1}. x_{2}, ..., x_{n}\}$ be a finite set with $n$-elements. Denote by $\tau(X)$ the set of all of its subsets. This is called the \emph{power set} of $X$. How many elements does $\tau(X)$ have ? How many possible collections of subsets can we choose from it ? Does each one of these collections constitutes a topology ?
\begin{definition}[\textbf{Base} of a Topology]
A collection $\mathcal{B} = \{ V \} \subseteq \tau$ of open sets of a topological space $(X, \tau)$ will be called a \textbf{base} of the topology if for any open set
$w$ and any point $x \in w$, there exists some $V \in \mathcal{B}$ such that: $x \in V \subset w$
\end{definition}
The following lemma provides an equivalent description of the notion of the base:
\begin{lemma}
Let $(X, \tau)$ be a topological space and let $\mathcal{B} = \{ V \} \subseteq \tau$ a collection of open sets. $\mathcal{B}$ is a base of the topology if and only if every
non-empty open subset of $X$ can be represented as a union of a subfamily of $\mathcal{B}$.
\end{lemma}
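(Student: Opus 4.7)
The plan is to prove the two implications separately, each by a direct unpacking of definitions; no deep topological machinery is needed beyond axioms of a topology.

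For the forward direction ($\Rightarrow$), I would assume $\mathcal{B}$ is a base in the sense of the preceding definition, and take an arbitrary non-empty open set $w \in \tau$. The idea is to produce, pointwise, a basic open set inside $w$ and then glue them. Specifically, for every $x \in w$ the base property supplies some $V_x \in \mathcal{B}$ with $x \in V_x \subset w$. Then I would argue that $w = \bigcup_{x \in w} V_x$: the inclusion $\subseteq$ holds because each $x$ lies in its own $V_x$, while $\supseteq$ is immediate from $V_x \subset w$. Since $\{V_x\}_{x \in w}$ is a subfamily of $\mathcal{B}$, this is the required representation.

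For the reverse direction ($\Leftarrow$), I would assume every non-empty open set is expressible as a union of a subfamily of $\mathcal{B}$, and then verify the base condition directly. Given an open set $w$ and a point $x \in w$, I write $w = \bigcup_{\alpha \in A} V_\alpha$ with each $V_\alpha \in \mathcal{B}$. Because $x \in w$, there is at least one index $\alpha_0 \in A$ such that $x \in V_{\alpha_0}$, and $V_{\alpha_0} \subset w$ since it is one of the sets in the union. Thus $\mathcal{B}$ satisfies the defining property of a base.

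The only genuinely delicate point is to handle the empty case cleanly in the reverse direction: if $w = \emptyset$ the statement in the lemma (``every non-empty open subset'') says nothing, but this is also irrelevant because the base condition only requires producing a $V \in \mathcal{B}$ for a point $x \in w$, and there are no such points. So no obstacle arises there. I expect no real obstacle at all; the main pedagogical care is to make clear that the set $\{V_x\}_{x \in w}$ constructed in the forward direction is a legitimate subfamily of $\mathcal{B}$ (some $V_x$ may coincide for different $x$, but this only reduces the indexing and does not affect the union).
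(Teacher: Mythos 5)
Your proof is correct and is the standard argument; the paper states this lemma without proof, and your two-directional unpacking of the definitions (pointwise selection plus union for the forward direction, extraction of a containing basic set from the union for the reverse) is exactly what is intended. The care you take with the empty set and with the legitimacy of the subfamily $\{V_x\}_{x\in w}$ is appropriate and complete.
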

\begin{example}
If $X$ is any set, the collection of all one point subsets of $X$ is a basis for the discrete topology on $X$
\end{example}
\begin{example}
The collection of all open balls (see \equref{balls-spheres}) about every point $\overline{x} \in \mathbb{R}^{n}$ forms a base for the Euclidean topology of $\mathbb{R}^{n}$
\end{example}
\begin{example}[\textsf{Bases in $\mathbb{R}^{2}$ Euclidean topology}]
We will illustrate two different bases for the $\mathbb{R}^{2}$ metric topology, one made of ``\emph{balls}" and another made of ``\emph{boxes}". Lets describe the situation visually:
\begin{center}
\includegraphics[scale=0.35]{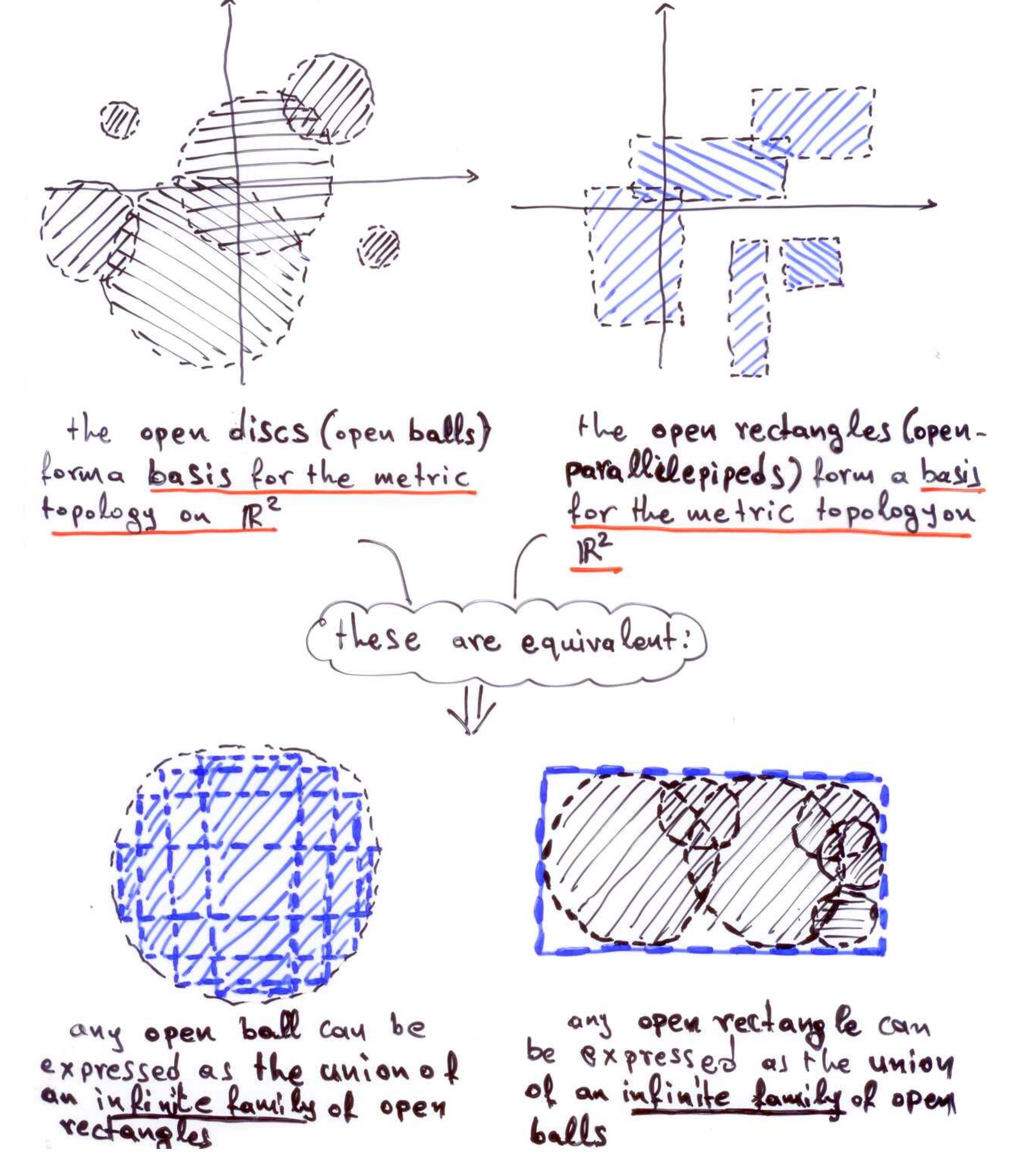}
\end{center}
\end{example}
\begin{definition}
Let $(X, \tau)$ be a topological space. A collection $\mathcal{S} \subset \tau$ of open sets is called a \textbf{subbase} for the topology if the family of all finite intersections
$\cap_{i=1}^{\kappa}U_{i}$ where $U_{i} \in \mathcal{S}$ for all $i=1, 2, ..., \kappa$ forms a base of the topology $\tau$
\end{definition}
Let us proceed to one more visual example on the notion of subbase:
\begin{example}[\textsf{Subbase in $\mathbb{R}^{2}$ Euclidean topology}]
Intersections between an horizontal and a vertical open strip produce an open box (rectangle) with its sides parallel to the coordinate axes. The collection of all such open boxes constitutes a basis for the $\mathbb{R}^{2}$ Euclidean topology. The situation is visualized in the following figure:
\begin{center}
\includegraphics[angle=270, scale=0.35]{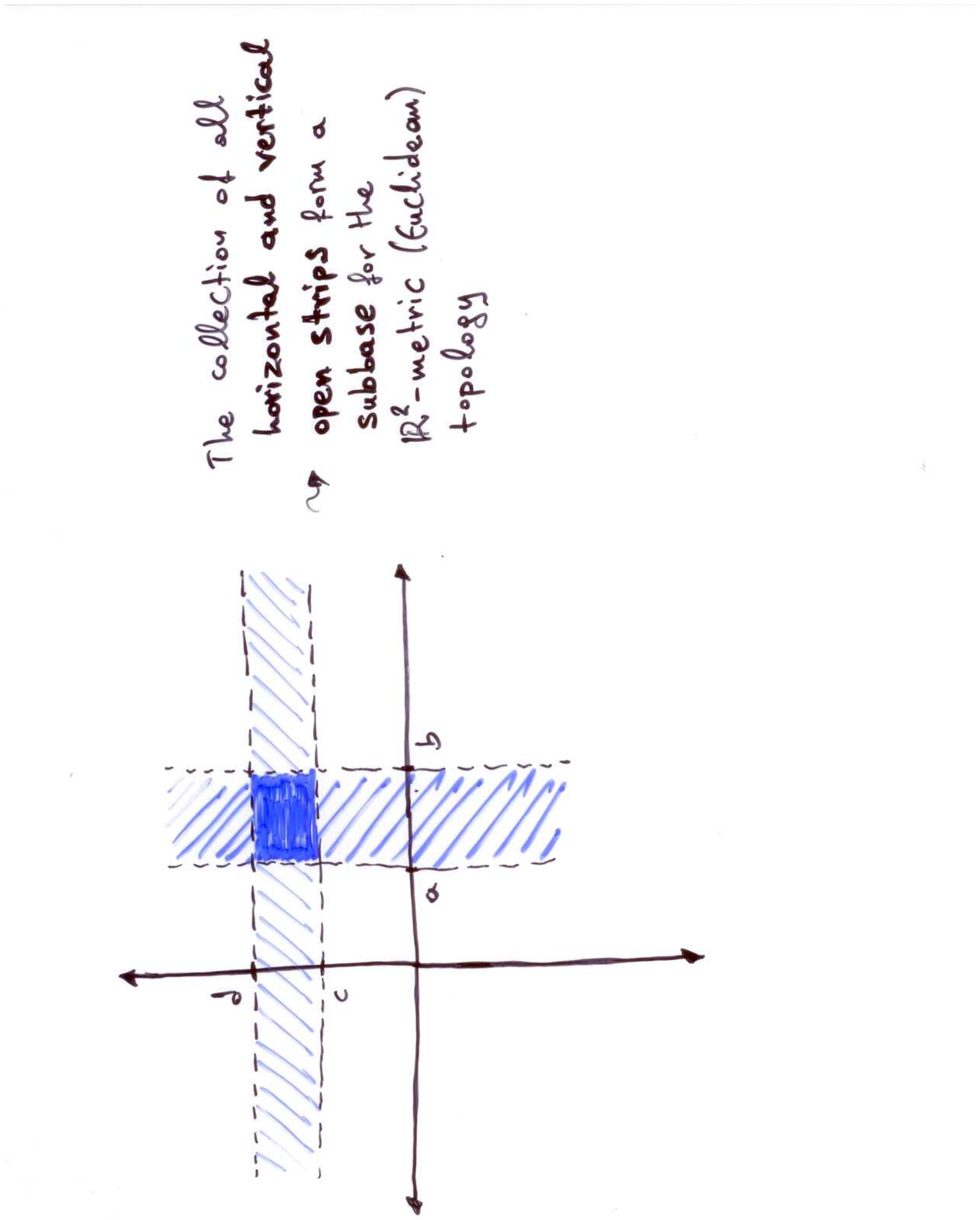}
\end{center}
\end{example}
Finally, before closing this paragraph, let us present the definitions of some notions which will prove valuable in the sequel
\begin{definition}
Let $(X, \tau)$ be a topological space and let $A \subseteq X$ be any subset.  \\
%\begin{itemize}
$\bullet$ The \textbf{interior} $\mathbf{{\AA}}$ of $A$ is defined to be the union of all open sets contained in $A$: $\mathbf{{\AA} = \bigcup_{i \in I}\omega_{i}}$ where $\omega_{i}:open$ and $\omega_{i} \subseteq A, \ \forall \ i \in I$  \\
$\bullet$ The \textbf{closure} $\mathbf{\overline{A}}$ of $A$ is defined to be the intersection of all closed sets containing $A$: $\mathbf{\overline{A} = \bigcap_{i \in J}A_{i}}$ where $A_{i}:closed$ and $A \subseteq A_{i}, \ \forall \ i \in J$    \\
$\bullet$ The \textbf{boundary} $\mathbf{\partial A}$ of $A$ is defined to be $\mathbf{\partial A = \overline{A} \setminus {\AA}}$   \\
$\bullet$ Let $A \subset B \subset X$. $A$ will be called \textbf{dense} in $B$, if $\mathbf{\overline{A} = B}$   \\
$\bullet$ A subset $\mathbf{V_{x}}$ of $X$ containing the point $x \in X$, will be called a \textbf{neighborhood of} $\mathbf{x}$ if it contains an open subset containing $x$. Formally: $V_{x} \subseteq X$ will be called a neighborhood of $x \in X$ if $\exists \ \omega \in \tau$ such that $x \in \omega \subseteq V_{x}$
%\end{itemize}
\end{definition}
\begin{lemma}
We record a couple of properties resulting from the previous definition:  \\
%\begin{itemize}
$\bullet$ ${\AA}$ is open and in fact is the largest open set contained in $A$   \\
$\bullet$ $\overline{A}$ is closed and in fact is the smallest closed set containing $A$   \\
$\bullet$ $A \subseteq X$ will be open if and only if $A = {\AA}$    \\
$\bullet$ $A \subseteq X$ will be closed if and only if $A = \overline{A}$   \\
$\bullet$ A subset $A \subseteq X$ will be open if and only if $A$ is a neighborhood of all of its points
%\end{itemize}
\end{lemma}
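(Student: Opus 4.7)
The plan is to handle the five bullets in turn, using only the topology axioms from \deref{TopologicalSpace} and the definitions of interior, closure, and neighborhood. Each item reduces, after unpacking definitions, to one application of axiom (2) (arbitrary unions of opens are open) or its De Morgan dual (arbitrary intersections of closed sets are closed).

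For the first bullet, I would write ${\AA} = \bigcup_{i\in I}\omega_i$ with $\omega_i\in\tau$ and $\omega_i\subseteq A$; by axiom (2) of \deref{TopologicalSpace} this union lies in $\tau$, so ${\AA}$ is open. If $U\in\tau$ satisfies $U\subseteq A$, then $U$ is one of the $\omega_i$ in the defining family, hence $U\subseteq {\AA}$; this is exactly the maximality assertion. For the second bullet I would argue dually: write $\overline{A}=\bigcap_{i\in J}A_i$ with $A_i$ closed; since $X\setminus\overline{A}=\bigcup_{i\in J}(X\setminus A_i)$ is a union of open sets and hence open, $\overline{A}$ is closed, and the same indexing argument shows it is the smallest closed set containing $A$.

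For the third bullet, if $A$ is open then $A$ itself appears among the $\omega_i$'s in the definition of ${\AA}$, giving $A\subseteq{\AA}$; the reverse inclusion ${\AA}\subseteq A$ is built into the definition, so $A={\AA}$. Conversely, if $A={\AA}$ then $A$ is open by the first bullet. The fourth bullet is entirely analogous with closed sets and $\overline{A}$.

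The fifth bullet is the only item that requires a small construction rather than a direct unpacking. If $A$ is open, then for each $x\in A$ the open set $\omega=A$ itself witnesses that $A$ is a neighborhood of $x$. Conversely, suppose $A$ is a neighborhood of every $x\in A$; then for each such $x$ choose $\omega_x\in\tau$ with $x\in\omega_x\subseteq A$, and observe that $A=\bigcup_{x\in A}\omega_x$: the inclusion $\supseteq$ is clear, and the inclusion $\subseteq$ follows because each $x\in A$ lies in its own $\omega_x$. Axiom (2) then gives $A\in\tau$. I do not foresee a real obstacle here; the only point that deserves care is making the two-sided inclusion $A=\bigcup_x\omega_x$ explicit so that the openness of $A$ is not smuggled in before it is proved.
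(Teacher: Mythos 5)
Your proof is correct and complete; all five bullets are handled by exactly the routine unpacking of the definitions together with axiom (2) of Definition 1.1 (and its De Morgan dual for closed sets), which is the standard argument this lemma tacitly relies on — the paper itself records these properties without proof. The one place needing care, the two-sided inclusion $A=\bigcup_{x\in A}\omega_x$ in the fifth bullet, you have made explicit, so nothing is missing.
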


\subsection{Continuity-Homeomorphisms}
We give now an introduction to the notion of continuity:
\begin{definition}[\textbf{Continuity a la Cauchy}] \delabel{continatpointCauchy}
Let $(X, \tau_{X})$ and $(Y, \tau_{Y})$ be topological spaces and $f:X \rightarrow Y$ a map between them. We will say that $\mathbf{f}$ is \textbf{continuous at the point}
$\mathbf{x_{0} \in X}$ if for any neighborhood $V_{f(x_{0})}$ of the point $f(x_{0}) \in Y$ there is a neighborhood $U_{x_{0}}$ of the point $x_{0} \in X$, such that
$f(U_{x_{0}}) \subset V_{f(x_{0})}$. We will say that $\mathbf{f}$ is a \textbf{continuous function} if it is continuous at any $x \in X$.
\end{definition}
Cauchy's conception of continuity is visualized in the following
\begin{center}
\includegraphics[scale=0.5]{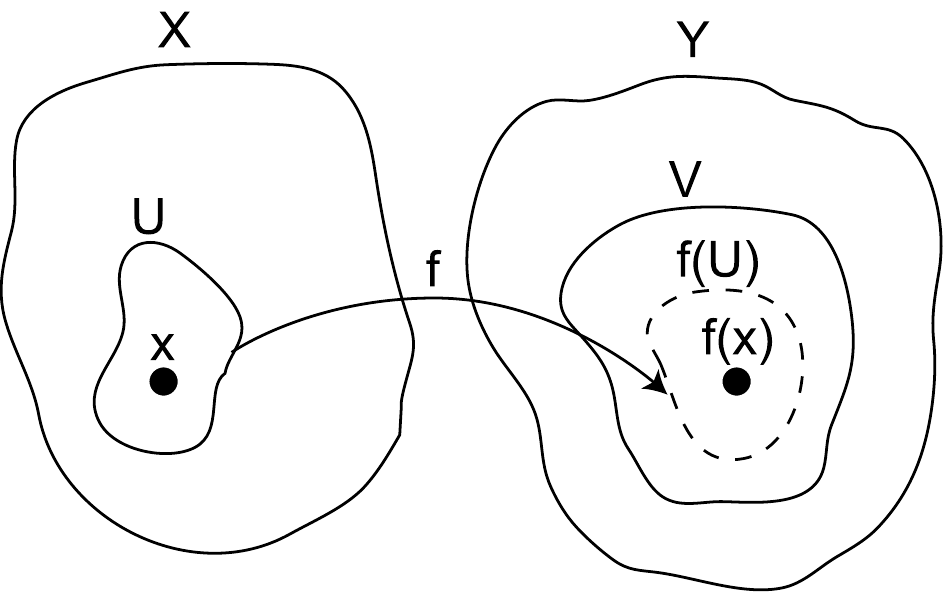}
\end{center}
The reader should notice at this point that this form of the definition of continuity is very close to the
familiar $(\varepsilon(\delta), \delta)$-definition of elementary calculus. Now we can prove the following:
\begin{proposition}[local continuity]
The function $f:X \rightarrow Y$ is continuous at the point $x_{0} \in X$ if for any neighborhood $V_{f(x_{0})}$ of the point $f(x_{0}) \in Y$ the set
$f^{-1}(V_{f(x_{0})})$ is a neighborhood of $x$ in $X$
\end{proposition}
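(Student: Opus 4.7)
The plan is to show that the two conditions—Cauchy continuity at $x_0$ and the stated preimage condition—are equivalent, by chasing neighborhoods through $f$ and $f^{-1}$ in the two possible directions. Both implications are short once one keeps in mind the elementary set-theoretic facts $f\bigl(f^{-1}(V)\bigr)\subseteq V$ (with equality failing when $f$ is not surjective) and $U\subseteq f^{-1}\bigl(f(U)\bigr)$.

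For the direction assuming the preimage condition, I would start from an arbitrary neighborhood $V_{f(x_{0})}$ of $f(x_{0})$ and set $U_{x_{0}}:=f^{-1}(V_{f(x_{0})})$. By hypothesis this is a neighborhood of $x_{0}$, and since $f(U_{x_{0}})=f\bigl(f^{-1}(V_{f(x_{0})})\bigr)\subseteq V_{f(x_{0})}$, the Cauchy condition from \deref{continatpointCauchy} is satisfied with this explicit choice of $U_{x_{0}}$. So this direction is essentially a substitution.

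For the converse, I would assume Cauchy continuity and take any neighborhood $V_{f(x_{0})}$ of $f(x_{0})$. The definition supplies some neighborhood $U_{x_{0}}$ of $x_{0}$ with $f(U_{x_{0}})\subseteq V_{f(x_{0})}$; applying $f^{-1}$ to both sides yields $U_{x_{0}}\subseteq f^{-1}\bigl(f(U_{x_{0}})\bigr)\subseteq f^{-1}(V_{f(x_{0})})$. Because $U_{x_{0}}$ is a neighborhood of $x_{0}$, it contains some open $\omega\in\tau_{X}$ with $x_{0}\in\omega$; hence $x_{0}\in\omega\subseteq f^{-1}(V_{f(x_{0})})$, which is precisely what it means for $f^{-1}(V_{f(x_{0})})$ to be a neighborhood of $x_{0}$.

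There is no real obstacle to overcome; the only mild trap I would be careful to avoid is writing $f(f^{-1}(V))=V$ (which would tacitly assume surjectivity of $f$). The inclusion $f(f^{-1}(V))\subseteq V$ is all that is needed, and the proof reduces to bookkeeping on neighborhoods and preimages.
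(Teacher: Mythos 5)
Your proof is correct and is the standard argument one would expect here; the paper itself states this proposition without supplying a proof, so there is no alternative route to compare against. Both directions are handled properly, and you rightly use only the inclusions $f\bigl(f^{-1}(V)\bigr)\subseteq V$ and $U\subseteq f^{-1}\bigl(f(U)\bigr)$ rather than assuming surjectivity or injectivity of $f$.
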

\begin{proposition}[global continuity]
The function $f:X \rightarrow Y$ is continuous if and only if one of the following equivalent conditions is satisfied:
\begin{enumerate}
\item The inverse image of any open set is an open set
\item The inverse image of any closed set is a closed set
\end{enumerate}
\end{proposition}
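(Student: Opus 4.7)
The plan is to prove the equivalence of the two conditions first, and then establish that condition (1) characterizes global continuity (defined pointwise via \deref{continatpointCauchy}); the link between the two propositions (local vs global) together with the final bullet of the previous lemma (``a set is open iff it is a neighborhood of each of its points'') will do essentially all the work.

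For the equivalence of the two conditions, I would rely on the set-theoretic identity $f^{-1}(Y\smallsetminus B)=X\smallsetminus f^{-1}(B)$ valid for any $B\subseteq Y$. If (1) holds and $C\subseteq Y$ is closed, then $Y\smallsetminus C$ is open, hence $f^{-1}(Y\smallsetminus C)=X\smallsetminus f^{-1}(C)$ is open, i.e.\ $f^{-1}(C)$ is closed; the converse implication (2)$\Rightarrow$(1) is completely symmetric. This part is routine and I would dispatch it in one paragraph.

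For the forward direction of the main equivalence, assume $f$ is continuous at every point of $X$ and let $V\subseteq Y$ be open. To show $f^{-1}(V)$ is open, I would pick an arbitrary $x_{0}\in f^{-1}(V)$; since $V$ is itself an open set containing $f(x_{0})$, it is a neighborhood of $f(x_{0})$. By the local continuity proposition applied at $x_{0}$, the set $f^{-1}(V)$ is a neighborhood of $x_{0}$. Hence $f^{-1}(V)$ is a neighborhood of each of its points, and by the last bullet of the preceding lemma it is open.

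For the reverse direction, assume (1) and pick any $x_{0}\in X$; given any neighborhood $V_{f(x_{0})}$ of $f(x_{0})$, choose an open $W\in\tau_{Y}$ with $f(x_{0})\in W\subseteq V_{f(x_{0})}$. By hypothesis $U_{x_{0}}:=f^{-1}(W)$ is open in $X$, contains $x_{0}$, and satisfies $f(U_{x_{0}})\subseteq W\subseteq V_{f(x_{0})}$, so $U_{x_{0}}$ is a neighborhood of $x_{0}$ witnessing continuity in the sense of \deref{continatpointCauchy}. Since $x_{0}$ was arbitrary, $f$ is continuous globally. I do not expect any real obstacle; the only subtlety is a clean bookkeeping of the difference between an open set and a neighborhood, which is resolved cleanly by invoking the ``neighborhood of each of its points'' characterization of openness stated in the lemma of the previous subsection.
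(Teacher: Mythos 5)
Your proof is correct and is the standard argument; the paper itself states this proposition without proof, so there is nothing to diverge from. The only point worth a remark is that in the forward direction you invoke the local-continuity proposition in the direction ``continuous at $x_{0}$ $\Rightarrow$ $f^{-1}(V)$ is a neighborhood of $x_{0}$,'' whereas the paper states that proposition only as an ``if''; that missing direction follows in one line from the Cauchy definition (the witnessing neighborhood $U_{x_{0}}$ satisfies $U_{x_{0}}\subseteq f^{-1}(V)$), so your argument stands.
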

Now we have the following
\begin{definition}
A mapping $f:X \rightarrow Y$ is said to be \textbf{open} (resp. \textbf{closed}) if the image of any open (resp. closed) set in $X$ is an open (resp. closed) set in $Y$
\end{definition}
We have to remark here that an open map does not necessarily have to be closed and that a continuous map does not necessarily have to be
\footnote{Take for example the continuous function $f:\mathbb{R} \rightarrow \mathbb{R}$ given by $f(x)=x^{2}$ and note that the open interval $(-1,1)$ maps onto the
interval $[0,1)$ which is not open} neither open nor closed. Thus, the notions ``continuous", ``open" and ``closed", are in principle independent to each other.  \\
\textbf{\underline{Exercises:}}   \\
Using the above def., and considering a bijective function $f$, show the following:   \\
$\mathbf{2.}$ $f$: continuous $\Leftrightarrow$ $f^{-1}$: open, \ \ \ \ \
$\mathbf{3.}$ $f$: continuous $\Leftrightarrow$ $f^{-1}$: closed    \\
$\mathbf{4.}$ $f$: open $\Leftrightarrow$ $f^{-1}$: continuous, \ \ \ \ \ \
$\mathbf{5.}$ $f$: closed $\Leftrightarrow$ $f^{-1}$: continuous   \\
$\mathbf{6.}$ If $(X, \tau_{1})$ is a topological space equipped with the discrete (or: maximal topology $\tau_{1}$) and $(Y, \tau_{Y})$ is any topological space, prove that
any map $f:X \rightarrow Y$ is continuous   \\
$\mathbf{7.}$  If $(X, \tau_{X})$ is any topological space and $(Y, \tau_{0})$ is a topological space equipped with the trivial (or: minimal topology $\tau_{0}$), prove that
any map $f:X \rightarrow Y$ is continuous   \\
$\mathbf{8.}$ Consider two different topologies $ \tau_{1} $, $ \tau_{2} $ on the same set $X$. Prove that the identity map \footnote{It is easy to see that the term ``identity" map for $id_{X}:id_{X}(x)=x, \ \forall x \in X$ is in fact used only in the set-theoretic sense but not in the topological sense} $id_{X}:(X, \tau_{2}) \rightarrow (X, \tau_{1})$ is a continuous function
if and only if $\tau_{1} \subseteq \tau_{2}$
\begin{definition}[\textbf{homeomorphism}]
A mapping $f:X \rightarrow Y$ between topological spaces, will be called an \textbf{homeomorphism} if:
\begin{itemize}
\item it is bijective (that is: ``$1-1$" and onto)
\item both $f$ and $f^{-1}$ are continuous
\end{itemize}
\end{definition}
\textbf{\underline{Exercises:}}    \\
$\mathbf{9.}$ Prove that a mapping $f: X \rightarrow Y$ is a homeomorphism if and only if the mapping $f^{-1}: Y \rightarrow X$ is defined and the mappings
$f, \ f^{-1}$ are both open and closed at the same time.     \\
$\mathbf{10.}$ Show that in the set of all topological spaces, homeomorphism defines an equivalence \footnote{\textbf{Def.:} A (binary) relation $\thicksim$ on a set S, is called an
\textbf{equivalence relation on S} if it satisfies the following three conditions: \\
%\begin{itemize}
$\bullet$ \emph{Reflexivity}: $a \thicksim a$, $\ \forall a \in S$  \\
$\bullet$ \emph{Symmetry}: If $a \thicksim b$ then $b \thicksim a$    \\
$\bullet$ \emph{Transitivity}: If $a \thicksim b$ and $b \thicksim c$ then $a \thicksim c$
%\end{itemize}
} relation.   \\
Let us now present a couple of examples of homeomorphisms:
\begin{example}[\textsf{Open interval is homeomorphic to $\mathbb{R}$}]  \exlabel{opintervrealine}
Let the function
\begin{equation}
\begin{array}{c}
f: (-1,1) \rightarrow \mathbb{R}  \\
x \mapsto f(x) = \frac{x}{1-x^{2}}
\end{array}
\end{equation}
$f$ is bijective, continuous, with continuous inverse function (which is the inverse ?). It is an homeomorphism between the interval $(-1,1)$ and the whole real line
%\begin{center}
%\includegraphics[scale=0.3]{figure5-homeomex1}
%\end{center}
\end{example}
\begin{example}[\textsf{Unit open ball $\mathbb{B}^{n}$ is homeomorphic to $\mathbb{R}^{n}$}]
Let $\mathbb{B}^{n} \equiv B(0,1)$ denote the open unit ball of $\mathbb{R}^{n}$ centered at $x_{0}=0$. Define a map $F: \mathbb{B}^{n} \rightarrow \mathbb{R}^{n}$ by
$$
\vec{y} = F(\vec{x}) = \frac{\vec{x}}{1-|\vec{x}|^{2}}
$$
Note that $|F(\vec{x})| = \frac{|\vec{x}|}{1-|\vec{x}|^{2}} \longrightarrow \infty$ as $|\vec{x}| \longrightarrow 1$. $F$ is bijective and we can construct the inverse of $F$ which is found (verify!) to be:
$$
\vec{x} = F^{-1}(\vec{y}) = \frac{2\vec{y}}{1 + \sqrt{1+4|\vec{y}|^{2}}}
$$
Since both $F$ and $F^{-1}$ are continuous, $F$ is an homeomorphism between the open unit ball $\mathbb{B}^{n}$ and the whole of $\mathbb{R}^{n}$
\end{example}
The previous examples show that \underline{properties such as ``\emph{size}" or ``\emph{boundedness}" are} \underline{not topological properties}.
\begin{example}[\textsf{Homeomorphism between a cube $\mathcal{C}$ and a sphere $\mathcal{S}^{2}$}]
Let $\mathcal{S}^{2} = \{ \vec{r} = (x,y,z) \in R^{3} \diagup |\vec{r}| = 1 \}$ be the unit sphere in $\mathbb{R}^{3}$ centered at the origin and set
$\mathcal{C} = \{ \vec{r} = (x,y,z) \in R^{3} \diagup max(|x|,|y|,|z|) = 1 \}$ which is the cubical surface of side $2$ centered at the origin.

Let $\phi$ be the map \cite{Lee} that projects each point of $\mathcal{C}$ radially inwards to the sphere $\mathcal{S}^{2}$.
$$
\begin{array}{c}
\phi : \mathcal{C} \rightarrow \mathcal{S}^{2}  \\
   \\
\phi(\vec{r}) = \phi(x,y,z) = \frac{(x,y,z)}{\sqrt{x^{2}+y^{2}+z^{2}}} = \frac{x}{|\vec{r}|}\vec{i} + \frac{y}{|\vec{r}|}\vec{j} + \frac{z}{|\vec{r}|}\vec{k}
\end{array}
$$
More precisely, for any point $\vec{r} \in \mathcal{C}$, $\phi(\vec{r})$ is the unit vector in the direction of $\vec{r}$. This function is continuous by the usual arguments of
elementary calculus.
\begin{center}
\includegraphics[angle=270, scale=0.4]{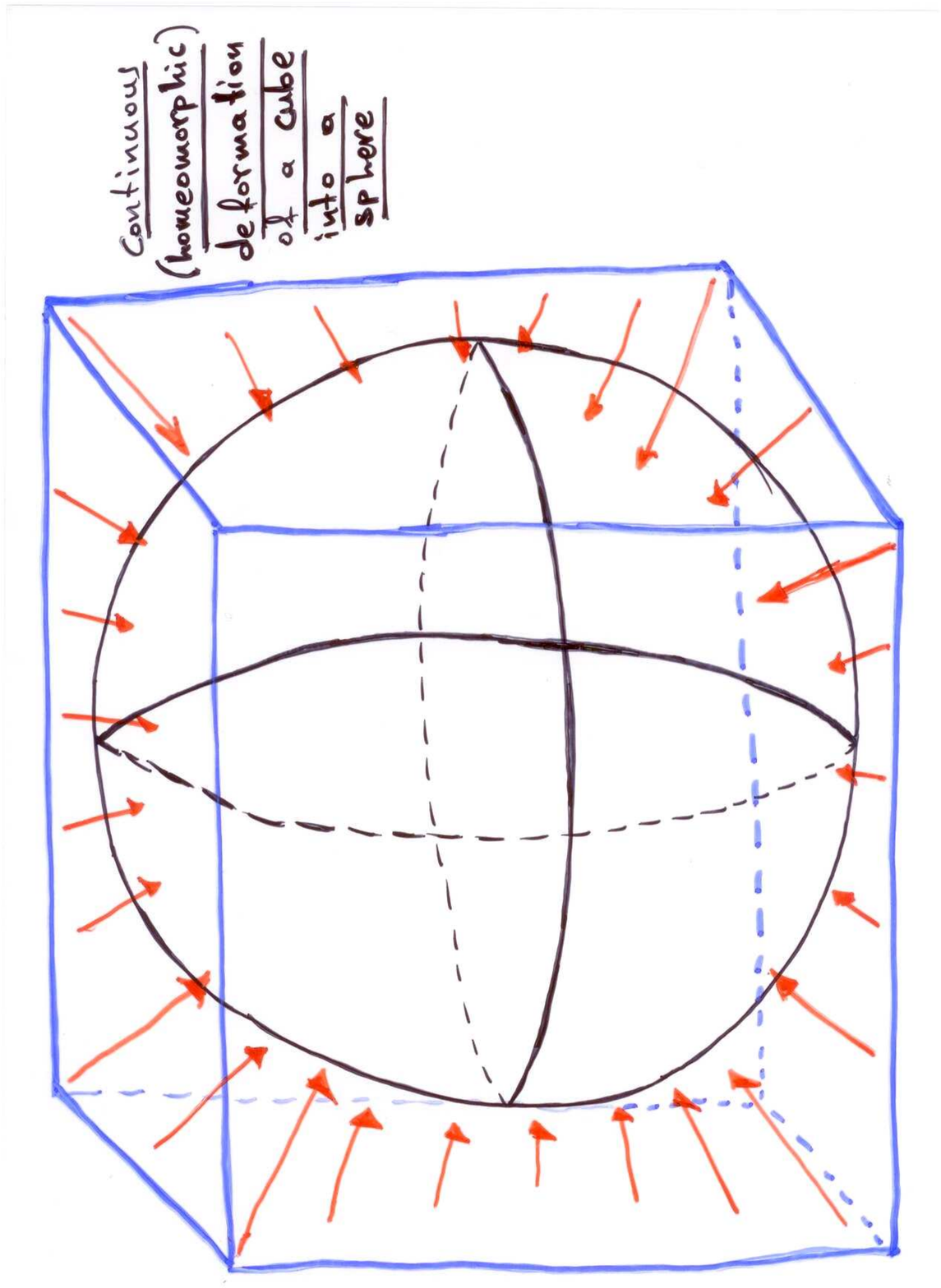}
\end{center}
Let us describe now the inverse function $\phi^{-1}$. This is given by
$$
\begin{array}{c}
\phi^{-1} : \mathcal{S}^{2} \rightarrow \mathcal{C} \\
   \\
\phi^{-1}(\vec{r}) = \phi^{-1}(x,y,z) = \frac{(x,y,z)}{max(|x|,|y|,|z|)} = \frac{x}{max(|x|,|y|,|z|)}\vec{i} + \frac{y}{max(|x|,|y|,|z|)}\vec{j} + \frac{z}{max(|x|,|y|,|z|)}\vec{k}
\end{array}
$$
and it describes a radial projection of the sphere $\mathcal{S}^{2}$ outwards to the cube $\mathcal{C}$.
\end{example}
The above example shows us that \underline{``\emph{shape}" is not a topological property}. But what (or: which) are finally the ``\emph{topological properties}" ?

The answer is simple and the idea may have already been motivated by the preceding examples: We define \textbf{topological properties} to be
\textbf{those preserved under homeomorphisms}. An homeomorphism defines a bijective correspondence not only between the points of the topological spaces $X, \ Y$ but also between the topologies themselves, i.e. between the collections of open and closed sets. Hence any property of the topological space $(X, \tau_{X})$ stated in terms of its topology (i.e. in terms of its collection $\tau_{X}$ of open sets) is also valid for any homeomorphic space $(Y, \tau_{Y})$, and is similarly stated in terms of its topology (i.e. in terms of the
$\tau_{Y}$ collection). Thus, homeomorphic spaces possess identical topological properties and are thus indistinguishable from this point of view. When we study
such topologically invariant properties we need not distinguish between homeomorphic spaces. the term invariant is used in the sense that topological properties are exactly those left invariant under homeomorphisms, which is equivalent to saying that \textbf{topological properties are those which are stated only in terms of the collection $\tau$ of open sets}. Note, in this connection, that the main task of topology was for a long time (and still is in some sense) to create an effective method of distinguishing between non-homeomorphic spaces. The definitions of topological properties such as compactness, connectedness, Hausdorff and more, have been the outcome of such attempts. Moreover -as we shall see in the sequel- such properties are direct abstractions of well known properties of the real numbers and more generally of metric spaces.

Finally let us note that the notion of continuity in topology resembles the notion of homomorphism of group theory while the notion of homeomorphism in topology finds its analogue in algebra in the notion of isomorphism.

Before continuing with some exercises lets see an example of non-homeomorphic spaces
\begin{example}
Here is a case of non-homeomorphic topological spaces
\begin{center}
\includegraphics[angle=270, scale=0.3]{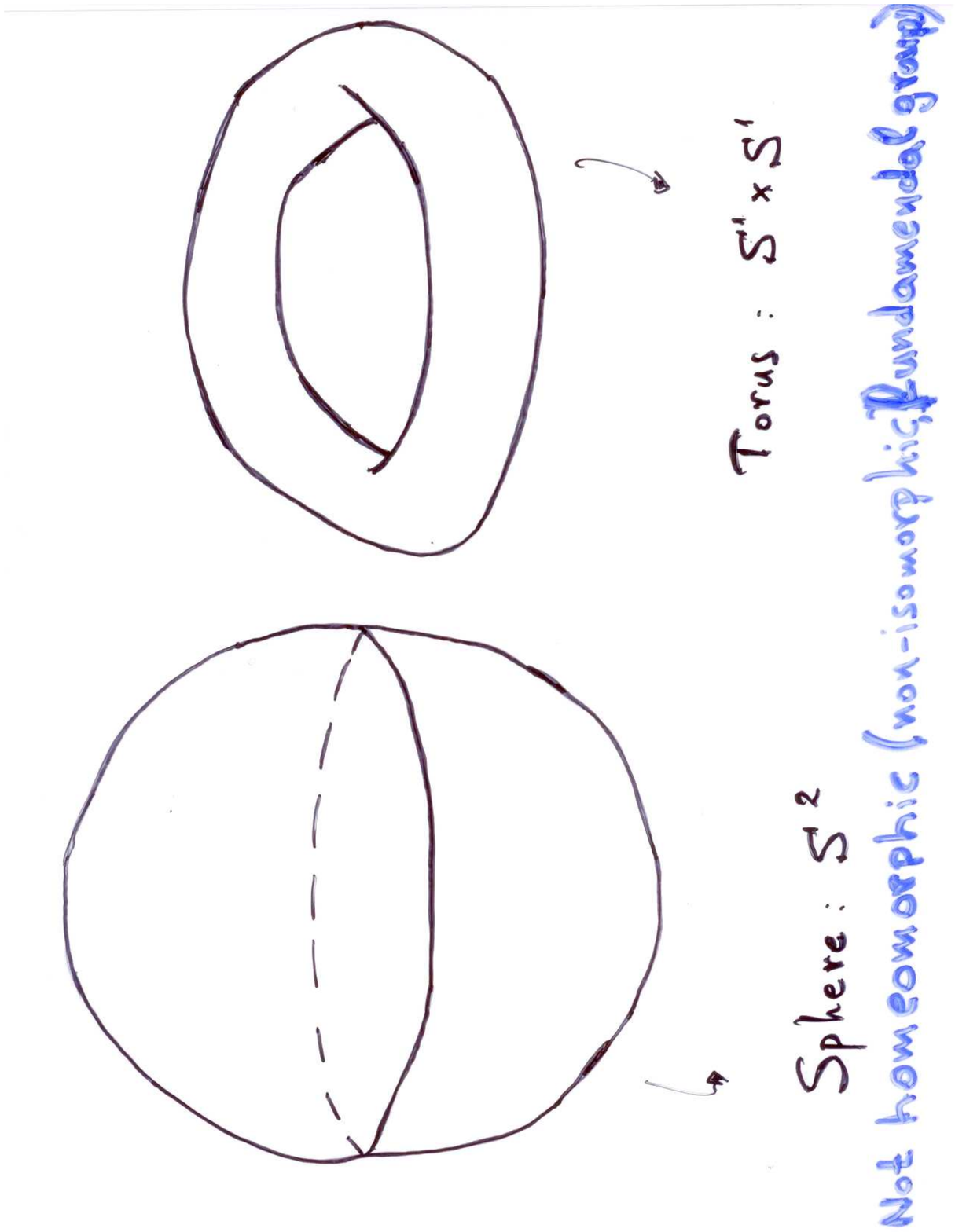}
\end{center}
Both the sphere $S^{2}$ and the torus $S^{1} \times S^{1}$ are considered as subspaces of $\mathbb{R}^{3}$ (and thus equipped with the relative topology)
\end{example}
\textbf{\underline{Exercises:}}  \\
$\mathbf{11.}$ Construct the inverse function $f^{-1}$ for the \exref{opintervrealine} and prove that it is continuous $\forall \ x \in \mathbb{R}$.   \\
$\mathbf{12.}$ Show that the function $f: (0,1) \rightarrow \mathbb{R}$ given by: $f(x) = \frac{2x-1}{x(x-1)}$ is an homeomorphism and determine its inverse $f^{-1}$. \\
$\mathbf{13.}$ Show that the function $g: [0,1) \rightarrow \mathcal{S}^{1}$ given by $g(t) = (\cos2 \pi t, \sin2 \pi t)$ is continuous and bijective but not an homeomorphism (\underline{Hint:} Show that its inverse function $g^{-1}: \mathcal{S}^{1} \rightarrow [0,1)$ is discontinuous at one point of the circumference)  \\
$\mathbf{14.}$ Consider the open disc $\mathbb{B}^{n-1} \subseteq \mathbb{R}^{n}$ centered at the origin, with radius $1$. Consider also the northern hemisphere (excluding the equator) $\mathcal{S}^{n-1}_{+}$ of the open sphere $\mathcal{S}^{n-1} \subseteq \mathbb{R}^{n}$ which is centered at the origin and has radius $1$. Construct an homeomorphism $\phi: \mathcal{S}^{n-1}_{+} \rightarrow \mathbb{B}^{n-1}$ and its inverse $\phi^{-1}: \mathbb{B}^{n-1} \rightarrow \mathcal{S}^{n-1}_{+}$. (\underline{Hint:} Take the ``vertical" projection of the hemisphere on the disc)   \\
$\mathbf{15.}$ Consider the unit circle $\mathcal{S}^{1} \subset \mathbb{R}^{2}$ and the perimeter of a square, whose length side equals $2$. Construct an homeomorphism between them (\underline{Hint:} Consider both figures centered at the origin of $\mathbb{R}^{2}$ and construct the outward and inward radial projections)   \\

Let us now turn to describe the notion of the induced topology:
\begin{proposition} \prlabel{indtopol}  $ $
\begin{itemize}
\item Let a function $f: (X, \tau_{X}) \rightarrow Y$ where $Y$ is a set (not necessarily equipped with any kind of topology). If we consider the collection
$$
\tau_{*} = \{ w \diagup w \subseteq Y : f^{-1}(w) \in \tau_{X} \}
$$
of subsets of $Y$, then we can show that $(Y, \tau_{*})$ becomes a topological space and furthermore that $f$ is a continuous function.
\item Let a function $f: X \rightarrow (Y, \tau_{Y})$ where $X$ is a set (not necessarily equipped with any kind of topology). If we consider the collection
$$
\tau_{*} = \{ f^{-1}(V) \diagup V \in  \tau_{Y} \}
$$
of subsets of $X$, then we can show that $(X, \tau_{*})$  becomes a topological space and that furthermore $f$ is a continuous function.
\end{itemize}
\end{proposition}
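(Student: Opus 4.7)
The plan is to verify, for each part, the three axioms of \deref{TopologicalSpace} and then deduce continuity directly from \prref{indtopol}'s construction together with the ``inverse image of open is open'' characterization given in the global continuity proposition. The entire argument rests on the well-known compatibility of preimages with set-theoretic operations: for any map $g$, any family $\{A_i\}$ and any subsets $A,B$ of the codomain, one has $g^{-1}(\bigcup_i A_i)=\bigcup_i g^{-1}(A_i)$, $g^{-1}(\bigcap_i A_i)=\bigcap_i g^{-1}(A_i)$, $g^{-1}(\emptyset)=\emptyset$ and $g^{-1}(\text{codomain})=\text{domain}$. I would invoke these identities silently throughout.

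For the first bullet, I would first check that $\emptyset,Y\in\tau_*$ by noting $f^{-1}(\emptyset)=\emptyset\in\tau_X$ and $f^{-1}(Y)=X\in\tau_X$. Next, given an arbitrary family $\{w_i\}_{i\in I}\subseteq\tau_*$, I would observe that $f^{-1}(\bigcup_{i\in I} w_i)=\bigcup_{i\in I} f^{-1}(w_i)$, and since each $f^{-1}(w_i)\in\tau_X$, axiom~(2) of $\tau_X$ forces this union to lie in $\tau_X$; hence $\bigcup_i w_i\in\tau_*$. The same argument, using axiom~(3) of $\tau_X$ for a finite family, shows closure under finite intersections. Continuity of $f:(X,\tau_X)\to(Y,\tau_*)$ is then immediate: $\tau_*$ is \emph{defined} as precisely the subsets of $Y$ whose preimage under $f$ is open in $X$, so the inverse image of every $\tau_*$-open set is $\tau_X$-open, which is the global continuity criterion.

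For the second bullet, the strategy is symmetric but the axioms are verified on the source side. One has $\emptyset=f^{-1}(\emptyset)\in\tau_*$ and $X=f^{-1}(Y)\in\tau_*$ because $\emptyset,Y\in\tau_Y$. For a family $\{f^{-1}(V_i)\}_{i\in I}$ with $V_i\in\tau_Y$, the identity $\bigcup_i f^{-1}(V_i)=f^{-1}(\bigcup_i V_i)$ together with $\bigcup_i V_i\in\tau_Y$ exhibits the union as an element of $\tau_*$; the finite intersection case is analogous. Continuity of $f:(X,\tau_*)\to(Y,\tau_Y)$ is tautological, since by construction every $V\in\tau_Y$ has $f^{-1}(V)\in\tau_*$.

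There is no real obstacle in this proof; the only point that deserves a small remark is that in the first bullet $\tau_*$ is the \emph{largest} topology on $Y$ making $f$ continuous (a quotient-type construction), whereas in the second bullet $\tau_*$ is the \emph{smallest} topology on $X$ making $f$ continuous (a subspace-type construction). I would mention this duality at the end to motivate later uses of the proposition, but would not develop it, since it is not part of the statement to be proved.
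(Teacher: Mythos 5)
Your proof is correct; the paper itself does not supply an argument for this proposition but leaves it as Exercise 16, and your verification of the three axioms of \deref{TopologicalSpace} via the compatibility of preimages with unions and intersections, followed by the observation that continuity is immediate from the construction of $\tau_{*}$ in each case, is exactly the intended standard argument. Your closing remark correctly identifies these as the final (largest) and initial (smallest) topologies making $f$ continuous, which is consistent with the paper's subsequent use of the second construction to define the relative topology.
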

The topologies of the above proposition, determined by $\tau_{*}$, are frequently called in the literature as \textbf{induced topologies}. (\underline{Hint:} See
\cite{KoFo} p.5) for the definition of the notion of the preimage $f^{-1}(w)$ of a set).

Lets see an elementary application of the above proposition: Let $(X, \tau_{X})$ a topological space and consider any $A \subseteq X$ subset of $X$. Then the inclusion map
$i: A \hookrightarrow (X, \tau_{X})$ given by $i(x) = x$ for any $x \in A$, induces (according to the second statement of \prref{indtopol}) a topology on $A$. Such a topology
on $A$ is frequently called in the literature as \textbf{relative topology} on $A$. In this way, any subset of a topological space may be considered as a topological space
itself. \\
\textbf{\underline{Exercises:}}    \\
$\mathbf{16.}$ Prove both statements of \prref{indtopol}    \\
$\mathbf{17.}$ Let $A \subseteq X$ of $(X, \tau_{X})$. Consider the collection of subsets of $A$ of the form $\{ A \cap V \diagup V \in \tau_{X} \}$.
Show that the above collection forms a topology on $A$ and that this topology is \textbf{equivalent} (in the sense that it consists of exactly the same open sets) to the relative
topology on $A$.

\subsection{Products}
Suppose $(X_{i}, \tau_{X_{i}})$ for $i=1, 2, ..., n$ are topological spaces with $\{ U_{a_{i}} \}$ the corresponding bases for the topologies. Then, the set-theoretic Cartesian product
$X_{1} \times X_{2} \times ... \times X_{n}$ can be equipped with a topology generated by the base
$$
\mathcal{B}_{\times} = \{ U_{a_{1}} \times U_{a_{2}} \times ... \times U_{a_{n}} \}
$$
The topology constructed in this way will be called \textbf{product topology} and the set $X_{1} \times X_{2} \times ... \times X_{n}$ equipped with the product topology will be called \textbf{product space}.     \\
\textbf{\underline{Exercises:}}   \\
$\mathbf{18.}$ What do the open sets in the product space look like?    \\
$\mathbf{19.}$ Show that the product topology on $\mathbb{R}^{n} = \mathbb{R} \times \mathbb{R} \times ... \times \mathbb{R}$ (``\emph{box topology}") is the same (or: equivalent, in the sense that it consists of exactly the same open sets) as the metric topology (``\emph{ball topology}") induced by the Euclidean distance function of $\mathbb{R}^{n}$ i.e. the Euclidean
topology  \\
$\mathbf{20.}$ Consider the projection functions $p_{i}: \prod_{k=1}^{n}X_{k} \rightarrow X_{i}$ defined by $p_{i}((x_{1}, x_{2}, ..., x_{n})) = x_{i}$ for $i=1, ..., n$. Prove that $p_{i}$ are continuous functions for any $i=1, ..., n$   \\ \\

From the beginnings of topology it was clear that there was something quite special about the closed and bounded interval $[a, b]$ of the real line. Fundamental theorems of Calculus like the \emph{Intermediate Value Theorem} \footnote{If $f:[a,b] \rightarrow \mathbb{R}$ is continuous, then for any real number $r$ between $f(a)$ and $f(b)$, there is some $c \in [a, b]$ such that: $f(c) = r$} or like the \emph{Extreme Value Theorem} \footnote{A continuous real-valued function on $[a, b]$ attains a maximum and a minimum value} may be interpreted to depend rather on the topological properties of the closed and bounded interval $[a,b]$ of real numbers, than on the properties of continuity itself. The question about the correct abstractions, capable of leading to definitions of sufficient generality to be applicable to any topological space, was hovering for quite a while. The outcome of such investigations was that the topological  property on which the intermediate value theorem depends upon is the notion of \emph{Connectedness} while the topological property upon which the extreme value theorem is based is the notion of \emph{Compactness}.

\subsection{Compactness}
We start with the more abstract \footnote{or less ... ``natural" if you prefer} notion of Compactness. For a long time it was not clear how to abstract the properties of $[a,b]$ supporting the extreme value theorem. Although it was clear that closedness and boundedness were of central importance, they were never considered as ``good" properties since boundedness for ex. does not even make sense for a general topological space. For a while it was thought that the crucial property was that, every infinite subset of $[a,b]$ has a limit point, and it was this property which was initially dignified with the name of compactness. Later it was realized that this formulation does not lie at the heart of the matter if we look outside metric spaces.
The classic theorem of Heine-Borel \footnote{care must be given to the fact that the theorem does not hold in an arbitrary metric space} of real analysis asserts that a subset $D \subset \mathbb{R}^{n}$ is closed and bounded if and only if any open covering of $D$ has a finite subcovering. This theorem has extraordinarily profound consequences and its history goes back to the development of the fundamentals of real analysis. As with most ``good" theorems, it was its conclusion which was finally abstracted and became a definition.
Lets start with the important concept of covering:
\begin{definition}
Given a set $X$ and a subset $A \subseteq X$, a \textbf{cover(ing)} for $A$ is a family of subsets $\mathcal{U} = \{ U_{i} \diagup i \in I \}$ of $X$ such that
$A \subseteq \cup_{i \in I}U_{i}$. A subcover(ing) of a given cover $\mathcal{U}$ for $A$ is a subfamily $\mathcal{V} \subset \mathcal{U}$ which still forms a cover for $A$. Finally,
given two coverings $\mathcal{U} = \{ U_{\alpha} \}$ and $\mathcal{V} = \{ V_{\beta} \}$ of $A \subseteq X$, we will say that $\mathcal{V}$ \textbf{refines} $\mathcal{U}$ (or that
$\mathcal{V}$ is finer than $\mathcal{U}$) if each set $V_{\beta}$ is contained in some set $U_{\alpha}$, $\alpha = \alpha(\beta)$.
\end{definition}
We come now to the definition of Compactness for a general topological space:
\begin{definition}[\textbf{Compactness}]
A topological space $X, \tau$ is said to be \textbf{compact} if every open covering of $X$ has a finite subcovering.
\end{definition}
A subset $A \subset X$ will be called \textbf{compact subset} if $A$ will be a compact space (in the relative topology).
We underline here, that in metric spaces the notion of compactness can be formulated in terms which are customary from mathematical analysis:
\begin{theorem}[Compactness in metric spaces]
The metric space $X$ is compact \footnote{with topology induced by the metric} if and only if any sequence $\{ x_{n} \}$ has a convergent subsequence
\end{theorem}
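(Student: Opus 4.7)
The plan is to prove both implications separately, with the backward direction requiring substantially more work. For the forward direction, assume $X$ is compact and let $\{x_{n}\}$ be a sequence. If the range $A = \{x_{n} : n \in \mathbb{N}\}$ is finite, some value repeats infinitely often and yields a constant (hence convergent) subsequence. If $A$ is infinite, I would first establish the auxiliary fact that in a compact space every infinite subset admits an accumulation point: otherwise each $x \in X$ would have an open neighborhood $U_{x}$ meeting $A$ in at most one point, and a finite subcover of $\{U_{x}\}_{x \in X}$ would force $A$ to be finite. Granted an accumulation point $p$ of $A$, I inductively choose indices $n_{1} < n_{2} < \dots$ with $x_{n_{k}} \in B(p, 1/k)$, producing a subsequence converging to $p$.

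For the converse, the strategy is to establish two intermediate properties of a sequentially compact metric space. The first is \emph{total boundedness}: for every $\varepsilon > 0$ finitely many $\varepsilon$-balls cover $X$. If this failed, I could inductively pick points $y_{1}, y_{2}, \dots$ with $\rho(y_{n}, y_{m}) \geq \varepsilon$ for $n \neq m$; such a sequence has no Cauchy subsequence, contradicting sequential compactness since every convergent sequence is Cauchy. The second is the \emph{Lebesgue number lemma}: for every open cover $\{U_{\alpha}\}$ there exists $\delta > 0$ such that every open ball of radius $\delta$ is contained in some $U_{\alpha}$. The argument is again by contradiction; assuming no such $\delta$ works, pick $z_{n}$ with $B(z_{n}, 1/n)$ contained in none of the $U_{\alpha}$, extract a subsequence $z_{n_{k}} \to z$ by sequential compactness, note $z \in U_{\alpha_{0}}$ for some $\alpha_{0}$ with $B(z, r) \subseteq U_{\alpha_{0}}$, and derive a contradiction by choosing $k$ so large that $1/n_{k} < r/2$ and $\rho(z_{n_{k}}, z) < r/2$, which forces $B(z_{n_{k}}, 1/n_{k}) \subseteq U_{\alpha_{0}}$.

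Once both lemmas are available, the compactness conclusion is immediate: given an arbitrary open cover of $X$, let $\delta$ be a Lebesgue number and cover $X$ with finitely many open balls of radius $\delta$ via total boundedness; each of these balls lies in some element of the cover, giving the desired finite subcover.

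The main obstacle will be the Lebesgue number lemma. Total boundedness follows almost directly from the Cauchy-subsequence observation, and the forward direction is essentially a repackaging of compactness applied to singleton-neighborhoods. The Lebesgue number lemma, however, is the step where the metric structure is genuinely exploited, and the combinatorics of simultaneously controlling $1/n_{k}$ and $\rho(z_{n_{k}}, z)$ against the radius $r$ afforded by the chosen $U_{\alpha_{0}}$ is the delicate technical heart of the converse direction.
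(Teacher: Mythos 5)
Your proof is correct and complete; it is the classical argument for the equivalence of compactness and sequential compactness in metric spaces. Note, however, that the paper states this theorem without proof (it is quoted as a standard result from real analysis, cf.\ the cited references such as Kolmogorov--Fomin), so there is no in-text argument to compare yours against. Your decomposition of the converse into total boundedness plus the Lebesgue number lemma is the standard textbook route, and all the steps --- the accumulation-point argument for the forward direction, the $\varepsilon$-separated sequence ruling out total unboundedness, and the $r/2$-plus-$r/2$ estimate in the Lebesgue number lemma --- are carried out correctly. The only point worth making explicit is the small fact, used implicitly when you extract the subsequence $x_{n_k} \in B(p,1/k)$ with strictly increasing indices, that in a metric space every neighborhood of an accumulation point of a set contains infinitely many points of that set; this is immediate but deserves a sentence in a fully written-out version.
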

The following prop., summarizes a few important properties of compactness:
\begin{proposition}
Let $(X, \tau_{X})$ a compact topological space.
\begin{itemize}
\item Any closed subset of a compact space is a compact subset, i.e. compactness is \textbf{closed-hereditary}
\item If $f:X \rightarrow Y$ a continuous function to the topological space $(Y, \tau_{Y})$ and $A \subset X$ is a compact subset of $X$ and then $f(A)$ is a compact subset of $Y$
\item (\textbf{Tihonov's theorem:}) Any product (finite or infinite) of compact spaces is compact, i.e. compactness is \textbf{productive}
\end{itemize}
\end{proposition}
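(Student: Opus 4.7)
The plan is to treat the three bullets separately, since although they are listed together they use quite different ideas. For the first (closed-hereditary) claim, I would start with a closed subset $A\subseteq X$ and an arbitrary open cover of $A$ in the relative topology. By the description of the relative topology recorded in Exercise~17 and in the discussion after \prref{indtopol}, each member of this cover has the form $A\cap U_i$ for some $U_i\in\tau_X$. Since $A$ is closed, $X\setminus A\in\tau_X$, and the family $\{U_i\}_{i\in I}\cup\{X\setminus A\}$ is an open cover of the whole of $X$. Compactness of $X$ furnishes a finite subcover; intersecting it with $A$ (and discarding the piece coming from $X\setminus A$, which contributes nothing to $A$) yields the desired finite subcover of $A$.

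For the second bullet (continuous image of a compact set), the argument is a direct pull-back through $f$. Given an open cover $\{V_\alpha\}_{\alpha\in J}$ of $f(A)$ in $Y$, global continuity implies each $f^{-1}(V_\alpha)$ is open in $X$; moreover they cover $A$. Applying the first bullet to $A$ (compact as a subset, hence a compact space in its relative topology), I extract a finite subcover $f^{-1}(V_{\alpha_1}),\ldots,f^{-1}(V_{\alpha_m})$ of $A$, and then $V_{\alpha_1},\ldots,V_{\alpha_m}$ cover $f(A)$. No new idea beyond the defining property of continuity is needed here.

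The third bullet (Tihonov) is by far the hardest. For a \emph{finite} product it suffices by induction to treat $X\times Y$. The standard tool is the \emph{tube lemma}: if $Y$ is compact and $N\subseteq X\times Y$ is open with $\{x_0\}\times Y\subseteq N$, then there exists an open $U\ni x_0$ in $X$ with $U\times Y\subseteq N$. Given an open cover $\mathcal{W}$ of $X\times Y$ by basic open sets (of the form $U_\alpha\times V_\alpha$), cover each slice $\{x\}\times Y$ by finitely many elements of $\mathcal{W}$, union their $X$-projections to get a tube $U_x\times Y$ finitely covered; compactness of $X$ extracts finitely many $U_{x_1},\ldots,U_{x_k}$ covering $X$, and the union of the corresponding finite subcovers of the tubes is a finite subcover of $\mathcal{W}$.

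The main obstacle is the \emph{infinite} case: here the tube argument breaks down (there is no reason a finite piece of one factor controls a tube across all factors), and the result genuinely requires the axiom of choice (Tihonov's theorem is in fact equivalent to it). The cleanest route is Alexander's subbase theorem (a space is compact iff every cover by members of a fixed subbase has a finite subcover), applied to the canonical subbase $\mathcal{S}=\{p_i^{-1}(U):i\in I,\ U\in\tau_{X_i}\}$ of the product topology, where the $p_i$ are the projections appearing in Exercise~20. A Zorn's lemma argument then shows that any subbasic cover of $\prod_i X_i$ without a finite subcover would, upon partitioning into the pieces coming from each factor, yield some index $i_0$ and an open cover of $X_{i_0}$ with no finite subcover, contradicting compactness of $X_{i_0}$. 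Given the pedagogical tone of the manuscript, I would expect the author to give the tube-lemma proof in the finite case and merely cite Alexander/Zorn for the infinite case.
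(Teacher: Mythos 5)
The paper states this proposition without proof, so there is nothing to compare your argument against; judged on its own terms, your treatment of the first two bullets is complete and correct. The only cosmetic issue there is in the second bullet: you say you "apply the first bullet to $A$", but $A$ is not assumed closed --- it is assumed compact, and all you actually use is that hypothesis (a cover of $A$ by sets open in $X$ admits a finite subfamily covering $A$, which is the relative-topology reformulation of Exercise~17). The tube-lemma argument for finite products is also correct, including the routine reduction to covers by basic open sets.

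The one genuine slip is in your description of how Alexander's subbase theorem yields the infinite case. You claim that a subbasic cover $\mathcal{C}$ of $\prod_i X_i$ with no finite subcover "yields some index $i_0$ and an open cover of $X_{i_0}$ with no finite subcover, contradicting compactness of $X_{i_0}$." That is not how the contradiction arises, and no such index exists. The correct chain is: for each $i$ let $\mathcal{U}_i=\{U\in\tau_{X_i}: p_i^{-1}(U)\in\mathcal{C}\}$; if some $\mathcal{U}_{i}$ \emph{were} a cover of $X_{i}$, compactness of $X_{i}$ would give a finite subcover whose preimages form a finite subcover of $\mathcal{C}$ --- contradiction with the assumption on $\mathcal{C}$. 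Hence \emph{no} $\mathcal{U}_i$ covers $X_i$; choosing (by the axiom of choice) a point $x_i\in X_i\setminus\bigcup\mathcal{U}_i$ for each $i$ produces a point of the product lying in no member of $\mathcal{C}$, contradicting that $\mathcal{C}$ is a cover. So compactness of the factors is used to rule out the covering case, and the final contradiction is with $\mathcal{C}$ being a cover at all. Relatedly, Zorn's lemma enters in the proof of Alexander's subbase theorem itself (via a maximal family with no finite subcover), not in this application; the application only needs a choice function. With that paragraph repaired, your proof is sound, and your observation that the tube lemma cannot reach the infinite case --- and that the general statement is equivalent to the axiom of choice --- is exactly the right caveat for a statement the paper asserts for arbitrary products.
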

\begin{example}
$[a,b]$ is compact for any $a,b \in \mathbb{R}$. This is the content of the Heine-Borel theorem.
\end{example}
\begin{example}
$(a,b)$ is not compact. To see this it is enough to consider the (infinite) open covering $\{ (a + \frac{1}{n}, b) \diagup n \in \mathbb{N}, n > \frac{1}{b-a} \}$
\end{example}
\textbf{\underline{Exercises:}}   \\
$\mathbf{21.}$ Show that a closed box and a closed ball in $\mathbf{R}^{n}$ are compact sets (\underline{Hint:} Instead of appealing to the Heine-Borel theorem, use the properties
of compactness under products and homeomorphisms)   \\
$\mathbf{22.}$ Show that $S^{1} \subset \mathbb{R}^{2}$ is compact (\underline{Hint:} Consider the mapping $[0,1] \rightarrow S^{1}$ given by $(x,y) = (cos 2 \pi t, sin 2 \pi t)$)   \\
$\mathbf{23.}$ Show that $\mathbb{R}$ (equipped with its usual ``Euclidean" topology) is not a compact space

\subsection{Connectedness}
Connectedness abstracts the idea of the ``wholeness" or the ``integrity" of a geometrical object and the abstraction is carried in a way to ensure that connected spaces will behave
similarly to the intervals of the real line (so for example a continuous function on a connected space will satisfy the intermediate value theorem). Lets describe the general idea.
Consider a topological space $(X, \tau_{X})$ and its subsets $A, B$
\begin{definition}
The sets $A,B$ are said to be \textbf{separated} if
$$
\overline{A} \cap B = A \cap \overline{B} = \emptyset
$$
\end{definition}
For example if $X=\mathbb{R}$ (with the Euclidean topology) the number line and $A=(a,b) \ $, $B=(b,c)$ are intervals with $a < b < c$ then
$A$ and $B$ are separated. But if $A=(a,b] \ $, $B=(b,c)$ then $A$ and $B$ are not separated.
\begin{definition}
A space $X$ is said to be \textbf{disconnected} if it can be represented as the union of two nonempty separated sets.
\end{definition}
A space not satisfying the condition of the above definition is said to be connected. Thus a connected space cannot be represented as the union of
two non-empty, separated sets. A subset $A \subset X$ will be called \textbf{connected subset} if $A$ will be a connected subspace (i.e. a connected space
in the relative topology).

The next lemma gives us an equivalent description of the notion of \textbf{connectedness} and is used quite often:
\begin{lemma}
A topological space $X$ is \textbf{connected} if and only if any one of the following equivalent conditions holds:
\begin{itemize}
\item It cannot be decomposed into the union of two non-empty, disjoint, open sets
\item It cannot be decomposed into the union of two non-empty, disjoint, closed sets
\item The only subsets of $X$ which are open and closed at the same time are the empty set $\emptyset$ and $X$ itself
\end{itemize}
\end{lemma}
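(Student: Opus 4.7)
The plan is to show that the original definition of disconnectedness is equivalent to the negations of (1), (2), and (3) respectively; equivalently, I will show that the three displayed conditions are each equivalent to connectedness. Throughout I will work with the contrapositive formulation: establish that ``$X$ is expressible as a union of two non-empty separated sets'' is equivalent to each of (1'), (2'), (3'), where the primed versions are the negations (e.g.\ (3')$=$``there exists a proper non-empty clopen subset'').

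First I would handle the equivalence between the definition and (1'). Suppose $X = A \cup B$ with $A,B$ non-empty, disjoint and both open. Since $A$ is open, its complement $B = X \setminus A$ is closed, so $\overline{B} = B$, giving $A \cap \overline{B} = A \cap B = \emptyset$; by the symmetric argument $\overline{A} \cap B = \emptyset$, so $A,B$ are separated. Conversely, assume $X = A \cup B$ with $A,B$ non-empty and separated. Disjointness is immediate from $A \cap B \subseteq A \cap \overline{B} = \emptyset$. For openness, the inclusion $B \subseteq \overline{B}$ combined with $A \cap \overline{B} = \emptyset$ and $A \cup B = X$ forces $\overline{B} \subseteq B$, hence $B = \overline{B}$ is closed, so $A = X \setminus B$ is open; by symmetry $B$ is also open (and $A$ is also closed). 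This argument therefore simultaneously shows that in (1') both pieces are automatically \emph{clopen}, which makes the subsequent equivalences almost free.

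Next I would deduce (1')$\Leftrightarrow$(2'): a decomposition $X = A \cup B$ with $A,B$ non-empty, disjoint, open is the same as $X = A \cup B$ with $A,B$ non-empty, disjoint, closed, because by the preceding paragraph each piece in such a decomposition is clopen. Then I would tackle (1')$\Leftrightarrow$(3'). Given a decomposition as in (1'), $A$ is clopen (by the above), non-empty (by assumption), and proper (since $B \neq \emptyset$), so $A$ witnesses (3'). Conversely, if $A \subseteq X$ is clopen with $A \neq \emptyset, X$, set $B = X \setminus A$; then $B$ is open (as $A$ is closed), non-empty (as $A \neq X$), disjoint from $A$, and $X = A \cup B$, giving (1').

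I do not expect a serious obstacle here; the one subtle step is in the first paragraph, where proving that separation plus covering forces each piece to be clopen requires the observation $\overline{B} \subseteq A \cup B = X$ together with $\overline{B} \cap A = \emptyset$ to pin down $\overline{B} \subseteq B$. Once that is in hand, everything else is bookkeeping via complements.
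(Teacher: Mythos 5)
Your proof is correct and complete; the paper itself states this lemma without proof, and your argument is the standard one that any full treatment would give. The key observation you isolate --- that for a cover $X=A\cup B$ by separated sets, $\overline{B}\subseteq A\cup B$ together with $\overline{B}\cap A=\emptyset$ forces $\overline{B}\subseteq B$, so both pieces are automatically clopen --- is exactly the right pivot, and it makes the equivalences with the open, closed, and clopen formulations immediate.
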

Finally we record a couple of important properties related to connectedness:
\begin{proposition}
Connectedness is preserved by continuous maps and arbitrary products:
\begin{itemize}
\item If $X$ is a connected space and $f:X  \rightarrow Y$ a continuous map then $f(X)$ is connected in $Y$ (a connected subset of $Y$).
\item Any product (finite or infinite) of connected spaces must be connected i.e. connectedness is productive.
\end{itemize}
\end{proposition}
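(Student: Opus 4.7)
The plan is to dispatch the two parts separately, in both cases reducing to the third equivalent characterization of connectedness from the preceding lemma (a space is connected iff its only clopen subsets are $\emptyset$ and the whole space).

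For the continuous image statement, I would argue by contrapositive. Suppose $f(X)$, equipped with the relative topology inherited from $Y$, is disconnected, so that $f(X) = U \cup V$ with $U, V$ non-empty, open in $f(X)$, and disjoint. Pulling back through $f$ (viewed as a continuous map $X \to f(X)$, which is automatic from the continuity of $f:X\to Y$), the sets $f^{-1}(U)$ and $f^{-1}(V)$ are open in $X$, disjoint because $U\cap V=\emptyset$, non-empty because $U,V\subseteq f(X)$ are, and they cover $X$. This exhibits a disconnection of $X$, contradicting the hypothesis.

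For the product statement I would first do the finite case (in fact it suffices to treat two factors, then iterate). Fix $(x_0,y_0)\in X\times Y$ and, for each $y\in Y$, consider the ``cross'' $C_y=(\{x_0\}\times Y)\cup(X\times\{y\})$. Each slice $\{x_0\}\times Y$ is connected (being homeomorphic to $Y$ via the projection, which is a continuous bijection with continuous inverse), and similarly for $X\times\{y\}$; these two slices meet at $(x_0,y)$, so their union is connected by the standard fact that the union of two connected sets with non-empty intersection is connected. Finally $X\times Y=\bigcup_{y\in Y}C_y$ and all the $C_y$ contain $\{x_0\}\times Y$, so another application of the same fact shows $X\times Y$ is connected.

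The infinite product is the real obstacle. Here the plan is: fix a base point $a=(a_i)_{i\in I}\in\prod_{i\in I}X_i$ and, for each finite $F\subset I$, set $Z_F=\{x:\,x_i=a_i\text{ for all }i\notin F\}$, which is homeomorphic to the finite product $\prod_{i\in F}X_i$ and hence connected by the previous paragraph. The union $Z=\bigcup_F Z_F$ is then connected because every $Z_F$ contains $a$. The concluding move is to verify that $Z$ is \emph{dense} in $\prod X_i$ (which is exactly where the product topology, with its basic opens constrained on only finitely many coordinates, enters), and to invoke the auxiliary fact that the closure of a connected set is connected; together these give connectedness of the whole product. The delicate point that I expect to need the most care is this density step, since it is precisely what fails for the box topology and distinguishes the two.
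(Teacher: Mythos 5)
The paper records this proposition without proof (it is stated and immediately followed by examples), so there is no in-text argument to compare against; judged on its own, your proposal is correct and follows the standard textbook route. Part one is fine: corestricting $f$ to $f(X)$ with the relative topology preserves continuity, and pulling back a disconnection of $f(X)$ gives a disconnection of $X$. For the products, the cross argument for two factors, the passage to the subset $Z=\bigcup_F Z_F$ of points differing from a base point in only finitely many coordinates, the density of $Z$ (which, as you rightly emphasize, is exactly where the product topology's finitely-constrained basic opens enter and where the box topology fails), and the closure step together constitute the classical proof. Two small points to tighten if you write this out in full: first, you invoke the union lemma for an \emph{arbitrary} family of connected sets sharing a common point (for $\bigcup_{y}C_y$ and for $\bigcup_F Z_F$), not just for two sets, so state and prove that version directly -- it follows in one line from the clopen characterization in the paper's preceding lemma, whereas iterating the two-set version does not reach infinite unions. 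Second, the auxiliary facts you lean on (union of connected sets with a common point is connected; the closure of a connected set is connected) are not established anywhere in the paper, so they need their own short proofs rather than being cited as known.
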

\begin{example}
All intervals on the real line $[a,b]$, $(a,b)$, $[a,b)$, $(a,b]$ etc are connected
\end{example}
\begin{example}
The set of rational numbers $Q = \{ \frac{p}{q} \diagup p,q: \ relatively \ prime \ integers, \ q \neq 0 \}$ on the real line is disconnected. To see this it suffices to consider
an arbitrary irrational real number $\alpha$ and the sets:
$$
\begin{array}{cccc}
U = \{ x: x \in \mathbb{Q},  x < \alpha\}    &    &   &  V = \{ x: x \in \mathbb{Q},  x > \alpha\}
\end{array}
$$
these are non-empty, open and disjoint. Moreover $\mathbb{Q} = U \bigcup V$
\end{example}
\textbf{\underline{Exercises:}}   \\
$\mathbf{24.}$ Prove that in a space endowed with the discrete topology, any set (except the one-point sets) is disconnected.   \\
$\mathbf{25.}$ Prove that $S^{1}$ is connected.  \\
$\mathbf{26.}$ Prove that the set of all rationals $\mathbb{Q} = \{ \frac{p}{q} \ / p, q: \ relatively \ prime \}$ on the real line is a disconnected set.   \\
$\mathbf{27.}$ Show that the only connected subspaces of the rationals $\mathbb{Q}$ are the one-point subsets.   \\
$\mathbf{28.}$ Show that balls and boxes (open or closed) in $\mathbb{R}^{n}$ are connected.

\subsection{Hausdorff}
The idea behind the definition of Hausdorff property is to have a kind of criterion regarding the abundance of our supply of open sets.
\begin{definition}[\textbf{Hausdorff}]
A topological space $(X, \tau)$ will be called a \textbf{Hausdorff space} or a $\mathbf{\mathcal{T}_{2}}$\textbf{-space} if for any two elements $x, y \in X$ we can always find disjoint neighborhoods for them. Formally:
$$
Given \ x,y \in X \ with \ x \neq y \ \ \exists \ neighborhhoods \ V_{x}, V_{y} \ such \ that: \ V_{x} \cap V_{y} = \emptyset
$$
\end{definition}
The next proposition records a couple of important properties of Hausdorff spaces:
\begin{proposition}
The Hausdorff (or $\mathcal{T}_{2}$) property is:
\begin{enumerate}
\item Hereditary, i.e. any subset of a Hausdorff space is a Hausdorff space (in the relative topology)
\item Productive, i.e. any product of Hausdorff spaces is a Hausdorff space (in the product topology)
\item In a Hausdorff space any pair of disjoint, compact sets can also bee separated by disjoint neighborhoods
\end{enumerate}
Furthermore, the next statement justifies the motivation for introducing the notion of the Hausdorff property
\begin{itemize}
\item[$4.$] In a Hausdorff space any compact subset is closed
\end{itemize}
\end{proposition}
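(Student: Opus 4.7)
The plan is to dispose of the first two assertions by direct unpacking of definitions, then to prove a pivotal lemma (a point and a disjoint compact set can be separated by disjoint open sets) and use it first to separate two disjoint compact sets (item 3) and then to conclude that compact subsets are closed (item 4).

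For hereditariness, I would take $A \subseteq X$ with the relative topology and two distinct points $x,y \in A$. Viewed in $X$ they admit disjoint open neighborhoods $V_x, V_y$; intersecting with $A$ produces $V_x \cap A$ and $V_y \cap A$, which are disjoint and are neighborhoods of $x,y$ in $A$ by the description of the relative topology (Exercise 17). For productivity, given two distinct tuples in $\prod_i X_i$, at least one coordinate $k$ differs; separate the two $k$-th components in $X_k$ by disjoint open $V, W$ and pull back by the projection $p_k$, which is continuous (Exercise 20), so that $p_k^{-1}(V)$ and $p_k^{-1}(W)$ are disjoint open neighborhoods in the product.

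The core work is the lemma: if $p \in X$ and $K \subseteq X$ is compact with $p \notin K$, then there exist disjoint open $U \ni p$ and $V \supseteq K$. For each $q \in K$ use Hausdorff to get disjoint open $U_q \ni p$ and $V_q \ni q$; the $\{V_q\}_{q \in K}$ cover $K$, so by compactness finitely many $V_{q_1},\dots,V_{q_n}$ suffice, and I take $V = \bigcup_i V_{q_i}$ and $U = \bigcap_i U_{q_i}$. Finiteness of the intersection is essential: it guarantees $U$ is open, and disjointness is preserved. For item 3, apply this lemma to each point $a$ of a compact set $A$ against the disjoint compact set $B$, producing disjoint open $U_a \ni a$ and $V_a \supseteq B$; now the $\{U_a\}_{a\in A}$ cover the compact set $A$, extract a finite subcover $U_{a_1},\dots,U_{a_m}$, and set $U = \bigcup_j U_{a_j}$, $V = \bigcap_j V_{a_j}$. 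Again finiteness is what makes $V$ open while preserving disjointness from $U$.

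Item 4 then falls out with essentially no extra effort: given a compact $A \subseteq X$ in a Hausdorff space, pick any $x \notin A$; the lemma supplies an open $U_x \ni x$ disjoint from $A$, hence $U_x \subseteq X \setminus A$. Therefore $X \setminus A$ is a neighborhood of each of its points, i.e.\ open, so $A$ is closed. The main obstacle in this whole argument is the bookkeeping in the lemma: one must be careful that each $U_{q_i}$ is chosen together with its matching $V_{q_i}$ so that the \emph{finite} intersection $\bigcap_i U_{q_i}$ remains open and disjoint from $\bigcup_i V_{q_i}$; without the finiteness provided by compactness the construction collapses, which is precisely why the Hausdorff hypothesis alone does not separate points from arbitrary closed sets.
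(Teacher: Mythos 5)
Your proof is correct and complete: the relative- and product-topology arguments for items 1 and 2 are the standard ones, and the point-versus-compact separation lemma (finite subcover, then finite intersection of the neighborhoods of the point) correctly yields both item 3 and item 4. The paper states this proposition without proof, so there is nothing to compare against; your argument is exactly the canonical one and fills that gap properly.
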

The third statement of the above proposition is a example of the general (intuitive rule) that \emph{compact sets often} (under suitable assumptions) \emph{behave as points}.   \\
Let´s see now a couple of examples of the Hausdorff property:
\begin{example} \exlabel{metrtopolHaus}
Any metric space equipped with the metric topology, is a Hausdorff space
\end{example}
\begin{example}
Consider the set $X = \{ 1, 2, 3 \}$ with topology $\tau = \big\{\emptyset, \{ 1 \}, \{ 2, 3 \}, \{ 1, 2, 3 \} \big\}$. This is not a Hausdorff space
since the elements $2$ and $3$ do not have disjoint neighborhoods.
\end{example}
\textbf{\underline{Exercises:}}    \\
$\mathbf{29.}$ A topological space equipped with the trivial topology is not Hausdorff if it contains more than one point.   \\
$\mathbf{30.}$ Prove that the only Hausdorff topology on a finite set is the discrete topology   \\
$\mathbf{31.}$ Prove that $\forall x \in X$ ($X$: Hausdorff space) we have $\bigcap_{\alpha \in \mathcal{A}}V_{\alpha} = \{ x \}$ where $V_{\alpha}$ is an open set containing $x$
and $\mathcal{A}$ is the family of all open sets containing $x$ (Notice that $\mathcal{A}$ maybe an infinite family and that $\bigcap_{\alpha \in \mathcal{A}}V_{\alpha}$ does not
necessarily have to be open itself) \\
$\mathbf{32.}$ Show that in a Hausdorff space any one-point subset is closed.  \\
$\mathbf{33.}$ Is the continuous image of a Hausdorff space necessarily a Hausdorff space ? Prove or state a counterexample.   \\
$\mathbf{34.}$ Prove the assertion of \exref{metrtopolHaus}   \\
$\mathbf{35.}$ Show that the Hausdorff property is in fact a topological property i.e. that it is preserved under homeomorphisms.

\newpage

\section{Topological and Differentiable Manifolds}

We begin with the definition of the notion of topological manifold of dimension $n$
\begin{definition}[$n$-dim \textbf{Topological Manifold}] \delabel{topmanif}
An $\mathbf{n}$\textbf{-dimensional topological manifold} $\mathcal{N}$, is a Hausdorff, connected, topological space $(\mathcal{N}, \tau_{\mathcal{N}})$ which is locally homeomorphic to the Euclidean space $\mathbb{R}^{n}$. This means that any point $P \in \mathcal{N}$ is contained in some neighborhood $V_{P} \subseteq \mathcal{N}$, homeomorphic to a domain $U = \phi(V_{P}) \subseteq \mathbb{R}^{n}$ of the Euclidean space.
\end{definition}
\begin{center}
\includegraphics[angle=270, scale=0.5]{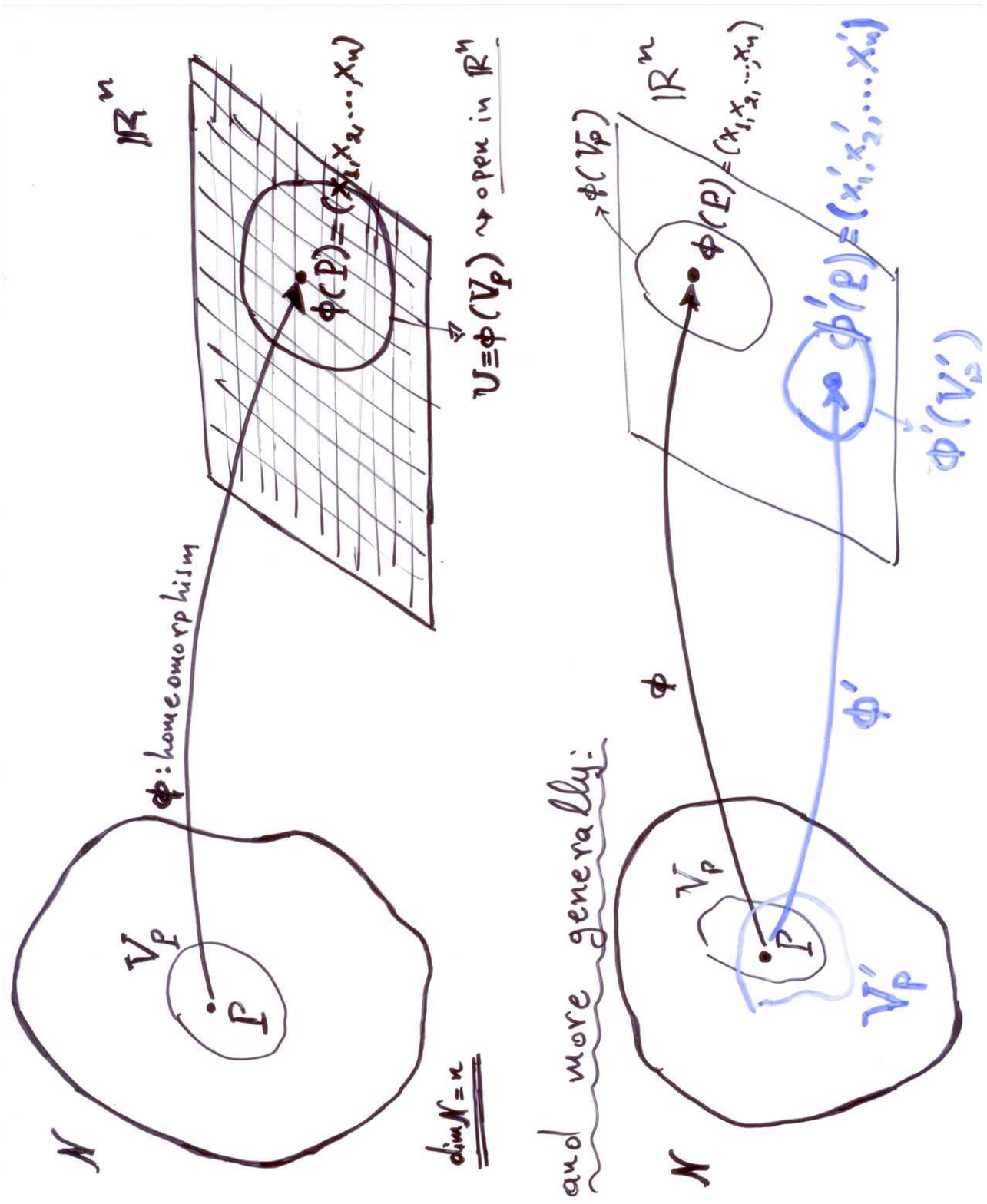}
\end{center}
Thus when $\mathcal{N}$ is an $n$-dim. topological manifold, we can find in $\mathcal{N}$ a system of open sets $V_{i}$ numbered by finitely (or infinitely) many indices $i$ and a system of homeomorphisms $\phi_{i}: V_{i} \rightarrow \phi(V_{i}) \equiv U_{i} \subseteq \mathbb{R}^{n}$ of the open sets $V_{i}$ on the open domains $U_{i}$. The system of the open sets $\{ V_{i} \}$ must cover the space $\mathcal{N}$ i.e. $\mathcal{N} = \bigcup_{i}V_{i}$ and the domains $U_{i}$ may, in general, intersect each other.
\begin{definition}[\textbf{local chart}]
If $\mathcal{N}$ is a topological manifold, any pair $(V, \phi)$ will be called a \textbf{local chart}, where $V$ is an open subset of $\mathcal{N}$ and $\phi: V \rightarrow U \equiv \phi(V) \subseteq \mathbb{R}^{n}$ an homeomorphism onto an open domain $U$ of $\mathbb{R}^{n}$
\end{definition}
A local chart $(V, \phi)$ associates to any point $P \in V$ the $n$-tuple $\mathbf{\phi(P) = (x_{1}, x_{2}, ..., x_{n})}$, called \textbf{coordinates of} $\mathbf{P}$ \textbf{with respect to the local chart} $\mathbf{(V, \phi)}$.
\begin{definition}[$\mathbf{\mathcal{C}^{r}}$\textbf{-Atlas}]  \delabel{atlas}
A collection of charts $\mathbf{\mathcal{A} = \{ (U_{i}, \phi_{i}) \}_{i \in I}}$ will be called a $\mathbf{\mathcal{C}^{r}}$\textbf{-atlas for the topological manifold} $\mathbf{\mathcal{N}}$ if the following conditions are valid
\begin{itemize}
\item $\mathcal{N} = \bigcup_{i \in I} U_{i}$ i.e. the open sets $\{ U_{i} \}_{i \in I}$ form a covering of $\mathcal{N}$
\item For any intersection $U_{i} \bigcap U_{j}$ the \textbf{transition functions}
\begin{equation}   \eqlabel{loccoordchang1}
\begin{array}{c}
\phi_{ij} \equiv \phi_{i} \circ \phi_{j}^{-1} : \phi_{j}(U_{i} \bigcap U_{j}) \stackrel{\phi_{i} \circ \phi_{j}^{-1}}{\longrightarrow} \phi_{i}(U_{i} \bigcap U_{j}) \subset \mathbb{R}^{n} \\
%     \\
\phi_{ji} \equiv \phi_{j} \circ \phi_{i}^{-1} : \phi_{i}(U_{i} \bigcap U_{j}) \stackrel{\phi_{j} \circ \phi_{i}^{-1}}{\longrightarrow} \phi_{j}(U_{i} \bigcap U_{j}) \subset \mathbb{R}^{n}
\end{array}
\end{equation}
are $\mathcal{C}^{r}$-differentiable $ \ \forall \ i,j \in I$. Equivalently we say that the charts $(U_{i}, \phi_{i})$, $(U_{j}, \phi_{j})$ are $\mathbf{\mathcal{C}}^{r}$\textbf{-compatible} $\forall \ i,j \in I$. Visually we have
\end{itemize}
\end{definition}
\begin{center}
\includegraphics[angle=270, scale=0.5]{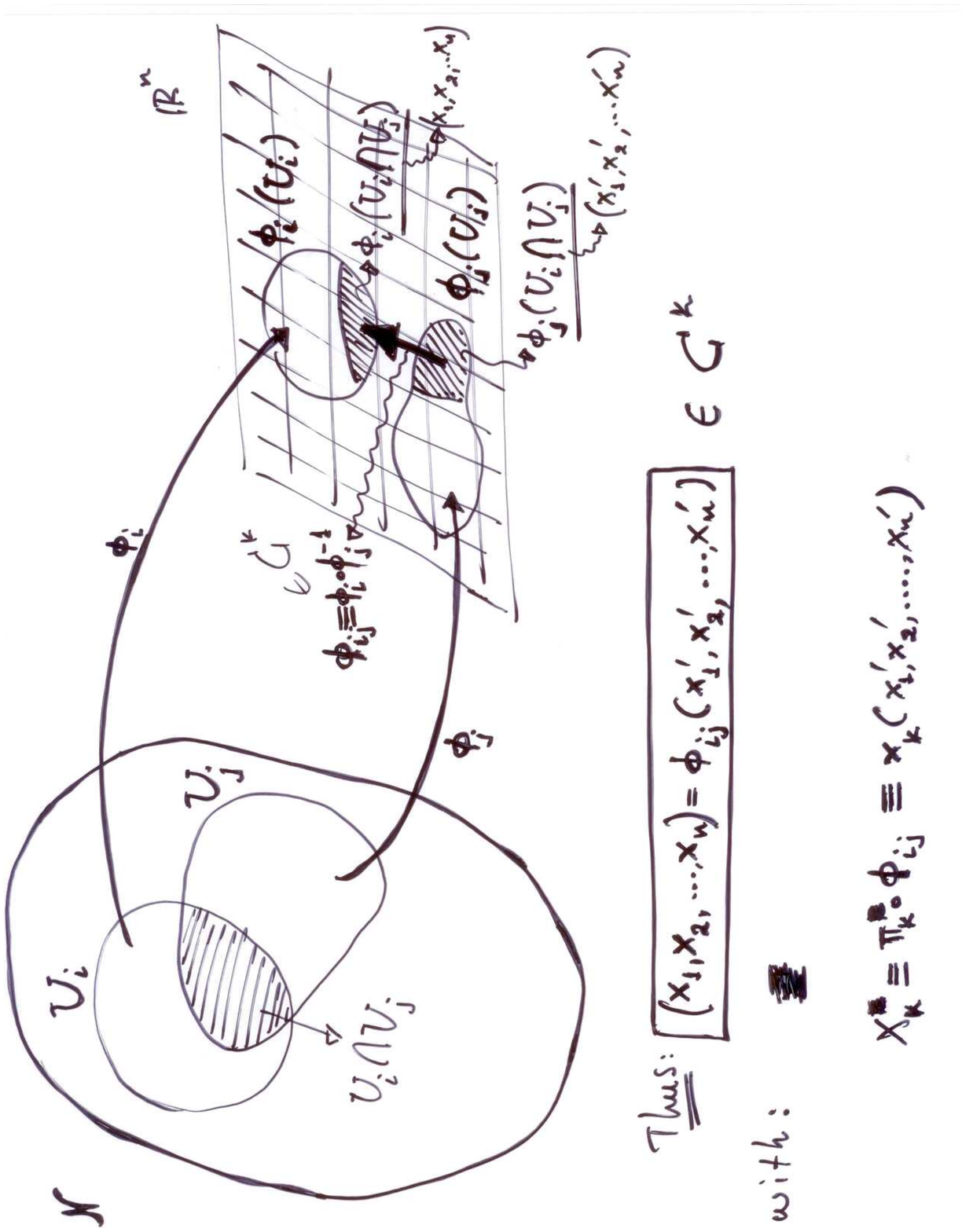}
\end{center}
In other words we have the $\mathcal{C}^{r}$-differentiable transition functions \equref{loccoordchang1}:
%\begin{equation}  \eqlabel{loccoordchang2}
$$ (x_{1}, x_{2}, ..., x_{n}) = \phi_{ij}(x_{1}^{'}, x_{2}^{'}, ..., x_{n}^{'}) \Leftrightarrow (x_{1}^{'}, x_{2}^{'}, ..., x_{n}^{'}) = \phi_{ji}(x_{1}, x_{2}, ..., x_{n}) $$
%\end{equation}
which can also be written as $x_{k} \equiv x_{k}(x_{1}^{'}, x_{2}^{'}, ..., x_{n}^{'}) = \pi_{k} \circ \phi_{ij}(x_{1}^{'}, x_{2}^{'}, ..., x_{n}^{'})$

We have two notable remarks regarding the above definition:  \\
$\mathbf{\centerdot}$ At this point we should underline the fact that \emph{in flat spaces} \footnote{here the term implies: vector spaces} relations like \equref{loccoordchang1} describe the change of coordinates between different bases of the space and \emph{are linear relations} whereas \emph{in the general case of a manifold these are non-linear relations}.   \\
$\mathbf{\centerdot}$ The invertible functions $\phi_{ij} = \phi_{ji}^{-1}$ are homeomorphisms (since by def. they are compositions of homeom.) between open domains of $\mathbb{R}^{n}$.
Moreover we have $\forall P \in \phi_{i}(U_{i} \bigcap U_{j})$ and $\forall P \in \phi_{j}(U_{i} \bigcap U_{j})$
$$
\begin{array}{cccccccc}
\frac{\partial (x_{1}, x_{2}, ..., x_{n})}{\partial (x_{1}^{'}, x_{2}^{'}, ..., x_{n}^{'})} \neq 0  & & & & & & &  \frac{\partial (x_{1}^{'}, x_{2}^{'}, ..., x_{n}^{'})}{\partial (x_{1}, x_{2}, ..., x_{n})} \neq 0
\end{array}
$$
i.e. the Jacobians of the transition functions are non-vanishing.
\begin{definition}[\textbf{Equivalent atlases}]   \delabel{equivalencyatlases}
Let $\mathcal{N}$ be a topological manifold and let $\mathcal{W} = \{ \mathcal{A} \diagup \mathcal{A} : atlas \ on \ \mathcal{N} \}$ be the collection of all $\mathcal{C}^{r}$-atlases defined on $\mathcal{N}$. Two $\mathcal{C}^{r}$-atlases $\{ (U_{\alpha}, \phi_{\alpha}) \}$ and $\{ (V_{\beta}, \psi_{\beta}) \}$ of $\mathcal{W}$ will be called \textbf{equivalent} if any two of their charts $(U_{\alpha}, \phi_{\alpha})$ and $(V_{\beta}, \psi_{\beta})$ are $\mathbf{\mathcal{C}}^{r}$-compatible for any value of $\alpha, \beta$. In other words: Two $\mathcal{C}^{r}$-atlases will be equivalent iff their union is a $\mathcal{C}^{r}$-atlas.
\end{definition}
The relation $\mathcal{R}$ introduced in the above definition is an equivalence relation  and under this relation the set $\mathcal{W}$ of all $\mathcal{C}^{r}$-atlases defined on $\mathcal{N}$ decomposes into equivalence classes:
$$
\overline{\mathcal{W}} \equiv \mathcal{W}/\mathcal{R} = \{ [\mathcal{A}] \diagup [\mathcal{A}]: equivalence \ classes \ of \ \mathcal{R} \}
$$
\begin{definition}[$\mathbf{\mathcal{C}^{r}}$\textbf{-structure}]
Each equivalence class $[\mathcal{A}]$ of $\overline{\mathcal{W}} \equiv \mathcal{W}/\mathcal{R}$ will be called a $\mathbf{\mathcal{C}^{r}}$\textbf{-structure} on $\mathcal{N}$
\end{definition}
Each equivalence class of $\mathcal{C}^{r}$-atlases on $\mathcal{N}$ is determined by any of its representatives i.e. a given $\mathcal{C}^{r}$-structure on $\mathcal{N}$ can be restored by any of its $\mathcal{C}^{r}$-atlases. In other words, a $\mathcal{C}^{r}$-structure on $\mathcal{N}$ can be determined by specifying on $\mathcal{N}$ one $\mathcal{C}^{r}$-atlas from the $\mathcal{C}^{r}$-structure.

The union of all $\mathcal{C}^{r}$-atlases from a given $\mathcal{C}^{r}$-structure is also a $\mathcal{C}^{r}$-atlas called \textbf{maximal}. Specifying a $\mathcal{C}^{r}$-structure is equivalent to specifying the maximal atlas. Sometimes the maximal atlas itself may be called a $\mathcal{C}^{r}$-structure.
\begin{definition}[$n$-dim, $\mathbf{\mathcal{C}^{r}}$\textbf{-Differentiable manifold}]   \delabel{diffmanif}
A Hausdorff, connected, topological space $\mathcal{N}$ equipped with a $\mathcal{C}^{r}$-structure will be called a $\mathbf{\mathcal{C}^{r}}$\textbf{-Differentiable manifold} (or: a differentiable manifold of class $\mathcal{C}^{r}$). The dimension $n$, of the target space $\mathbb{R}^{n}$ of the homeomorphisms $\phi$ of the charts, will be called the \textbf{dimension} of the $\mathcal{C}^{r}$-manifold i.e. $dim\mathcal{N} = n$.
\end{definition}
For $r=0$ we have the $\mathcal{C}^{0}$-manifolds which are nothing but the topological manifolds of \deref{topmanif}. For $r = 1, 2, ..., \infty$ we have the $\mathcal{C}^{r}$-manifolds sometimes abbreviated as \textbf{smooth manifolds}. If in the second condition of \deref{atlas} the homeomorphisms $\phi_{ij}, \phi_{ji}$ are real analytic mappings ($\mathcal{C}^{\omega}$-mappings) then we are speaking about $\mathcal{C}^{\omega}$-atlases, $\mathcal{C}^{\omega}$-structures and $\mathbf{\mathcal{C}^{\omega}}$\textbf{-manifolds} or \textbf{real analytic structures} and \textbf{real analytic manifolds} respectively.

%\begin{scriptsize}
\begin{remark}
$\bullet$ \underline{The dimension of a $\mathcal{C}^{0}$-manifold is an \textbf{invariant} of the manifold i.e.} \underline{it is independent of the choice of an atlas}. To see this it is enough to consider the following situation: Suppose that $\mathcal{N}$ admits the two different atlases (belonging to two possibly different $\mathcal{C}^{r}$-structures)
$$
\begin{array}{ccccccc}
\{ (U_{\alpha}, \phi_{\alpha}) \}, \ with \ \phi_{\alpha}: U_{\alpha} \rightarrow \mathbb{R}^{n}  &  &  & and  & &  &
\{ (V_{\beta}, \psi_{\beta}) \}, \ with \ \psi_{\beta}: U_{\beta} \rightarrow \mathbb{R}^{m}
\end{array}
$$
with $m \neq n$. Then there would be sets $U_{\alpha}, V_{\beta}$ (for some values of the indices) such that $U_{\alpha} \bigcap V_{\beta} \neq \emptyset$ and the mapping
$$
\varphi \circ \psi^{-1}: \psi(U_{\alpha} \bigcap V_{\beta}) \rightarrow \phi(U_{\alpha} \bigcap V_{\beta})
$$
would be an homeomorphism. But this would contradict Brouwer's theorem of topology, which states that the \textbf{non-empty, open sets $U \subseteq \mathbb{R}^{n}$ and $V \subseteq \mathbb{R}^{m}$ may be homeomorphic only in the case $m =n$}.

$\bullet$ Actually we can show more than that: A fundamental theorem of general topology asserts that: \textbf{if $\mathcal{M}$ and $\mathcal{N}$ are two homeomorphic manifolds (of class $\mathcal{C}^{r}$ with $r = 0, 1, 2, ..., \infty$) then $dim\mathcal{N} = dim\mathcal{M}$}. Consequently, \underline{the dimension of a di-} \underline{fferentiable (or even topological) manifold is a topological property of the manifold} itself.
\end{remark}
%\end{scriptsize}
%\begin{scriptsize}
\begin{remark}
Note that \underline{a $\mathcal{C}^{0}$-structure on any space $\mathcal{N}$} (i.e. the structure of a topological manifold on $\mathcal{N}$) \underline{is unique} (this follows directly from \deref{topmanif}, \deref{diffmanif}). On the other hand if $r \neq 0$ then \underline{$\mathcal{N}$ may admit several different $\mathcal{C}^{r}$-structures}. For example it is immediate from the definitions that a $\mathcal{C}^{r}$-differentiable manifold is also a $\mathcal{C}^{p}$-differentiable manifold for all values $0 \leq p \leq r$

Moreover, it has been proved that \textbf{if there exists on $\mathcal{N}$ at least one $\mathcal{C}^{r}$-structure ($r \geq 1$), then there exist infinitely many $\mathcal{C}^{r}$-structures on $\mathcal{N}$}.
\end{remark}
%\end{scriptsize}
%\begin{scriptsize}
\begin{remark}
Another worth noticing consequense of the definitions is that \underline{if an atlas} \underline{on a manifold $\mathcal{N}$ consists of only one chart} (i.e. if $\mathcal{N}$ is homeomorphic to a Euclidean domain), \underline{then $\mathcal{N}$ is both a smooth and an analytic manifold}, or equivalently: a $C^{r}$-differentiable manifold for any value of $r = 0, 1, 2, ..., \infty, \omega$.
\end{remark}
%\end{scriptsize}
\textbf{\underline{Exercises:}}    \\
$\mathbf{1.}$ Show that the binary relation $\mathcal{R}$ on the set $\mathcal{W}$ introduced in \deref{equivalencyatlases} is actually an equivalence relation.    \\
$\mathbf{2.}$ Consider the topological manifold $\mathcal{N} = \mathbb{R}$ equipped with the $\mathcal{C}^{\infty}$-structure determined by the $\mathcal{C}^{\infty}$-atlas consisting of the single chart $(\mathbb{R}, Id)$. Next consider $\mathbb{R}$ equipped with the $\mathcal{C}^{\infty}$-structure determined by the $\mathcal{C}^{\infty}$-atlas consisting of the single chart $(\mathbb{R}, \phi)$ where $\phi(x) = x^{3}$. Show that both of the above charts determine $\mathcal{C}^{\infty}$-atlases, that these atlases are non-equivalent,and therefore, the $\mathcal{C}^{\infty}$-structures determined by them are different.
\begin{example}[\textsf{$\mathbf{\mathbb{R}^{n}}$ as a differentiable manifold}] \ \\
$\mathbb{R}^{n}$ equipped with its usual (Euclidean) topology is an $n$-dimensional, $\mathcal{C}^{r}$-differentiable manifold for all values $r = 0, 1, 2, ..., \infty, \omega$.
(Thus it is topological, smooth and real analytic manifold). We may choose as a representative of its $\mathcal{C}^{r}$-structure the $\mathcal{C}^{r}$-atlas consisting of the single chart $(\mathbb{R}, Id)$ where $Id$ is the identity function on $\mathbb{R}$.
\end{example}
\begin{example}[\textsf{The graph of a real valued, continuous, vector function $f:\mathbb{R}^{n} \rightarrow \mathbb{R}$ as a differentable manifold}] \ \\
Let $f:\mathbb{R}^{n} \rightarrow \mathbb{R}$ be a continuous function and let
$$
\Gamma_{f} = \{ (x_{1}, x_{2}, ..., x_{n}, x_{n+1}) \diagup x_{n+1} = f(x_{1}, x_{2}, ..., x_{n}) \} \subseteq \mathbb{R}^{n+1}
$$
denote its graph. $\Gamma_{f}$ is an $n$-dimensional, $\mathcal{C}^{r}$-differentiable manifold for all values $r = 0, 1, 2, ..., \infty, \omega$. Its $\mathcal{C}^{r}$-structure is determined if we choose as representative the $\mathcal{C}^{r}$-atlas consisting of the single chart $(\Gamma_{f}, \phi)$ where the homeomorphism $\phi: \Gamma_{f} \rightarrow \mathbb{R}^{n}$ is given by $\phi(x_{1}, x_{2}, ..., x_{n}, x_{n+1}) = (x_{1}, x_{2}, ..., x_{n}) \in \mathbb{R}^{n}$ and its inverse homeomorphism $\phi^{-1}: \mathbb{R}^{n} \rightarrow \Gamma_{f}$ is given by $\phi^{-1}(x_{1}, x_{2}, ..., x_{n}) = (x_{1}, x_{2}, ..., x_{n}, f(x_{1}, x_{2}, ..., x_{n}))$. These are apparently both continuous, bijective functions
\end{example}
\begin{example}[\textsf{$\mathbf{GL(n,\mathbb{R})}$ and other \textbf{matrix groups} as differentiable manifolds}]  \ \\
$\bullet$ Let $\mathbf{M_{n,k}(\mathbb{R})}$ denote the set of all real $n \times k$ matrices. $M_{n,k}(\mathbb{R})$ may be identified to $\mathbb{R}^{n \times k}$ through the homeomorphism $\phi: M_{n,k}(\mathbb{R}) \rightarrow \mathbb{R}^{n \times k}$
\begin{equation}   \eqlabel{matricestorealspaces}
\left(
  \begin{array}{cccc}
    a_{11} & a_{12} & \ldots & a_{1k} \\
    a_{21} & a_{22} & \ldots & a_{2k} \\
    \vdots & \vdots & \ddots & \vdots \\
    a_{n1} & a_{n2} & \ldots & a_{nk}
  \end{array}
\right)  \mapsto (a_{11}, a_{12}, \ldots , a_{1k}, a_{21}, a_{22}, \ldots , a_{2k}, \ldots \ldots, a_{n1}, a_{n2}, \ldots , a_{nk})
\end{equation}
(The above function is an homeomorphism with respect to the usual (Euclidean) topology of $\mathbb{R}^{n \times k}$ and the induced topology of $M_{n,k}(\mathbb{R})$ through the above mapping). In this way $M_{n,k}(\mathbb{R})$ becomes an $(n \times k)$-dimensional, $\mathcal{C}^{r}$-differentiable manifold for all values $r = 0, 1, 2, ..., \infty, \omega$.

$\bullet$ Now we consider the \emph{general linear group} $\mathbf{GL_{n}(\mathbb{R})}$ which as a set consists of all invertible, $n \times n$, real matrices. Equivalently, the elements of $GL_{n}(\mathbb{R})$ are all $n^{2}$ real matrices whose determinant is different than $0$. This is an open (verify!) subset of $M_{n,n}(\mathbb{R})$. It fails to be a differentiable manifold because it is a disconnected topological space. Let us see why: We consider the mapping $det: M_{n,n}(\mathbb{R}) \rightarrow \mathbb{R}$. This is a continuous mapping (since it is a polynomial function). The set $(-\infty, 0) \bigcup (0, \infty)$ is a disconnected subset of $\mathbb{R}$. For the inverse image of this set under the $det$ mapping, we have
\begin{small}
$$
\emptyset = det^{-1} \big( (-\infty, 0) \bigcap (0, \infty) \big) = det^{-1} \big( (-\infty, 0) \big) \bigcap det^{-1} \big( (0, \infty) \big) = GL^{-}_{n}(\mathbb{R}) \bigcap GL^{+}_{n}(\mathbb{R})
$$
\end{small}
$GL^{+}_{n}(\mathbb{R})$ and $GL^{-}_{n}(\mathbb{R})$ are the subsets of $GL_{n}(\mathbb{R})$ which consist of matrices with positive and negative determinants respectively. These are non-empty, open (since they are both defined as inverse images of the open sets $(0, \infty)$, $(-\infty, 0)$ respectively), disjoint and apparently $GL_{n}(\mathbb{R}) = GL^{+}_{n}(\mathbb{R}) \bigcup GL^{-}_{n}(\mathbb{R}$. Consequently, $GL_{n}(\mathbb{R})$ is disconnected.

$\bullet$ Its two connected (why?) components $\mathbf{GL^{+}_{n}(\mathbb{R})}$ and $\mathbf{GL^{-}_{n}(\mathbb{R})}$ are each $n^{2}$-dimensional, $\mathcal{C}^{r}$-differentiable manifolds for all values $r = 0, 1, 2, ..., \infty, \omega$. For each one of them, the $\mathcal{C}^{r}$-structure is determined by the $\mathcal{C}^{r}$-atlas consisting of the single chart $(GL^{\pm}_{n}(\mathbb{R}), \phi_{\pm})$ where $\phi_{\pm}$ are the suitable restrictions respectively of the $\phi$ homeomorphism given in \equref{matricestorealspaces}.
\end{example}
The previous examples were more or less trivial, in the sense that the differential structures were all determined by atlases consisting of a single chart each. But the real interesting cases -and in fact the real motivation behind the introduction of the notion of differentiable manifold- are the ones in which it is impossible to cover the whole space with a single chart. In such a case we have to pick an open covering of the space, determine the homeomorphisms from the open sets of the space to suitable real domains and finally to check the $\mathcal{C}^{r}$-compatibility between the different charts. The most typical example to start with is no other than the familiar $2$-Sphere $\mathcal{S}^{2}$.
\begin{example}[\textsf{The sphere $\mathbf{\mathcal{S}^{2}}$ as a differentiable manifold}]   \   \\
Lets consider the unit sphere, embedded in $\mathbb{R}^{3}$ (thus equipped with the relative topology as a subset of $\mathbb{R}^{3}$) and centered at the origin (for simplicity in computations)
$$
\mathcal{S}^{2} = \{ (x^{1}, x^{2}, x^{3}) \in \mathbb{R}^{3} \diagup (x^{1})^{2} + (x^{2})^{2} + (x^{3})^{2} = 1 \} \subset \mathbb{R}^{3}
$$

$\mathbf{(I).}$ If I consider the upper hemisphere $\mathcal{S}^{2}_{ \{x^{3} > 0\} }$, and the interior of the planar disc $D_{1,2} = \{ (x^{1}, x^{2}) \in \mathbb{R}^{2} \diagup (x^{1})^{2} + (x^{2})^{2} < 1 \}$, then the homeomorphism (why?)
\begin{equation}  \eqlabel{sphere1}
\begin{array}{c}
\phi_{1} : \mathcal{S}^{2}_{ \{x^{3} > 0\} } \rightarrow D_{1,2} \subset \mathbb{R}^{2}    \\
(x^{1}, x^{2}, x^{3}) \mapsto (x^{1}, x^{2})
\end{array}
\end{equation}
is the vertical, projection of the upper hemisphere $\mathcal{S}^{2}_{ \{x^{3} > 0\} }$ onto the open planar disc $D_{1,2}$. The situation is visually described in the following figure
\begin{center}
\includegraphics[angle=270, scale=0.3]{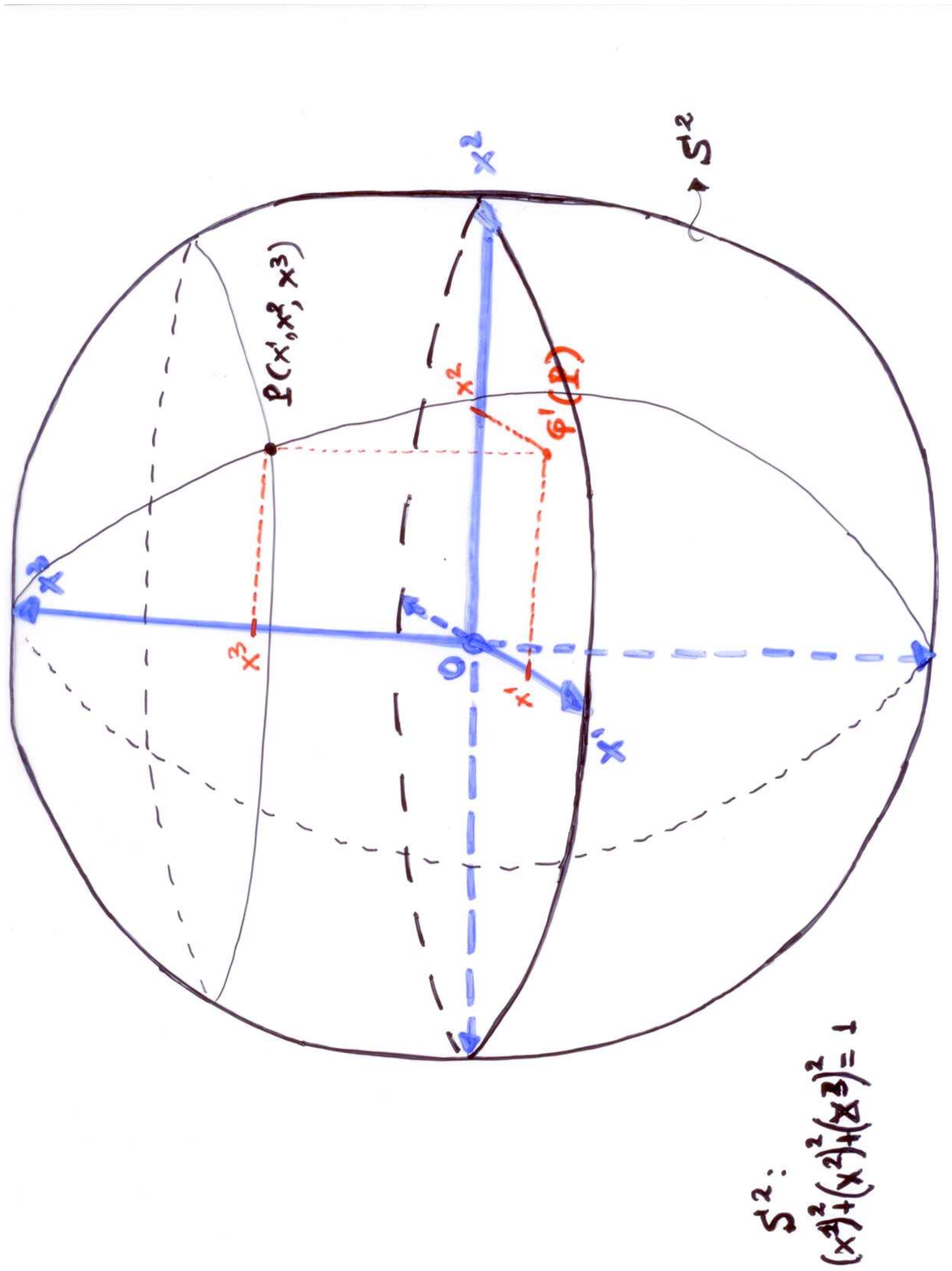}
\end{center}
Thus, $(\mathcal{S}^{2}_{ \{x^{3} > 0\} }, \phi_{1})$ is a local chart and it is clear that $\mathcal{S}^{2}$ can be covered with the domains of six such charts. \underline{We are going to prove that these six local charts constitute} \underline{a $\mathcal{C}^{\infty}$-atlas on $\mathcal{S}^{2}$}.

$\mathbf{(II).}$ Next, lets consider the stereographic projection of the sphere, from the north pole onto the plane $\pi (x^{3} = 0)$ coming through the equator
\begin{equation}    \eqlabel{sphere21}
\begin{array}{ccc}
\begin{array}{c}
\sigma_{1} : \mathcal{S}^{2} - \{ 0, 0, 1 \} \rightarrow \pi (x^{3} = 0) \subseteq \mathbb{R}^{2}   \\
(x^{1}, x^{2}, x^{3}) \mapsto \sigma_{1}(x^{1}, x^{2}, x^{3}) = (\zeta, \eta) \equiv (\frac{x^{1}}{1 - x^{3}}, \frac{x^{2}}{1 - x^{3}})
\end{array}    &  \rightsquigarrow   &    \begin{array}{c}
                                          Stereographic \ Projection  \\
                                          from \ the \ North \ Pole
                                          \end{array}
\end{array}
\end{equation}
The local chart $(\mathcal{S}^{2} - \{ 0,0,1 \}, \sigma_{1})$ given in \equref{sphere21} is visually described in the following
\begin{center}
\includegraphics[angle=270, scale=0.35]{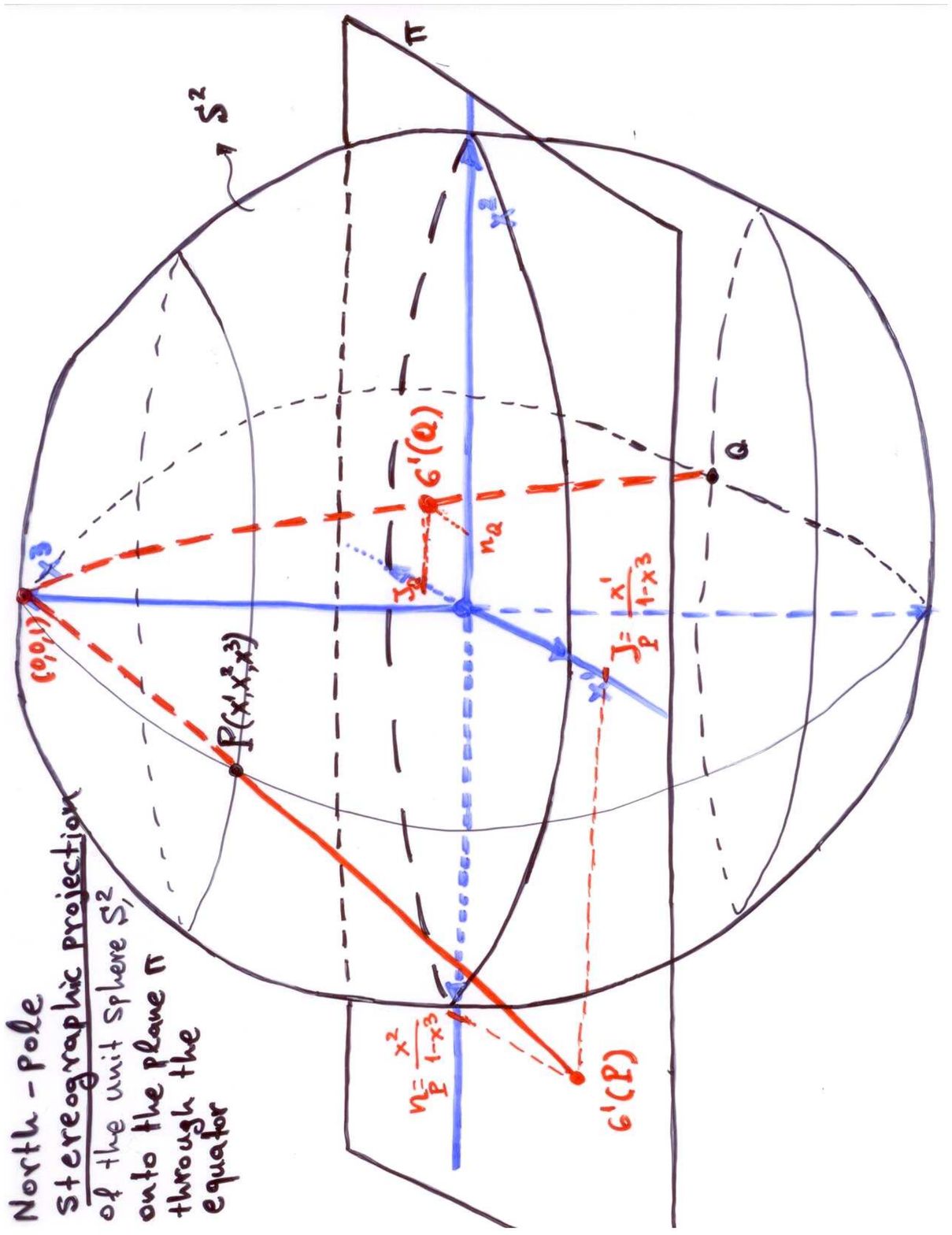}
\end{center}
\underline{We are going to prove that the chart given by \equref{sphere21}, supplemented with the chart de-} \underline{scribing the stereographic projection from the south pole}
\begin{equation}    \eqlabel{sphere22}
\begin{array}{ccc}
\begin{array}{c}
\sigma_{2} : \mathcal{S}^{2} - \{ 0, 0, -1 \} \rightarrow \pi (x^{3} = 0) \subseteq \mathbb{R}^{2}   \\
(x^{1}, x^{2}, x^{3}) \mapsto \sigma_{2}(x^{1}, x^{2}, x^{3}) = (\zeta', \eta') \equiv (\frac{x^{1}}{1 + x^{3}}, \frac{x^{2}}{1 + x^{3}})
\end{array}    &  \rightsquigarrow   &    \begin{array}{c}
                                          Stereographic \ Projection  \\
                                          from \ the \ South \ Pole
                                          \end{array}
\end{array}
\end{equation}
\underline{constitute a $\mathcal{C}^{\infty}$-atlas on $\mathcal{S}^{2}$}.

$\mathbf{(III).}$ Finally \underline{we are going to prove that the $\mathcal{C}^{\infty}$-atlas specified by \equref{sphere21}, \equref{sphere22} and} \underline{the $\mathcal{C}^{\infty}$-atlas specified by the $\phi_{i}$'s ($i = 1,2,...6$) of the form of the \equref{sphere1} are equivalent} or in other words that \underline{both atlases specify the same $\mathcal{C}^{\infty}$-structure}.   \\

\underline{\textbf{Proof of (I):}}    \\
It is enough to consider any two (out of the totally $6$) local charts with overlapping domains and show that they are $\mathcal{C}^{r}$-compatible.  Lets take $(\mathcal{S}^{2}_{ \{x^{3} > 0\} }, \phi_{1})$ and $(\mathcal{S}^{2}_{ \{x^{2} > 0\} }, \phi_{3})$ which are given by
\begin{equation}
\begin{array}{ccccc}
\begin{array}{c}
\phi_{1} : \mathcal{S}^{2}_{ \{x^{3} > 0\} } \rightarrow D_{1,2} \subset \mathbb{R}^{2}   \\
(x^{1}, x^{2}, x^{3}) \mapsto (x^{1}, x^{2})
\end{array}    &  &  \begin{array}{c}
                     with  \\
                     inverse:
                     \end{array}  &  &   \begin{array}{c}
                                         \phi_{1}^{-1} : D_{1,2} \rightarrow  \mathcal{S}^{2}_{ \{x^{3} > 0\} }  \\
                                         (x^{1}, x^{2}) \mapsto \big( x^{1}, x^{2}, \sqrt{1 - (x^{1})^{2} - (x^{2})^{2}} \big)
                                         \end{array}
\end{array}
\end{equation}
and
\begin{equation}
\begin{array}{ccccc}
\begin{array}{c}
\phi_{3} : \mathcal{S}^{2}_{ \{x^{2} > 0\} } \rightarrow D_{1,3} \subset \mathbb{R}^{2}   \\
(x^{1}, x^{2}, x^{3}) \mapsto (x^{1}, x^{3})
\end{array}    &  &  \begin{array}{c}
                     with  \\
                     inverse:
                     \end{array}  &  &   \begin{array}{c}
                                         \phi_{3}^{-1} : D_{1,3} \rightarrow  \mathcal{S}^{2}_{ \{x^{2} > 0\} }  \\
                                         (x^{1}, x^{3}) \mapsto \big( x^{1}, \sqrt{1 - (x^{1})^{2} - (x^{3})^{2}}, x^{3} \big)
                                         \end{array}
\end{array}
\end{equation}
If we choose now a point $P$ of the sphere, lying in the intersection of the above domains: $P \in \mathcal{S}^{2}_{ \{x^{3} > 0\} } \bigcap \mathcal{S}^{2}_{ \{x^{2} > 0\} }$ (in other words we have: $x_{P}^{2} > 0$ and $x_{P}^{3} > 0$) it is enough to note that the transition function
\begin{small}
\begin{equation}  \eqlabel{transfuncsphere}
\begin{array}{c}
\phi_{1} \circ \phi_{3}^{-1}: D_{1,3} \rightarrow D_{1,2}   \\
     \\
\phi_{1} \circ \phi_{3}^{-1}(x^{1}, x^{3}) = \phi_{1} \big( x^{1}, \sqrt{1 - (x^{1})^{2} - (x^{3})^{2}}, x^{3} \big) = \big( x^{1}, \sqrt{1 - (x^{1})^{2} - (x^{3})^{2}} \big)
\end{array}
\end{equation}
\end{small}
is $\mathcal{C}^{\infty}$-differentiable in its domain $D_{1,3}$. The same can be shown (in exactly the same way) for any other transition function defined by the maps $\phi_{i}$ for $i = 1, 2, ..., 6$. We have thus shown that the described homeomorphic vertical projections constitute a $\mathcal{C}^{\infty}$-atlas on $\mathcal{S}^{2}$

\underline{\textbf{Proof of (II):}}    \\
First we observe that the domains of the stereographic projections from the north \equref{sphere21} and the south pole \equref{sphere22} provide an open covering of the sphere. The inverse homeomorphisms are given by
\begin{equation}
\begin{array}{cccc}
\begin{array}{c}
\sigma_{1}^{-1} : \pi (x^{3} = 0) \rightarrow  \mathcal{S}^{2} - \{ 0, 0, 1 \}   \\
(\zeta, \eta) \mapsto (\frac{2 \zeta}{1 + \zeta^{2} + \eta^{2}}, \frac{2 \eta}{1 + \zeta^{2} + \eta^{2}}, - \frac{1 - \zeta^{2} - \eta^{2}}{1 + \zeta^{2} + \eta^{2}})
\end{array}    & & &  \begin{array}{c}
                      \sigma_{2}^{-1} : \pi (x^{3} = 0) \rightarrow  \mathcal{S}^{2} - \{ 0, 0, -1 \}    \\
                      (\zeta', \eta') \mapsto (\frac{2 \zeta'}{1 + \zeta'^{2} + \eta'^{2}}, \frac{2 \eta'}{1 + \zeta'^{2} + \eta'^{2}}, \frac{1 - \zeta'^{2} - \eta'^{2}}{1 + \zeta'^{2} + \eta'^{2}})
                      \end{array}
\end{array}
\end{equation}
In order that $\{ (\mathcal{S}^{2} - \{ 0,0,1 \}, \sigma_{1}), (\mathcal{S}^{2} - \{ 0,0, -1 \}, \sigma_{2}) \}$ constitutes a $\mathcal{C}^{\infty}$-atlas on the sphere, it is enough to note that the change of coordinates function (transition function)
\begin{equation}
\begin{array}{c}
\sigma_{1} \circ \sigma_{2}^{-1} : \mathbb{R}^{2} - \{(0,0)\} \rightarrow \mathbb{R}^{2} - \{(0,0)\} \\
   \\
\sigma_{1} \circ \sigma_{2}^{-1}(\zeta', \eta') = \sigma_{1} \big( \frac{2 \zeta'}{1 + \zeta'^{2} + \eta'^{2}}, \frac{2 \eta'}{1 + \zeta'^{2} + \eta'^{2}}, \frac{1 - \zeta'^{2} - \eta'^{2}}{1 + \zeta'^{2} + \eta'^{2}} \big) =  \\
    \\
\Big( \frac{\frac{2 \zeta'}{1 + \zeta'^{2} + \eta'^{2}}}{1 - \frac{1 - \zeta'^{2} - \eta'^{2}}{1 + \zeta'^{2} + \eta'^{2}}}, \frac{\frac{2 \eta'}{1 + \zeta'^{2} + \eta'^{2}}}{1 - \frac{1 - \zeta'^{2} - \eta'^{2}}{1 + \zeta'^{2} + \eta'^{2}}} \Big) = \big( \frac{\zeta'}{\zeta'^{2} + \eta'^{2}}, \frac{\eta'}{\zeta'^{2} + \eta'^{2}} \big)
\end{array}
\end{equation}
or equivalently
$$
\begin{array}{ccccc}
\zeta (\zeta', \eta') = \frac{\zeta'}{\zeta'^{2} + \eta'^{2}} &  \  & and: &  \  & \eta (\zeta', \eta') = \frac{\eta'}{\zeta'^{2} + \eta'^{2}}
\end{array}
$$
is $\mathcal{C}^{\infty}$-differentiable in its domain $\mathbb{R}^{2} - \{(0,0)\}$. Since this is evident from the above expressions, the proof is complete.

\underline{\textbf{Proof of (III):}}    \\
In order to show that the $\mathcal{C}^{\infty}$-atlases constructed above are equivalent (i.e. they determine the same $\mathcal{C}^{\infty}$-structure on $\mathcal{S}^{2}$) it suffices to show that any two of their local charts are $\mathcal{C}^{\infty}$-compatible. So if we pick for example the local charts $(\mathcal{S}^{2}_{ \{x^{2} > 0\} }, \phi_{3})$ and $(\mathcal{S}^{2} - \{ 0,0,1 \}, \sigma_{1})$ from the $\mathcal{C}^{\infty}$-atlases constructed in $\mathbf{(I)}$ and $\mathbf{(II)}$ respectively, we have that
$$
\sigma_{1} \Big( \mathcal{S}^{2}_{ \{x^{2} > 0\} } \bigcap \mathcal{S}^{2} - \{ 0,0,1 \} \Big) = \sigma_{1} \big( \mathcal{S}^{2}_{ \{x^{2} > 0\} } \big) = \mathbb{R}^{2}_{ \{x^{2} > 0\} }
$$
where $\mathbb{R}^{2}_{ \{x^{2} > 0\} }$ is the open half-plane. We thus get for the transition function (change of coordinates) between the local charts of the two specified atlases
\begin{equation}
\begin{array}{c}
\phi_{3} \circ \sigma_{1}^{-1}: \mathbb{R}^{2}_{ \{x^{2} > 0\} } \rightarrow D_{1,3}    \\
    \\
\phi_{3} \circ \sigma_{1}^{-1} (\zeta, \eta) = \phi_{3}(\frac{2 \zeta}{1 + \zeta^{2} + \eta^{2}}, \frac{2 \eta}{1 + \zeta^{2} + \eta^{2}}, - \frac{1 - \zeta^{2} - \eta^{2}}{1 + \zeta^{2} + \eta^{2}}) = (\frac{2 \zeta}{1 + \zeta^{2} + \eta^{2}}, - \frac{1 - \zeta^{2} - \eta^{2}}{1 + \zeta^{2} + \eta^{2}})
\end{array}
\end{equation}
which is a $C^{\infty}$-differentiable function in its domain. Thus the proof is complete.
\end{example}

\begin{example}[\textsf{The \textbf{real Projective space} $\mathbf{P^{2}(\mathbb{R})}$ as a differentiable manifold}]  \  \\
Consider the Euclidean space with the origin excluded $\mathbb{R}^{3} - \{ (0,0,0) \}$ and define the following binary relation on it: If $\vec{x}, \vec{y} \in \mathbb{R}^{3} - \{ (0,0,0) \}$ then
\begin{equation}
\vec{x} \sim \vec{y} \Leftrightarrow \exists \ \lambda \in \mathbb{R}^{*} : \vec{y} = \lambda \vec{x}
\end{equation}
We can show that this is an equivalence relation on $\mathbb{R}^{3} - \{ (0,0,0) \}$. In other words two points of the $3d$-Euclidean space will be equivalent if and only if their corresponding position vectors are parallel. The set of equivalence classes
\begin{small}
\begin{equation}
\mathbf{P^{2}(\mathbb{R})} \equiv \mathbb{R}^{3} - \{ (0,0,0) \} \Big/ \sim \ = \big\{ [\vec{x}] \equiv (x^{1}:x^{2}:x^{3}) \ \diagup \ [\vec{x}]: equivalence \ class \ of \ \sim \big\}
\end{equation}
\end{small}
where $(x^{1}:x^{2}:x^{3}) \equiv [\vec{x}] = \big\{ \vec{y} \in \mathbb{R}^{3} - \{ (0,0,0) \} \diagup \exists \ \lambda \in \mathbb{R}^{*} : \vec{y} = \lambda \vec{x} \big\}$
will be called the (real) \textbf{projective plane}. It is easy to see that, as a set, the (real) projective plane $P^{2}(\mathbb{R})$ is identified with the set of all straight lines of $\mathbb{R}^{3}$ coming through the origin (with the origin excluded of course). Considering the canonical projection
\begin{equation}
\begin{array}{c}
\pi: \mathbb{R}^{3} - \{ (0,0,0) \} \rightarrow P^{2}(\mathbb{R})   \\
\vec{x} \equiv (x^{1}, x^{2}, x^{3}) \mapsto \pi(\vec{x}) = [\vec{x}] \equiv (x^{1}:x^{2}:x^{3})
\end{array}
\end{equation}
the projective plane acquires the induced topology from $\mathbb{R}^{3} - \{ (0,0,0) \}$ (through $\pi$). In this way $P^{2}(\mathbb{R})$ becomes a Hausdorff (why?) and connected (why?)  topological space. Thus the collection of the open sets of $P^{2}(\mathbb{R})$ will be
$$
\tau_{P^{2}(\mathbb{R})} = \big\{ U \ \diagup \ \pi_{-1}(U): open \ in \ \mathbb{R}^{3} - \{ (0,0,0) \} \big\}
$$
We are now going to specify a $C^{\infty}$-differentiable structure on $P^{2}(\mathbb{R})$: Consider the subsets $U_{i} = \{ [\vec{x}] \equiv (x^{1}:x^{2}:x^{3}) \ \diagup \ x^{i} \neq 0 \}$ for $i = 1, 2, 3$ of the projective plane. These are open in $P^{2}(\mathbb{R})$ and they apparently constitute an open covering of $P^{2}(\mathbb{R})$. Notice that, if $x^{1} \neq 0$, since $(x^{1}, x^{2}, x^{3}) \sim (1, \frac{x^{2}}{x^{1}}, \frac{x^{3}}{x^{1}})$ (in $\mathbb{R}^{3} - \{ (0,0,0) \}$) we have $(x^{1}:x^{2}:x^{3}) \equiv (1:\frac{x^{2}}{x^{1}}:\frac{x^{3}}{x^{1}}) \in U_{1} \subset P^{2}(\mathbb{R})$. The corresponding relations hold for $U_{2}, U_{3}$ also.
If we consider the mappings $\phi_{i}: U_{i} \rightarrow \mathbb{R}^{2}$ defined by
\begin{small}
\begin{equation}
\begin{array}{ccccc}
\phi_{1}\big( (1:\frac{x^{2}}{x^{1}}:\frac{x^{3}}{x^{1}}) \big) = (\frac{x^{2}}{x^{1}}, \frac{x^{3}}{x^{1}}),  & &
\phi_{2}\big( (\frac{x^{1}}{x^{2}}:1:\frac{x^{3}}{x^{2}}) \big) = (\frac{x^{1}}{x^{2}}, \frac{x^{3}}{x^{2}}),  & &
\phi_{3}\big( (\frac{x^{1}}{x^{3}}:\frac{x^{2}}{x^{3}}:1) \big) = (\frac{x^{1}}{x^{3}}, \frac{x^{2}}{x^{3}})
\end{array}
\end{equation}
\end{small}
then we can see that they are homeomorphisms and $\phi(U_{i}) = \mathbb{R}^{2}$. The geometrical interpretation of the $\phi_{i}$ maps ($i = 1, 2, 3$) can be deduced from the next figure
\begin{center}
\includegraphics[angle=270, scale=0.45]{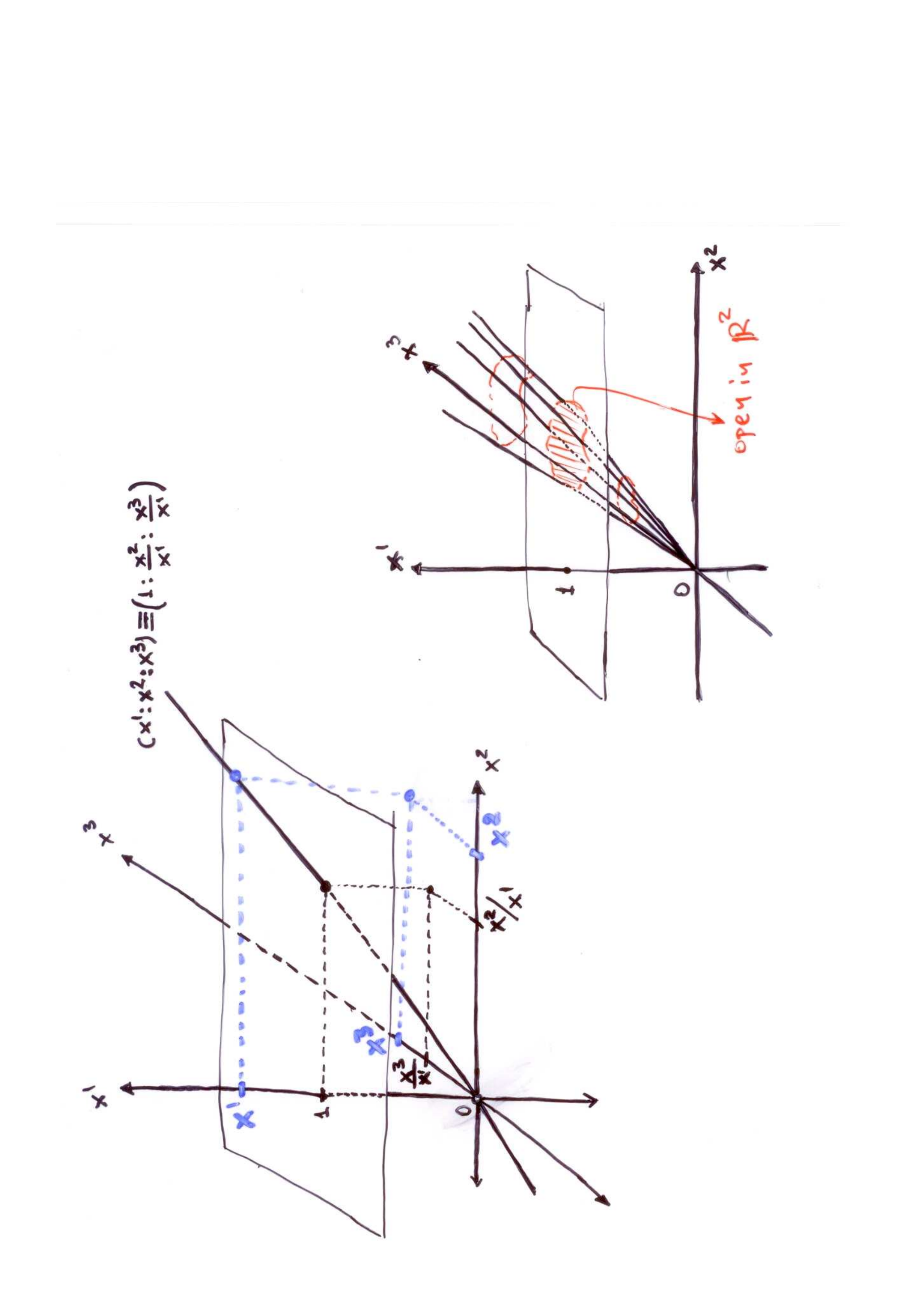}
\end{center}
For example, the $\phi_{1}$ map, maps the straight line $(x^{1}:x^{2}:x^{3}) \equiv (1:\frac{x^{2}}{x^{1}}:\frac{x^{3}}{x^{1}}) \in P^{2}(\mathbb{R})$ of the projective plane, to the point $(\frac{x^{2}}{x^{1}}, \frac{x^{3}}{x^{1}}) \in \mathbb{R}^{2}$ of the plane. But these are exactly the coordinates along $x^{2}$ and $x^{3}$ respectively, of the intersection point between the $(x^{1}:x^{2}:x^{3})$ straight line and the $x^{1} = 1$ plane.   \\
To verify that the $\phi_{i}$ mappings are indeed homeomorphisms, it suffices to construct the inverse mappings $\phi_{i}^{-1}: \mathbb{R}^{2} \rightarrow U_{i}$
\begin{small}
\begin{equation}
\begin{array}{ccccc}
\phi_{1}^{-1}\big( (x^{1}, x^{2}) \big) = (1:x^{1}:x^{2}),  & &
\phi_{2}^{-1}\big( (x^{1}, x^{2}) \big) = (x^{1}:1:x^{2}),  & &
\phi_{3}^{-1}\big( (x^{1}, x^{2}) \big) = (x^{1}:x^{2}:1)
\end{array}
\end{equation}
\end{small}
and show that both $\phi, \phi^{-1}$ are continuous (Exercise!). The geometrical interpretation of the inverse maps $\phi_{i}^{-1}$ for ($i = 1, 2, 3$) can again be deduced from the last figure: For example, the $\phi_{1}^{-1}$ map, maps the point $(x^{1}, x^{2}) \in \mathbb{R}^{2}$ to the straight line of $\mathbb{R}^{3} - \{ (0,0,0) \}$ constructed as follows: We consider the point $(x^{1}, x^{2})$ vertically ``raised'' to the $x^{1} = 1$ plane so its coordinates become $(1, x^{1}, x^{2})$ (in $\mathbb{R}^{3}$). Now the image of $(x^{1}, x^{2}) \in \mathbb{R}^{2}$ under $\phi_{1}^{-1}$, in other words the straight line $\phi_{1}^{-1}\big( (x^{1}, x^{2}) \big) = (1:x^{1}:x^{2})$ is exactly the straight line joining the origin with $(1, x^{1}, x^{2}) \in \mathbb{R}^{3}$.

Now it remains to check the transition functions (change of coordinates). Thus we have that if for example $(x^{1}, x^{2}) \in \phi_{3}(U_{1} \bigcap U_{3}) = \{ (x^{1}, x^{2}) \in \mathbb{R}^{2} \diagup x^{1} \neq 0 \}$ then
\begin{equation}
\phi_{1} \circ \phi_{3}^{-1}((x^{1}, x^{2})) = \phi_{1}((x^{1}:x^{2}:1)) = (\frac{x^{2}}{x^{1}}, \frac{1}{x^{1}})
\end{equation}
which is obviously a $C^{\infty}$-differentiable function in its domain $\phi_{3}(U_{1} \bigcap U_{3})$ ($\equiv$ the plane with the vertical $x_{2}$-axis removed). We have thus proved, that the projective plane is a $\mathcal{C}^{\infty}$-differentiable manifold whose dimension equals $2$.
\end{example}
The previous examples reveal some flavor of the wide variety of objects associated with the notion of differentiable manifolds. We saw that a differentiable manifold can be either some very familiar geometrical object like the Euclidean space, the graph of a continuous function in a Euclidean space, even a closed surface like a sphere (not constituting the graph of some function) or some quite unfamiliar (by intuition) geometrical object like the projective space or finally even some object with no obvious geometrical interpretation like the various matrix groups examined previously.
%\begin{scriptsize}
\begin{remark}[\textbf{Classification of $1$-manifolds}]
As it is customary in almost all branches of mathematics, the introduction of a novel abstraction or even a novel notion is almost automatically accompanied by attempts of classifying all possible examples associated with this notion. Far from trying to discuss classification results of differentiable manifold theory at this stage (which are an active and promising area of research in differential geometry nowadays), it is very instructive to underline an old result dealing with the \underline{classification of $1$-dimensional topological manifolds}: \textbf{A second countable $1$-dim. topological manifold, is homeomorphic to $\mathcal{S}^{1}$ if it is compact and to $\mathbb{R}$ if it is not}. In other words, the only second countable, $1$-dimensional, topological manifolds are the real line and the circle.
\end{remark}
%\end{scriptsize}
\textbf{\underline{Exercises:}}   \\
$\mathbf{3.}$ Show that any open, connected subset of a differentiable manifold is a differentiable manifold of the same dimension. (What if instead of an open subset we considered a closed subset of the initial manifold? Can you figure out where problems might arise?)   \\
$\mathbf{4.}$ If $\mathcal{N}$ and $\mathcal{M}$ are differentiable manifolds of dimensions $dim\mathcal{N} = n$ and $dim\mathcal{M} = m$ respectively, show that their Cartesian product $\mathcal{N} \times \mathcal{M}$ is a differentiable manifold of dimension $dim(\mathcal{N} \times \mathcal{M}) = n + m$    \\
$\mathbf{5.}$ Show that any regular surface $\overline{x} = \overline{x}(u,v)$ of $\mathbb{R}^{3}$ is a differentiable manifold whose dimension is $2$. \\
$\mathbf{6.}$ Show that the circle $\mathcal{S}^{1}$ is an one dimensional differentiable manifold   \\
$\mathbf{7.}$ Show that the $n$-torus $\mathcal{S}^{n} = \mathcal{S}^{1} \times \mathcal{S}^{1} \times ... \times \mathcal{S}^{1}$ is a diff. manif. with $dim\mathcal{S}^{n} = n$   \\
$\mathbf{8.}$ Show that the cylinder $\mathcal{S}^{1} \times \mathbb{R}$ is a diff. manif. whose dimension equals $2$.   \\
$\mathbf{9.}$ Compute the Jacobian for the change of coordinates (transition function) \equref{transfuncsphere}    \\

\section{Differentiable maps}

\begin{definition}[$\mathbf{\mathcal{C}^{r}}$\textbf{-differentiable functions} between manifolds] \delabel{smooth}
Let $\mathcal{N}$ and $\mathcal{M}$ be two $\mathcal{C}^{r}$-differentiable manifolds of dimensions $n$ and $m$ respectively. The map $f: \mathcal{N} \rightarrow \mathcal{M}$ will be said to be a $\mathbf{\mathcal{C}^{r}}$\textbf{-differentiable function at $P \in \mathcal{N}$} if there are local charts $(U_{P}, \phi)$ and $(V_{f(P)}, \psi)$ respectively, such that the real (vector) function
\begin{equation} \eqlabel{coordexproff}
\psi \circ f \circ \phi^{-1}: \phi \big( f^{-1}(V_{f(P)}) \cap U_{P}\big) \rightarrow \psi(V_{f(P)})
\end{equation}
is $\mathcal{C}^{r}$-differentiable at $\phi(P) \in \mathbb{R}$ (in the usual sense of calculus). \\
Notice that $\phi \big( f^{-1}(V_{f(P)} \cap U_{P})\big) \subseteq \mathbb{R}^{n}$ and $\psi(V_{f(P)}) \subseteq \mathbb{R}^{m}$. The situation is visually described in the following figure
\begin{center}
\includegraphics[angle=270, scale=0.5]{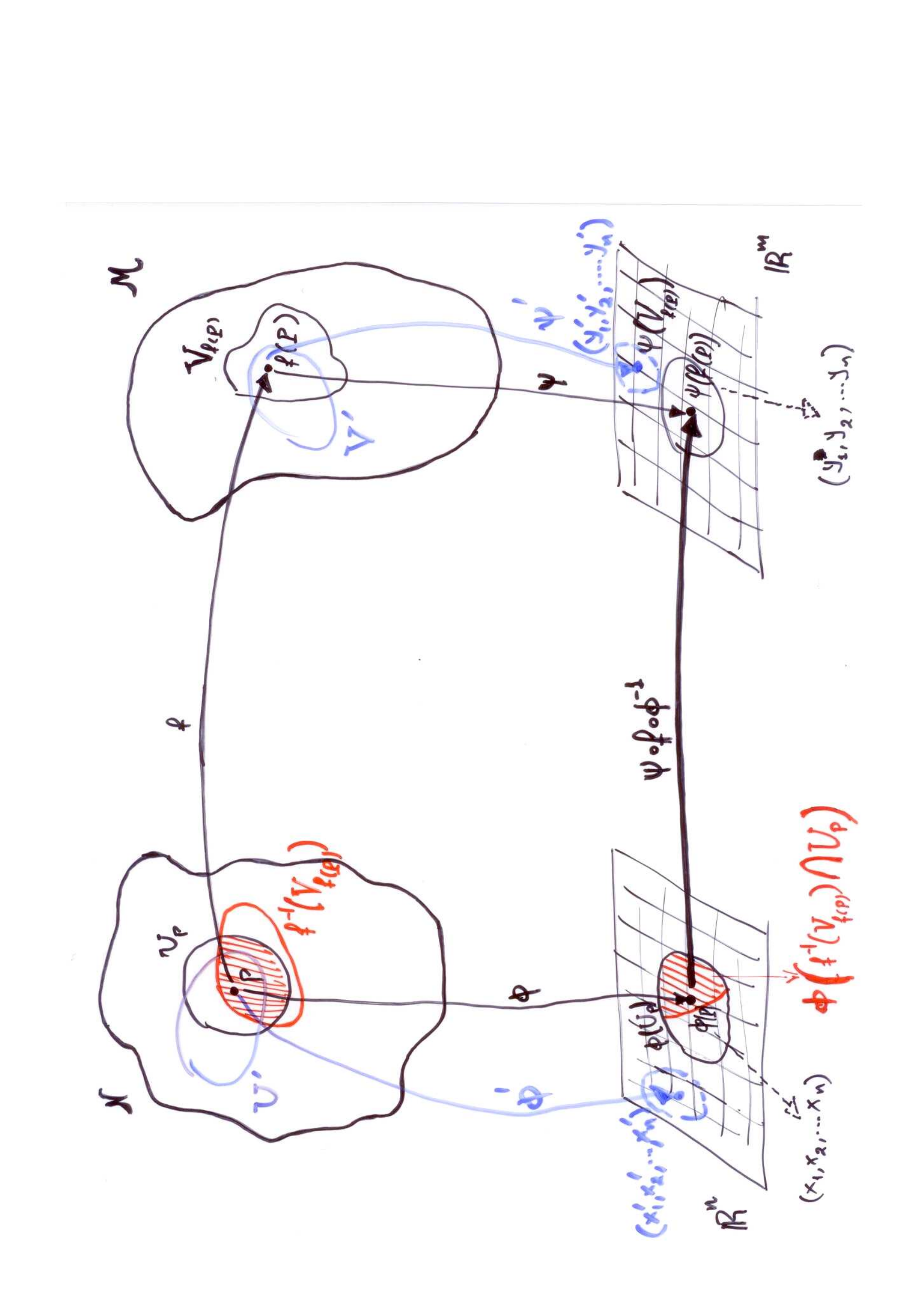}
\end{center}
The function $f: \mathcal{N} \rightarrow \mathcal{M}$ will be said to be a $\mathbf{\mathcal{C}^{r}}$\textbf{-differentiable function} if it is $\mathcal{C}^{r}$-differentiable for every $P \in \mathcal{N}$.
\end{definition}
The following lemma verifies the validity and the usefulness of the previous definition
\begin{lemma}
The definition of $\mathcal{C}^{r}$-differentiable function given above is independent of the choice of local charts
$(U_{P}, \phi)$ and $(V_{f(P)}, \psi)$.
\end{lemma}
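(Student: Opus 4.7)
The plan is to exploit the $\mathcal{C}^{r}$-compatibility of charts within a $\mathcal{C}^{r}$-structure (see \deref{atlas} and \deref{equivalencyatlases}), together with the classical fact that the composition of $\mathcal{C}^{r}$-differentiable functions between Euclidean domains is again $\mathcal{C}^{r}$-differentiable. Concretely, I would fix a second pair of charts $(U'_{P}, \phi')$ around $P$ and $(V'_{f(P)}, \psi')$ around $f(P)$, drawn from the $\mathcal{C}^{r}$-structures of $\mathcal{N}$ and $\mathcal{M}$ respectively, and I would show that the coordinate expression $\psi' \circ f \circ (\phi')^{-1}$ is $\mathcal{C}^{r}$-differentiable at $\phi'(P) \in \mathbb{R}^{n}$.

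First I would choose an open neighborhood $W \subseteq U_{P} \cap U'_{P}$ of $P$ small enough so that $f(W) \subseteq V_{f(P)} \cap V'_{f(P)}$; this is possible because the hypothesized $\mathcal{C}^{r}$-differentiability of $\psi \circ f \circ \phi^{-1}$ at $\phi(P)$ entails its continuity there, and $\phi$, $\psi^{-1}$, and the inclusions are continuous, so $f$ itself is continuous at $P$. On the open set $\phi'(W) \subseteq \mathbb{R}^{n}$ I can then write the identity
\begin{equation}
\psi' \circ f \circ (\phi')^{-1} \;=\; \bigl(\psi' \circ \psi^{-1}\bigr) \,\circ\, \bigl(\psi \circ f \circ \phi^{-1}\bigr) \,\circ\, \bigl(\phi \circ (\phi')^{-1}\bigr),
\end{equation}
as a composition of maps between open subsets of Euclidean spaces.

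Next I would observe that each of the three factors on the right is a $\mathcal{C}^{r}$-differentiable map at the relevant point. The outer two factors, $\phi \circ (\phi')^{-1}$ and $\psi' \circ \psi^{-1}$, are transition functions between charts belonging to the same $\mathcal{C}^{r}$-structures on $\mathcal{N}$ and $\mathcal{M}$ respectively, hence they are $\mathcal{C}^{r}$-differentiable by \deref{atlas} and the equivalence relation of \deref{equivalencyatlases}. The middle factor $\psi \circ f \circ \phi^{-1}$ is $\mathcal{C}^{r}$-differentiable at $\phi(P)$ by the hypothesis of the lemma. Applying the classical chain rule for $\mathcal{C}^{r}$-functions between open subsets of Euclidean spaces, the composite $\psi' \circ f \circ (\phi')^{-1}$ is $\mathcal{C}^{r}$-differentiable at $\phi'(P)$, which is what was required.

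The main obstacle I expect is not conceptual but rather the careful bookkeeping of domains: one must verify that the composition in the displayed equation is actually defined on a genuine open neighborhood of $\phi'(P)$ in $\mathbb{R}^{n}$, rather than only on the image $\phi'(P)$ itself. This is where the initial shrinking to $W$, and the local continuity of $f$ at $P$ derived from the differentiability hypothesis, become essential; without them the identity of maps holds only pointwise at $\phi'(P)$, which is insufficient to invoke the Euclidean chain rule.
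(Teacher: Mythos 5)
Your proof is correct and follows essentially the same route as the paper: the same factorization $\psi' \circ f \circ (\phi')^{-1} = (\psi' \circ \psi^{-1}) \circ (\psi \circ f \circ \phi^{-1}) \circ (\phi \circ (\phi')^{-1})$ into transition functions sandwiching the original coordinate representation, followed by the Euclidean chain rule for $\mathcal{C}^{r}$-maps. Your extra care in shrinking to the neighborhood $W$ so that the composition is defined on a genuine open neighborhood of $\phi'(P)$ is a domain-bookkeeping detail the paper's proof silently omits, and it is a worthwhile addition.
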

\begin{proof}
If we had chosen two different local charts $(U^{'}_{P}, \phi^{'})$ and $(V^{'}_{f(P)}, \psi^{'})$ then
$$
\psi^{'} \circ f \circ \phi^{' -1} =  (\psi^{'} \circ \psi^{-1}) \circ (\psi \circ f \circ \phi^{-1}) \circ (\phi \circ \phi^{' -1})
$$
But all the (real) functions in the rhs of the above are $\mathcal{C}^{r}$-differentiable (in the usual sense of calculus) by the definition of the notion of $\mathcal{C}^{r}$-differentiable manifold.

Consequently, we have shown that if $f: \mathcal{N} \rightarrow \mathcal{M}$ is a $\mathcal{C}^{r}$-differentiable function for one choice of the local charts $(U_{P}, \phi)$ and $(V_{f(P)}, \psi)$ then it will be so for any other choice of local charts.
\end{proof}
The real (vector) function $\psi \circ f \circ \phi^{-1}(x_{1}, x_{2}, ..., x_{n}) = (y_{1}, y_{2}, ..., y_{n})$ can be written equivalently
$$
y_{i} = y_{i}(x_{1}, x_{2}, ..., x_{n})
$$
for all values $i = 1, 2, ..., n$.
\begin{example}[\textsf{Real functions on a manifold}]
If $\mathcal{M} = \mathbb{R}$ then the real function $f: \mathcal{N} \rightarrow \mathbb{R}$ will be $\mathcal{C}^{r}$-differentiable at $P \in \mathcal{N}$ iff $\exists$ a local chart $(U_{P}, \phi)$ such that $f \circ \phi^{-1}$ is $\mathcal{C}^{r}$-differentiable at $\phi(P) \in \mathbb{R}$ (in the usual sense of calculus).

We are going to use the following notations from now on
\begin{equation}
D^{r}(\mathcal{N}, P) = \{ f \diagup f: \mathcal{N} \rightarrow \mathbb{R} \ and \ \mathcal{C}^{r}-differentiable \ at \ P \in \mathcal{N}\}
\end{equation}
\begin{equation} \eqlabel{algdiffunctonN}
D^{r}(\mathcal{N}) = \{ f \diagup f: \mathcal{N} \rightarrow \mathbb{R} \ and \ \mathcal{C}^{r}-differentiable \ on \ \mathcal{N}\}
\end{equation}
\end{example}
%\begin{scriptsize}
\begin{remark}[\textbf{Algebra of differentiable functions on a manifold}] \relabel{Ralgebra}
If $\mathcal{V}_{n}$ is an $n$-dim., $\mathcal{C}^{r}$-differentiable manifold and $P \in \mathcal{V}_{n}$ is any point of the manifold, then for any pair of functions $f,g \in D^{r}(\mathcal{V}_{n}, P)$ we can define their sum $f+g$ by $(f + g)(P) = f(P) + g(P)$ their product $f \cdot g$ by $(f \cdot g)(P) = f(P)g(P)$ and the scalar product $\lambda f$ by $(\lambda f)(P) = \lambda f(P)$ (for any real number $\lambda$). We can show that \underline{under these operations $D^{r}(\mathcal{V}_{n}, P)$ becomes an $\mathbb{R}$-algebra.}
\end{remark}
%\end{scriptsize}
\begin{example}[$\mathcal{C}^{r}$-\textsf{differentiable curve on a manifold}] \exlabel{smoothcurve}
Let the $1$-dim. $\mathcal{C}^{r}$-manifold $\mathcal{N} = (0,1)$. We will call a \textbf{$\mathcal{C}^{r}$-differentiable curve on $\mathcal{M}$} the $\mathcal{C}^{r}$-differentiable function $\gamma: (0,1) \rightarrow \mathcal{M}$. In terms of coordinates we may write
\begin{equation}
x_{i}(t) \equiv \pi_{i} \circ \psi \circ \gamma (t)
\end{equation}
where $i = 1, 2, ..., n$ and $t \in (0,1)$.
\end{example}
\begin{remark}
Note that according to the definition of smooth ($\mathcal{C}^{r}$) differentiable curve given in \exref{smoothcurve} a differentiable curve on a manifold may admit cusps as is the case in the following example:
\begin{center}
\includegraphics[scale=0.5]{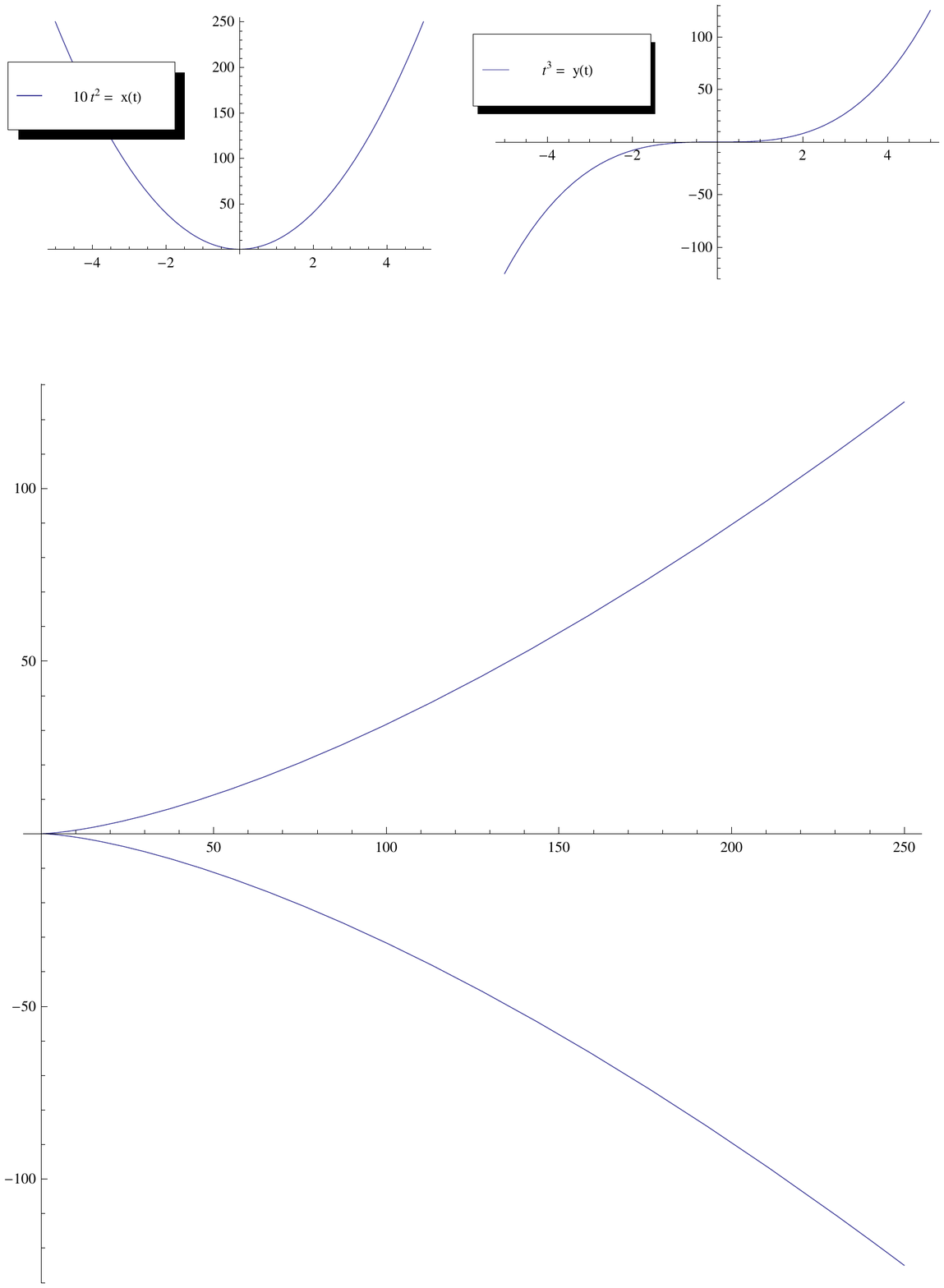}
\end{center}
\end{remark}
It is obvious that the above definition can be readily generalized to the definition of the \textbf{$\mathcal{C}^{r}$-differentiable (hyper)surface on $\mathcal{M}$}.

Let $f: \mathcal{N} \rightarrow \mathcal{M}$ be a homeomorphism. If $f$ is also $\mathcal{C}^{r}$-differentiable, its inverse function $f^{-1}: \mathcal{M} \rightarrow \mathcal{N}$ need not be also $\mathcal{C}^{r}$-differentiable.
\begin{definition}[$\mathbf{\mathcal{C}^{r}}$-\textbf{Diffeomorphism}]
Given the $\mathcal{C}^{r}$-differentiable function $f: \mathcal{N} \rightarrow \mathcal{M}$, if the inverse map $f^{-1}:\mathcal{M} \rightarrow \mathcal{N}$ is also $\mathcal{C}^{r}$-differentiable function then $f$ will be called a $\mathbf{\mathcal{C}^{r}}$-\textbf{diffeomorphism}.
\end{definition}
Diffeomorphisms of smooth manifolds play the same role as homeomorphisms do for topological spaces. If $f: \mathcal{N} \rightarrow \mathcal{M}$ is a diffeomorphism then the manifolds $\mathcal{N}$ and $\mathcal{M}$ will be called diffeomorphic. The set of all manifolds is subdivided into non-intersecting classes of pairwise diffeomorphic manifolds. Any general property of smooth manifolds or smooth mappings on manifolds can be transferred to any other diffeomorphic manifold. Therefore, we shall not distinguish between diffeomorphic manifolds.
%\begin{scriptsize}
\begin{remark}[\textbf{Group of diffeomorphisms on a manifold}] \relabel{GroupofDiffeom}
The set $\underline{\mathcal{D}iff(\mathcal{M})}$ of all diffeomorphisms $f: \mathcal{M} \rightarrow \mathcal{M}$ can be easily shown to be a group. It is called the \underline{group of diffeomorphisms on $\mathcal{M}$}. This group plays an important role in various branches of theoretical physics.
\end{remark}
%\end{scriptsize}

\textbf{\underline{Exercises:}}   \\
$\mathbf{1.}$ If $\mathcal{N}$ is an $n$-dim., $\mathcal{C}^{r}$-differentiable manifold then prove that the coordinate homeomorphisms $\phi_{i}: U_{i} \rightarrow \phi(U_{i}) \subseteq \mathbb{R}^{n}$ are $\mathcal{C}^{r}$-diffeomorphisms. \\
$\mathbf{2.}$ Prove that the domain $\phi \big( f^{-1}(V_{f(P)}) \cap U_{P}\big)$ of definition of $\psi \circ f \circ \phi^{-1}$ in \equref{coordexproff} is an open subset of $\mathbb{R}^{n}$.   \\
$\mathbf{3.}$  Prove that \deref{smooth} implies that the function $f: \mathcal{N} \rightarrow \mathcal{M}$ has to be a continuous function.   \\
$\mathbf{4.}$ Prove the assertion of \reref{Ralgebra}, i.e. that $D^{r}(\mathcal{V}_{n}, P)$ becomes an $\mathbb{R}$-algebra.   \\
$\mathbf{5.}$ Prove the assertion of \reref{GroupofDiffeom}, i.e. that $\mathcal{D}iff(\mathcal{M})$ becomes an $\mathbb{R}$-algebra. \\

\section{Tangency: spaces, bundles and tensor fields on manifolds}

\subsection{The Tangent space}

%\subsubsection{Abstract (algebraic) definition}

\paragraph{$\blacksquare$ Construction of the tangent space:}

\begin{definition}[\textbf{Tangent vector} on  point $P$ of a manifold] \delabel{tangvect}
Let $\mathcal{V}_{n}$ be an $n$-dimensional, $\mathcal{C}^{r}$-differentiable manifold and let $P \in \mathcal{V}_{n}$ be any point on it. We will call a \textbf{tangent vector} of the manifold $\mathcal{V}_{n}$ at the point $P$, \underline{any linear map} \underline{$X_{P}: D^{r}(\mathcal{V}_{n}, P) \rightarrow \mathbb{R}$} which furthermore \underline{satisfies the following condition}: \\
If $g_{1}, g_{2}, ..., g_{k}$ is a (finite) collection of $\mathcal{C}^{r}$-differentiable functions of $D^{r}(\mathcal{V}_{n}, P)$ and $f$ is any $\mathcal{C}^{r}$-differentiable function of the $g_{i}$ $(i = 1, 2, ..., k)$ then
\begin{equation} \eqlabel{abstrdeftangect}
X_{P} \big( f(g_{1}, g_{2}, ..., g_{k}) \big) = \sum_{i=1}^{k} \Big( \frac{\partial f}{\partial g_{i}} \Big)_{P} X_{P}(g_{i})
\end{equation}
\end{definition}
$\bullet$ Thus, according to the above definition, a tangent vector at a point $P$ of a manifold $\mathcal{V}_{n}$ is a linear functional of $D^{r}(\mathcal{V}_{n}, P)$ satisfying the chain rule of differentiation \equref{abstrdeftangect}.

$\bullet$ Note that in the above definition we use the notation
\begin{equation} \eqlabel{differonmanif}
\Big( \frac{\partial f(g_{1}, g_{2}, ..., g_{k})}{\partial g_{i}} \Big)_{P} \stackrel{def.}{\equiv} \Big( \frac{\partial f \big( g_{1} \circ \phi^{-1}, g_{2} \circ \phi^{-1}, ..., g_{k} \circ \phi^{-1} \big)}{ \partial \big( g_{i} \circ \phi^{-1} \big)} \Big)_{\phi(P)}
\end{equation}

Let us now apply \equref{abstrdeftangect} of \deref{tangvect} for the case $\underline{k=2}$: Let $g_{1}, g_{2} \in D^{r}(\mathcal{V}_{n}, P)$ and let $f(g_{1}, g_{2}) = g_{1}g_{2}$. Applying the above we get
\begin{equation} \eqlabel{Leibniz}
\begin{array}{c}
X_{P}(f(g_{1}, g_{2})) = \sum_{i=1}^{2} \Big( \frac{\partial f}{\partial g_{i}} \Big)_{P} X_{P}(g_{i}) =
\Big( \frac{\partial (g_{1}g_{2})}{\partial g_{1}} \Big)_{P}X_{P}(g_{1}) + \Big( \frac{\partial (g_{1}g_{2})}{\partial g_{2}} \Big)_{P}X_{P}(g_{2}) \Rightarrow  \\
   \\
\Rightarrow X_{P}(g_{1}g_{2}) = X_{P}(g_{1}) g_{2}(P) + X_{P}(g_{2}) g_{1}(P)
\end{array}
\end{equation}
We have thus shown the following
\begin{corollary}
Any tangent vector of $\mathcal{V}_{n}$ at the point $P$ satisfies the Leibniz rule. Consequently, any tangent vector of $\mathcal{V}_{n}$ at the point $P$ constitutes a differentiation on the $\mathbb{R}$-algebra $D^{r}(\mathcal{V}_{n}, P)$.
\end{corollary}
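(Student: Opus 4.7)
The plan is straightforward since most of the content is essentially already displayed in equation \equref{Leibniz}; what remains is to present it cleanly as the proof of both claims in the corollary.

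First, for the Leibniz rule itself, I would simply observe that this is a direct specialization of the defining chain-rule property \equref{abstrdeftangect} of a tangent vector to the case $k=2$. Given $g_{1}, g_{2} \in D^{r}(\mathcal{V}_{n}, P)$, I take the smooth function $f(y_{1}, y_{2}) = y_{1} y_{2}$, which is certainly $\mathcal{C}^{r}$-differentiable in its two real arguments. Applying the convention \equref{differonmanif}, the partial derivatives reduce to ordinary Euclidean partial derivatives through a chart, giving $(\partial f / \partial g_{1})_{P} = g_{2}(P)$ and $(\partial f / \partial g_{2})_{P} = g_{1}(P)$. Substituting into \equref{abstrdeftangect} yields exactly the Leibniz identity displayed in \equref{Leibniz}.

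Second, for the assertion that $X_{P}$ is a differentiation on the $\mathbb{R}$-algebra $D^{r}(\mathcal{V}_{n}, P)$, I would recall from \reref{Ralgebra} that this space is endowed with the pointwise $\mathbb{R}$-algebra structure. By definition, a derivation (differentiation) on an $\mathbb{R}$-algebra is an $\mathbb{R}$-linear map satisfying the Leibniz rule with respect to the algebra multiplication. The $\mathbb{R}$-linearity is built into \deref{tangvect} (a tangent vector is required to be a linear map $D^{r}(\mathcal{V}_{n}, P) \to \mathbb{R}$), and the Leibniz rule has just been established. Combining these two facts gives the conclusion.

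The main (and really only) obstacle is the notational bookkeeping of \equref{differonmanif}: one must be comfortable with the fact that partial derivatives of an expression like $f(g_{1}, g_{2})$, where the $g_{i}$ are functions on a manifold, are defined by pulling back through a chart $\phi^{-1}$ and then differentiating in the usual Euclidean sense. Once this is accepted, the entire proof collapses to a one-line substitution, and no further analytic work is required. I would therefore keep the write-up short: state the specialization, compute the two partial derivatives, invoke \equref{abstrdeftangect}, and conclude by combining linearity from \deref{tangvect} with the just-proved Leibniz rule.
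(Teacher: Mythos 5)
Your proposal is correct and follows essentially the same route as the paper: the text derives \equref{Leibniz} precisely by specializing \equref{abstrdeftangect} to $k=2$ with $f(g_{1},g_{2})=g_{1}g_{2}$ and reading off the two partial derivatives as $g_{2}(P)$ and $g_{1}(P)$. Your additional remark that linearity from \deref{tangvect} together with the Leibniz rule yields the derivation property on the $\mathbb{R}$-algebra of \reref{Ralgebra} is exactly the (implicit) second half of the paper's argument, just stated more explicitly.
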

%\begin{scriptsize}
\begin{remark}
We have already shown that \underline{\equref{abstrdeftangect} implies \equref{Leibniz}}. It can be shown that \underline{the inverse implication is valid only in the case that the manifold $\mathcal{V}_{n}$ is $\mathcal{C}^{\infty}$ or $\mathcal{C}^{\omega}$}
\end{remark}
%\end{scriptsize}
Let us now denote by $T_{P}(\mathcal{V}_{n})$ the set of all tangent vectors of the manifold $\mathcal{V}_{n}$ at the point $P$. Based on the above it is easy to put a vector space structure on $T_{P}(\mathcal{V}_{n})$: If $X_{P}, Y_{P}$ are any elements of $T_{P}(\mathcal{V}_{n})$ then we define addition and scalar multiplication by
\begin{itemize}
\item[$\mathbf{-}$] $(X_{P} + Y_{P})(f) = X_{P}(f) + Y_{P}(f)$
\item[$\mathbf{-}$] $(\lambda X_{P})(f) = \lambda X_{P}(f)$
\end{itemize}
for any real number $\lambda$ and any $f \in D^{r}(\mathcal{V}_{n}, P)$. It is easy to show that $T_{P}(\mathcal{V}_{n})$ equipped with the above operations becomes a vector space.
\begin{definition}[\textbf{Tangent space} at a point $P$ of the manifold]
The real vector space $\mathbf{T_{P}(\mathcal{V}_{n})}$ will be called the \textbf{tangent space of the manifold at the point} $\mathbf{P \in \mathcal{V}_{n}}$.
\end{definition}

\paragraph{$\blacksquare$ Construction of basis in the tangent space $T_{P}(\mathcal{V}_{n})$:}

Let us now proceed in computing a basis of the vector space $T_{P}(\mathcal{V}_{n})$:

$\blacktriangleright$ In order to compute a basis of $T_{P}(\mathcal{V}_{n})$, let us choose a random vector $X_{P} \in T_{P}(\mathcal{V}_{n})$. Thus $X_{P}: D^{r}(\mathcal{V}_{n}, P) \rightarrow \mathbb{R}$ is a linear functional satisfying the chain rule of differentiation \equref{abstrdeftangect}. In order to express $X_{P}$ as a linear combination of other vectors let us choose, in place of the arbitrary functions $g_{i}$ of the \deref{tangvect} the \emph{coordinate projection functions} $g_{i} = x^{i} \equiv \pi^{i} \circ \phi$ with respect to some chart $(U_{P}, \phi)$ which contains the point $P$. It is obvious at this point, that we could have used any other chart $(V_{P}, \psi)$ containing the point $P$ and its corresponding coordinate projection functions $x^{' i} \equiv \pi^{i} \circ \psi$ (in the place of the $g_{i}$ functions) These functions are visualized in the figure that follows:
\begin{center}
\includegraphics[angle=270, scale=0.45]{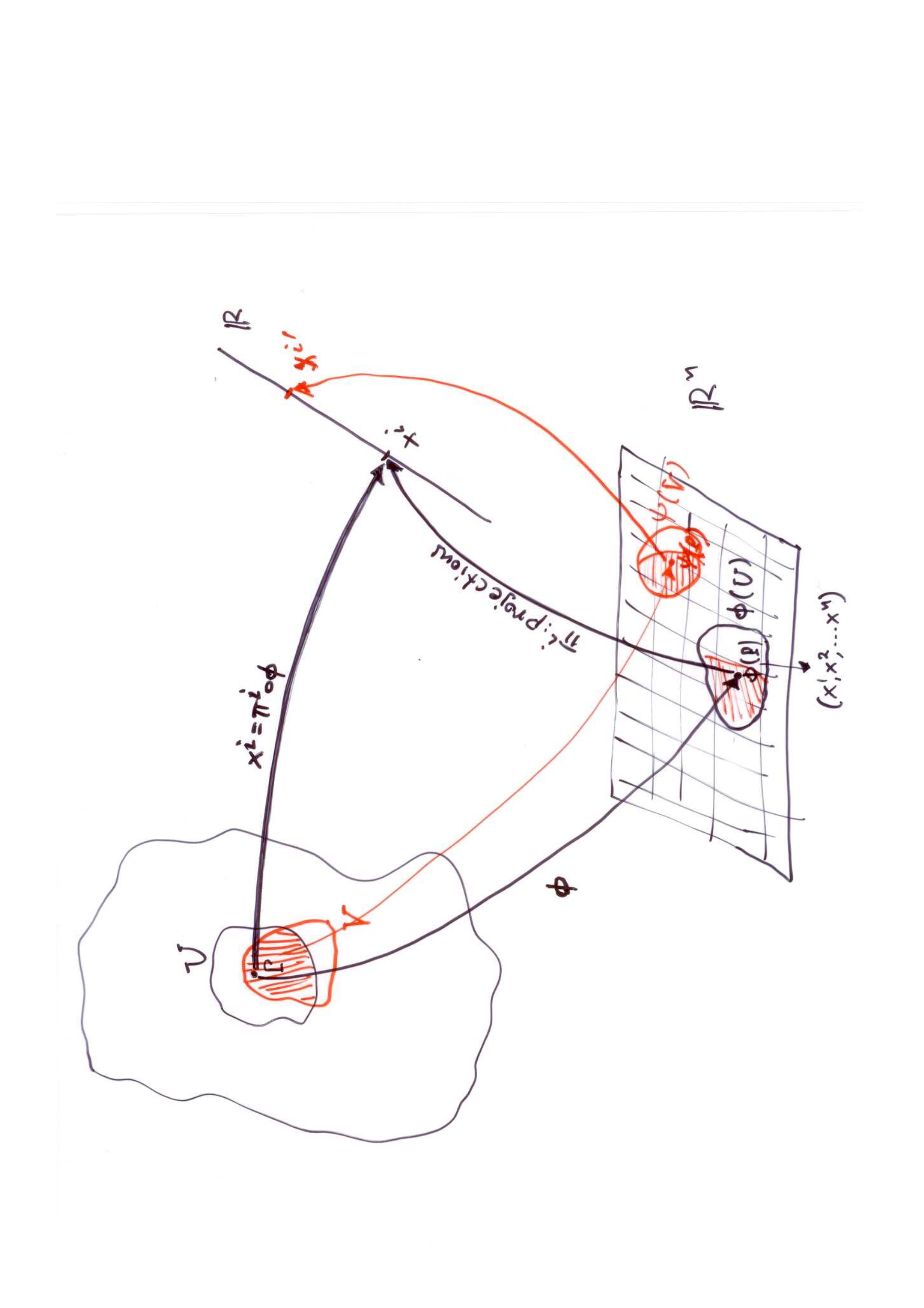}
\end{center}
Now we get
\begin{equation} \eqlabel{Xpaslinearcomb}
X_{P} \big( f(x^{1}, x^{2}, ..., x^{k}) \big) = \sum_{i=1}^{n} \Big( \frac{\partial f}{\partial x^{i}} \Big)_{P} X_{P}(x^{i}) = \sum_{i=1}^{n} \Big( \frac{\partial f}{\partial x^{i}} \Big)_{P} X_{P}^{i}
\end{equation}
where $X_{P}^{i} \equiv X_{P}(x^{i}) \in \mathbb{R}$ (by definition).

$\blacktriangleright$ Next we consider the linear mappings $e_{i} \equiv \frac{\partial}{\partial x^{i}} \big|_{P}: D^{r}(\mathcal{V}_{n}, P) \rightarrow \mathbb{R}$ (for $i = 1, 2, ..., n$) defined by
\begin{equation} \eqlabel{basisofTpwithresptoachart}
e_{i}(f) \equiv \frac{\partial}{\partial x^{i}} \Big|_{P}(f) = \Big( \frac{\partial f}{\partial x^{i}} \Big)_{P}
\end{equation}
for any $f \in D^{r}(\mathcal{V}_{n}, P)$. We immediately get the following
\begin{lemma} \lelabel{transferofdiffer}
We have the following
$$
e_{i}(f) \equiv \frac{\partial}{\partial x^{i}} \Big|_{P}(f) = \Big( \frac{\partial f}{\partial x^{i}} \Big)_{P} = D_{i}(f \circ \phi^{-1}) \Big|_{\phi(P)}
$$
where $D_{i}$ denotes the usual partial derivative with respect to the $x^{i}$ coordinate of $\mathbb{R}^{n}$.
\end{lemma}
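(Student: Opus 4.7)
The statement is essentially a bookkeeping identity that unpacks the notational convention for partial derivatives on a manifold, as introduced in equation (\ref{eq:differonmanif}). The plan is to take the definition of $\left(\partial f/\partial x^i\right)_P$ literally, substitute the specific coordinate projection functions $x^i=\pi^i\circ\phi$, simplify, and recognize that what results is just the ordinary partial derivative of the coordinate representation $f\circ\phi^{-1}$ at the point $\phi(P)\in\mathbb{R}^n$.

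More concretely, I would proceed as follows. First, fix the chart $(U_P,\phi)$ that underlies the definition of $e_i$, so that $x^i=\pi^i\circ\phi$ for $i=1,\ldots,n$. The convention (\ref{eq:differonmanif}) applied to the functions $g_i=x^i$ and to any $f\in D^r(\mathcal{V}_n,P)$ gives
\[
\Big(\frac{\partial f}{\partial x^i}\Big)_P \;=\; \Big(\frac{\partial\, f(x^1\circ\phi^{-1},\ldots,x^n\circ\phi^{-1})}{\partial\,(x^i\circ\phi^{-1})}\Big)_{\phi(P)}.
\]
The next step is the key observation: since $x^j=\pi^j\circ\phi$, we have $x^j\circ\phi^{-1}=\pi^j$ on $\phi(U_P)\subseteq\mathbb{R}^n$. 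Thus the numerator inside the bracket is precisely the function $f\circ\phi^{-1}:\phi(U_P)\to\mathbb{R}$ written in terms of the standard coordinates $(\pi^1,\ldots,\pi^n)$ of $\mathbb{R}^n$, and the variable of differentiation is the $i$-th standard coordinate. Therefore differentiation with respect to $x^i\circ\phi^{-1}=\pi^i$ at the point $\phi(P)$ is, by definition, the usual $i$-th partial derivative $D_i$ of the Euclidean function $f\circ\phi^{-1}$ evaluated at $\phi(P)$.

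Combining these two steps yields
\[
e_i(f)\;\equiv\;\frac{\partial}{\partial x^i}\Big|_P(f)\;=\;\Big(\frac{\partial f}{\partial x^i}\Big)_P\;=\;D_i(f\circ\phi^{-1})\Big|_{\phi(P)},
\]
which is the claimed identity. There is no real obstacle here: the only subtle point is making sure one is honest about what the symbol $\partial/\partial x^i$ means on an abstract manifold (where it is not an a priori meaningful operation), namely the one and only thing it can mean, which is the partial derivative of the chart-based representation of $f$. Once the convention (\ref{eq:differonmanif}) is taken seriously, the lemma is immediate and simply serves to license the later use of the symbols $\partial/\partial x^i|_P$ as genuine derivations pulled back from Euclidean calculus via the chart $\phi$.
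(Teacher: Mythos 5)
Your proof is correct and follows essentially the same route as the paper, which simply declares the lemma ``a straightforward application of equation \equref{differonmanif}''; you have merely spelled out that application, with the useful explicit observation that $x^j\circ\phi^{-1}=\pi^j$ so that differentiation with respect to $x^i\circ\phi^{-1}$ at $\phi(P)$ is exactly the Euclidean partial derivative $D_i$ of $f\circ\phi^{-1}$.
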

\begin{proof}
This is a straightforward application of equation \equref{differonmanif}.
\end{proof}
\begin{lemma}
The $e_{i} \equiv \frac{\partial}{\partial x^{i}} \big|_{P}$ linear mappings (for $i = 1, 2, ..., n$) defined by \equref{basisofTpwithresptoachart} are tangent vectors of the manifold $\mathcal{V}_{n}$ at the point $P \in \mathcal{V}_{n}$ i.e. they are elements of $T_{P}(\mathcal{V}_{n})$.
\end{lemma}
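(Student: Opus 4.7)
The plan is to verify directly that each $e_i$ satisfies the two requirements of \deref{tangvect}: linearity as a functional on $D^{r}(\mathcal{V}_{n}, P)$, and the chain-rule condition \equref{abstrdeftangect}. The guiding idea is that \leref{transferofdiffer} already transports everything back to the usual Euclidean setting $D_i$ on $\mathbb{R}^n$, so both properties can be inherited from the corresponding properties of ordinary partial derivatives on $\mathbb{R}^n$ via the chart $\phi$.

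First I would dispatch linearity. For any $f, g \in D^{r}(\mathcal{V}_{n}, P)$ and $\lambda \in \mathbb{R}$, \leref{transferofdiffer} gives
\begin{equation*}
e_i(\lambda f + g) = D_i\bigl((\lambda f + g)\circ \phi^{-1}\bigr)\Big|_{\phi(P)} = D_i\bigl(\lambda (f\circ \phi^{-1}) + g\circ \phi^{-1}\bigr)\Big|_{\phi(P)},
\end{equation*}
and the linearity of the ordinary partial derivative $D_i$ at the point $\phi(P) \in \mathbb{R}^n$ immediately yields $\lambda e_i(f) + e_i(g)$. Nothing subtle happens here; the only thing to check is that the sum and scalar multiple of $\mathcal{C}^{r}$-differentiable functions at $P$ is still $\mathcal{C}^{r}$-differentiable at $P$, which is part of \reref{Ralgebra}.

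The main content is the chain-rule condition \equref{abstrdeftangect}. Given $g_1, \ldots, g_k \in D^{r}(\mathcal{V}_{n}, P)$ and a $\mathcal{C}^{r}$ function $f$ of the $g_j$, I would again pull back to $\mathbb{R}^n$ using \leref{transferofdiffer}, so that
\begin{equation*}
e_i\bigl(f(g_1,\ldots,g_k)\bigr) = D_i\Bigl(f(g_1,\ldots,g_k)\circ \phi^{-1}\Bigr)\Big|_{\phi(P)}.
\end{equation*}
The key observation is that $f(g_1,\ldots,g_k)\circ \phi^{-1}$ equals $f$ composed with the $\mathbb{R}^n \to \mathbb{R}^k$ map whose components are the $g_j \circ \phi^{-1}$. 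Then the ordinary multivariable chain rule on an open subset of $\mathbb{R}^n$ gives
\begin{equation*}
D_i\bigl(f(g_1,\ldots,g_k)\circ \phi^{-1}\bigr)\big|_{\phi(P)} = \sum_{j=1}^{k} \frac{\partial f}{\partial (g_j\circ \phi^{-1})}\bigg|_{\phi(P)} \cdot D_i(g_j \circ \phi^{-1})\big|_{\phi(P)}.
\end{equation*}
Rewriting the first factor using the notational convention \equref{differonmanif} and the second factor via \leref{transferofdiffer} produces exactly $\sum_j \bigl(\partial f / \partial g_j\bigr)_{P}\, e_i(g_j)$, which is \equref{abstrdeftangect}.

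The only obstacle worth flagging is not a mathematical one but a bookkeeping one: one must be careful that the notation \equref{differonmanif} is precisely what makes the two sides of the chain rule line up, and that the chain rule is being applied on a genuinely open subset of $\mathbb{R}^n$ around $\phi(P)$ (which is guaranteed because $(U_{P},\phi)$ is a chart). Once that identification is made, linearity and the chain-rule condition both hold, so $e_i \in T_{P}(\mathcal{V}_{n})$ for each $i = 1, 2, \ldots, n$.
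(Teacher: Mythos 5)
Your proposal is correct and follows essentially the same route as the paper: the paper's proof is exactly the one-line computation $e_{i}(f(g_{1},\ldots,g_{k})) = \bigl(\partial f(g_{1},\ldots,g_{k})/\partial x^{i}\bigr)_{P} = \sum_{r}\bigl(\partial f/\partial g_{r}\bigr)_{P}\, e_{i}(g_{r})$, i.e.\ the ordinary multivariable chain rule pulled back through the chart via the convention \equref{differonmanif}. You merely make the pullback to $\mathbb{R}^{n}$ via \leref{transferofdiffer} explicit and add the (easy) linearity check that the paper leaves implicit.
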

\begin{proof}
Let $g_{1}, g_{2}, ..., g_{k} \in D^{r}(\mathcal{V}_{n}, P)$ and let $f$ be $\mathcal{C}^{r}$-function of the $g_{i}$ $(i = 1, 2, ..., k)$ functions. Then, applying \equref{basisofTpwithresptoachart} we get
\begin{small}
\begin{equation}
e_{i} \Big( f(g_{1}, g_{2}, ..., g_{k}) \Big) = \Big( \frac{\partial f(g_{1}, g_{2}, ..., g_{k})}{\partial x^{i}} \Big)_{P} = \sum_{r=1}^{k} \Big( \frac{\partial f}{\partial g_{r}} \Big)_{P} \Big( \frac{\partial g_{r}}{\partial x^{i}} \Big)_{P} = \sum_{r=1}^{k} \Big( \frac{\partial f}{\partial g_{r}} \Big)_{P} e_{i}(g_{r})
\end{equation}
\end{small}
which proves that the $e_{i}$ functions defined by \equref{basisofTpwithresptoachart} are tangent vectors of the manifold $\mathcal{V}_{n}$ at the point $P$, in the sense of the \deref{tangvect}.
\end{proof}

$\blacktriangleright$ Combining equations \equref{Xpaslinearcomb}, \equref{basisofTpwithresptoachart} we arrive at the following relation ($X_{P}$ is an arbitrary vector of $T_{P}(\mathcal{V}_{n})$ and $X_{P}^{i} \equiv X_{P}(x^{i}) \in \mathbb{R}$ by def.)
\begin{equation}
X_{P} \big( f(x^{1}, x^{2}, ..., x^{n}) \big) = \sum_{i=1}^{n} X_{P}^{i} e_{i}\big( f(x^{1}, x^{2}, ..., x^{n}) \big) \equiv \sum_{i=1}^{n} X_{P}^{i} \frac{\partial}{\partial x^{i}} \Big|_{P} \big( f(x^{1}, x^{2}, ..., x^{n}) \big)
\end{equation}
or:
\begin{equation} \eqlabel{lincombbasveclccoordsyst}
X_{P} = \sum_{i=1}^{n} X_{P}^{i} e_{i} \equiv \sum_{i=1}^{n} X_{P}^{i} \frac{\partial}{\partial x^{i}} \Big|_{P}
\end{equation}
which proves that \underline{the $e_{i} \equiv \frac{\partial}{\partial x^{i}} \big|_{P}$ span the space $T_{P}(\mathcal{V}_{n})$} ($i = 1, 2, ..., n$).

$\blacktriangleright$ Finally, we are now going to prove that the $e_{i} \equiv \frac{\partial}{\partial x^{i}} \big|_{P}$ vectors are linearly independent in $T_{P}(\mathcal{V}_{n})$. To see this let us choose the real numbers $\lambda^{i}$ ($i = 1, 2, ..., n$) such that $\lambda^{i} e_{i} = 0$. This is equivalent to saying that the linear map $\lambda^{i} e_{i}: D^{r}(\mathcal{V}_{n}, P) \rightarrow \mathbb{R}$ is identically zero \footnote{i.e. $\lambda^{i} e_{i}$ is the zero map} for any $f \in D^{r}(\mathcal{V}_{n}, P)$. Thus, if in place of $f$ we place the coordinate projection functions $x^{j} \equiv \pi^{j} \circ \phi$ ($j = 1, 2, ..., n$) with respect to some chart $(U_{P}, \phi)$ which contains the point $P$, we get \footnote{The Einstein summation convention is employed in \equref{linindepe}}
\begin{equation} \eqlabel{linindepe}
\lambda^{i} e_{i} (x^{j}) \equiv \lambda^{i} \frac{\partial}{\partial x^{i}} \Big|_{P} (x^{j}) = \lambda^{i} \Big( \frac{\partial x^{j}}{\partial x^{i}} \Big)_{P} = \lambda^{i} \delta_{i}^{j} = \lambda^{j} = 0
\end{equation}
for any value $j = 1, 2, ..., n$, because (applying \equref{differonmanif} or equivalently applying \leref{transferofdiffer})
\begin{small}
$$
\frac{\partial}{\partial x^{i}} \Big|_{P} (x^{j}) = \Big( \frac{\partial x^{j}}{\partial x^{i}} \Big)_{P} = D_{i}(x^{j} \circ \phi^{-1}) \Big|_{\phi(P)} = D_{i}(\pi^{j} \circ \phi \circ \phi^{-1}) \Big|_{\phi(P)} = D_{i}(\pi^{j}) \Big|_{\phi(P)} = \delta_{i}^{j}
$$ \end{small}
where $D_{i}$ denotes the usual partial derivative with respect to the $x^{i}$ coordinate of $\mathbb{R}^{n}$. We thus have shown that \underline{the $e_{i} \equiv \frac{\partial}{\partial x^{i}} \big|_{P}$ vectors are linearly independent in $T_{P}(\mathcal{V}_{n})$} ($i = 1, 2, ..., n$).

Finally, we have proved the following theorem
\begin{theorem}
If $\mathcal{V}_{n}$ is an $n$-dimensional, $\mathcal{C}^{r}$-differentiable manifold and $P \in \mathcal{V}_{n}$ any point on it, then the set $T_{P}(\mathcal{V}_{n})$ of all tangent vectors of the manifold $\mathcal{V}_{n}$ at the point $P$ becomes a (real) vector space whose dimension is given by
$$
dim T_{P}(\mathcal{V}_{n}) = dim \mathcal{V}_{n} = n
$$
Furthermore, if $(U_{P}, \phi)$ is a local chart containing $P$ (i.e. a local coordinate system in a neighborhood of $P$) then
\begin{equation} \eqlabel{basistangsp}
\Big\{ \frac{\partial}{\partial x^{1}} \Big|_{P}, \frac{\partial}{\partial x^{2}} \Big|_{P}, ..., \frac{\partial}{\partial x^{n}} \Big|_{P} \Big\} \equiv \Big\{ \frac{\partial}{\partial x^{1}}, \frac{\partial}{\partial x^{2}}, ..., \frac{\partial}{\partial x^{n}} \Big\}_{P}
\end{equation}
is the \textbf{natural basis of the tangent space} $\mathbf{T_{P}(\mathcal{V}_{n})}$ at the point $P$ \textbf{with respect to the local chart} $\mathbf{(U_{P}, \phi)}$.
\end{theorem}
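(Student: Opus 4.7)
The plan is to verify the two assertions of the theorem in turn: first that $T_P(\mathcal{V}_n)$ carries a well-defined real vector space structure, and then that the family $\bigl\{\partial/\partial x^i\big|_P\bigr\}_{i=1}^n$ is a basis, from which the dimension count follows at once.

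For the vector space structure, I would observe that the pointwise operations $(X_P+Y_P)(f)=X_P(f)+Y_P(f)$ and $(\lambda X_P)(f)=\lambda X_P(f)$ introduced just before the theorem inherit all axioms from $\mathbb{R}$; the only genuine thing to check is closure, i.e. that $X_P+Y_P$ and $\lambda X_P$ still satisfy the chain-rule condition \equref{abstrdeftangect}. This is immediate because the coefficients $\bigl(\partial f/\partial g_i\bigr)_P$ are real scalars and distribute over the sum and pull through the scalar multiple.

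Next I would exhibit the candidate basis. Fix a chart $(U_P,\phi)$ and denote the coordinate projections by $x^i=\pi^i\circ\phi$. The operators $e_i=\partial/\partial x^i\big|_P$ defined in \equref{basisofTpwithresptoachart} have already been shown to lie in $T_P(\mathcal{V}_n)$ by the lemma immediately above the theorem, so only spanning and linear independence remain. For spanning, pick an arbitrary $X_P\in T_P(\mathcal{V}_n)$ and any $f\in D^r(\mathcal{V}_n,P)$, write $f$ locally as a $\mathcal{C}^r$-function of the coordinate projections via $f=(f\circ\phi^{-1})(x^1,\ldots,x^n)$, and apply \equref{abstrdeftangect} with the specific choice $g_i=x^i$. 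This yields the identity $X_P(f)=\sum_{i=1}^n X_P(x^i)\,e_i(f)$, so setting $X_P^i:=X_P(x^i)\in\mathbb{R}$ gives exactly \equref{lincombbasveclccoordsyst} and shows $X_P=\sum_i X_P^i e_i$.

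For linear independence, suppose $\sum_{i=1}^n\lambda^i e_i=0$ in $T_P(\mathcal{V}_n)$; testing this relation on the coordinate projection $x^j$ and invoking \leref{transferofdiffer} reduces the evaluation to $D_i(\pi^j)\big|_{\phi(P)}=\delta_i^j$, forcing $\lambda^j=0$ for every $j=1,\ldots,n$. The conclusion $\dim T_P(\mathcal{V}_n)=n$ then follows. The delicate step, on which I would dwell most carefully, is the spanning argument: the defining chain rule \equref{abstrdeftangect} is stated for \emph{any} finite collection $(g_1,\ldots,g_k)$ and any $\mathcal{C}^r$-function $f$ of them, and the whole coordinate expansion rests on the legitimate application of this hypothesis to the particular tuple $(x^1,\ldots,x^n)$. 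This in turn requires that every $f\in D^r(\mathcal{V}_n,P)$ genuinely be a $\mathcal{C}^r$-function of the coordinates near $P$, which is exactly the content of \deref{smooth} for real-valued functions; so the required fact is built into the definition rather than needing a separate Taylor-type lemma, though for the $\mathcal{C}^r$ case with finite $r$ the remark following the corollary already warned that the equivalence with the Leibniz form of the definition is more subtle.
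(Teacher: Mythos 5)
Your proposal is correct and follows essentially the same route as the paper: pointwise operations give the vector space structure, spanning is obtained by applying the chain-rule axiom \equref{abstrdeftangect} to the coordinate projections $x^i=\pi^i\circ\phi$, and linear independence follows by evaluating $\sum_i\lambda^ie_i$ on $x^j$ to get $\lambda^j=\delta_i^j\lambda^i=0$. Your additional observation---that writing $f=(f\circ\phi^{-1})(x^1,\ldots,x^n)$ with $f\circ\phi^{-1}$ of class $\mathcal{C}^r$ is what legitimizes the application of the chain-rule axiom, and that this is built into \deref{smooth}---is a point the paper passes over silently, and it is correctly resolved.
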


\paragraph{$\blacksquare$ Change of basis in the tangent space $T_{P}(\mathcal{V}_{n})$:}

If in place of the local chart $(U_{P}, \phi)$ we use another \footnote{Both local charts $(U_{P}, \phi)$, $(V_{P}, \psi)$ must contain $P$ in order for our discussion to be meaningful} local chart $(V_{P}, \psi)$ whose local coordinate system around $P$ is denoted by $(x^{1 '}, x^{2 '}, ..., x^{n '})$ then we would get another basis od the tangent space $T_{P}(\mathcal{V}_{n})$ given by
\begin{equation} \eqlabel{altbasistangsp}
\Big\{ \frac{\partial}{\partial x^{1 '}} \Big|_{P}, \frac{\partial}{\partial x^{2 '}} \Big|_{P}, ..., \frac{\partial}{\partial x^{n '}} \Big|_{P} \Big\} \equiv \Big\{ \frac{\partial}{\partial x^{1 '}}, \frac{\partial}{\partial x^{2 '}}, ..., \frac{\partial}{\partial x^{n '}} \Big\}_{P}
\end{equation}
Our next task will be to determine the transformation law connecting the two different bases \equref{basistangsp}, \equref{altbasistangsp}: Let us consider a linear transformation between these two bases, given by \footnote{Einstein summation convention will be employed from now on without further mentioning so}
\begin{equation} \eqlabel{transflaw}
\frac{\partial}{\partial x^{i '}} \Big|_{P} = \lambda_{i '}^{j} \frac{\partial}{\partial x^{j}} \Big|_{P}
\end{equation}
where $\lambda_{i '}^{j}$ are real numbers to be determined. Since both the lhs and the rhs of \equref{transflaw} are linear maps from $D^{r}(\mathcal{V}_{n}, P)$ to $\mathbb{R}$, \equref{transflaw} must be valid for any $f \in D^{r}(\mathcal{V}_{n}, P)$, thus we can choose $f = x^{k} = \pi^{k} \circ \phi$ to be the \emph{coordinate projection functions} with respect to the chart $(U_{P}, \phi)$. Applying \equref{transflaw} to the $x^{k}$ functions ($k = 1, 2, ..., n$) we get
\begin{small}
$$
\frac{\partial}{\partial x^{i '}} \Big|_{P}(x^{k}) = \lambda_{i '}^{j} \frac{\partial}{\partial x^{j}} \Big|_{P}(x^{k}) \Leftrightarrow
\Big( \frac{\partial x^{k}}{\partial x^{i '}} \Big)_{P} = \lambda_{i '}^{j} \Big( \frac{\partial x^{k}}{\partial x^{j}} \Big)_{P} = \lambda_{i '}^{j} \delta_{j}^{k} = \lambda_{i '}^{k} \Rightarrow
$$
\end{small}
\begin{equation} \eqlabel{lincoef}
\Rightarrow \lambda_{i '}^{k} = \Big( \frac{\partial x^{k}}{\partial x^{i '}} \Big)_{P} \in \mathbb{R}
\end{equation}
Combining \equref{transflaw}, \equref{lincoef} we arrive at the \emph{transformation law} (\emph{change of basis})
\begin{equation}  \eqlabel{trasflawfin}
\boxed{
\frac{\partial}{\partial x^{i '}} \Big|_{P} = \Big( \frac{\partial x^{j}}{\partial x^{i '}} \Big)_{P} \  \frac{\partial}{\partial x^{j}} \Big|_{P}
}
\end{equation}
between the two different natural bases, of the tangent space $T_{P}(\mathcal{V}_{n})$ at the point $P$, with respect to the two different local charts $(U_{P}, \phi)$, $ \ (V_{P}, \psi)$.
\begin{remark}
The reader should note at this point that the (real) numerical coefficients \equref{lincoef} of the linear transformation \equref{trasflawfin} (i.e. the change of basis law in the tangent space) are given by
\begin{equation} \eqlabel{Jacobofchangofbas}
\lambda_{i '}^{j} = \Big( \frac{\partial x^{j}}{\partial x^{i '}} \Big)_{P} = D_{i '} (x^{j} \circ \psi^{-1}) \Big|_{\psi(P)} = D_{i '} \big( \pi^{j} \circ ( \phi \circ \psi^{-1}) \big) \Big|_{\psi(P)}
\end{equation}
i.e. they are exactly the entries of the Jacobian matrix of the transition function from the local coordinate system $(V_{P}, \psi)$ to the local coordinate system $(U_{P}, \phi)$. \\
In the above, $D_{i '}$ denotes the usual partial derivative with respect to the $x^{i '}$ coordinate of $\mathbb{R}^{n}$. (\equref{Jacobofchangofbas} is obtained by applying \equref{differonmanif} or equivalently applying \leref{transferofdiffer})
\end{remark}

\textbf{\underline{Exercises:}}   \\
$\mathbf{1.}$ Show that the inverse transformation of \equref{trasflawfin} is given by $\frac{\partial}{\partial x^{k}} \big|_{P} = \big( \frac{\partial x^{i '}}{\partial x^{k}} \big)_{P} \  \frac{\partial}{\partial x^{i '}} \big|_{P}$
(all indices sum up to the dimension $n$ of the manifold)   \\
$\mathbf{2.}$ Choose any point on the sphere $\mathcal{S}^{2}$ belonging to two different local charts of the stereographic projection. Find the tangent space, two bases (with respect to the two different local charts), and the transformation laws (between these two bases)   \\
$\mathbf{3.}$ Prove that the abstract (algebraic) definition of the tangent vector presented here and the geometrical definition of the tangent vector \footnote{By geometrical definition we imply the definition of the tangent vector as an equivalence class of curves on the manifold (see the bibliography)} are equivalent. In other words establish an explicit isomorphism between the tangent (vector) space consisting of linear functionals satisfying the chain rule of differentiation \equref{abstrdeftangect} and the tangent (vector) space consisting of suitable equivalence classes of curves on the manifold.

%\subsubsection{Geometric definition}

\subsection{The tangent bundle and its structure as a manifold}

We show in the previous subsection, that in any point $P$ of a differentiable manifold $\mathcal{V}_{n}$ one can attach a collection $T_{P}(\mathcal{V}_{n})$ of tangent vectors and that this collection $T_{P}(\mathcal{V}_{n})$ has the structure of a real vector space. We determined ``natural'' bases for these vector spaces and laws of transformtion between these bases.

Our next step will be to consider the collection of all these vector spaces together. This naturally leads us to the notion described in the following definition
\begin{definition}[\textbf{Tangent Bundle} $\mathbf{T(\mathcal{V}_{n})}$ of a manifold $\mathcal{V}_{n}$]
Let $\mathcal{V}_{n}$ be an $n$-dimensional, $\mathcal{C}^{r}$-differentiable manifold. We define the tangent bundle $T(\mathcal{V}_{n})$ of the manifold to be
\begin{equation}
\mathbf{T(\mathcal{V}_{n}) = \bigcup_{P \in \mathcal{V}_{n}} T_{P}(\mathcal{V}_{n})}
\end{equation}
the union of all tangent spaces $T_{P}(\mathcal{V}_{n})$ at all points $P$ of the manifold
\end{definition}
We now have the following theorem which establishes the structure of a differentiable manifold on the tangent bundle
\begin{theorem}
Let $\mathcal{V}_{n}$ be an $n$-dimensional, $\mathcal{C}^{r}$-differentiable manifold. Its tangent bundle $T(\mathcal{V}_{n})$ is a $2n$-dimensional, $\mathcal{C}^{r}$-differentiable manifold.
\end{theorem}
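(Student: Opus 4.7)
The plan is to construct an atlas on $T(\mathcal{V}_n)$ directly from a given $\mathcal{C}^r$-atlas $\{(U_\alpha, \phi_\alpha)\}_{\alpha \in I}$ of $\mathcal{V}_n$. For each chart with local coordinates $(x^1_\alpha, \ldots, x^n_\alpha)$, I would define $\tilde{U}_\alpha = \bigcup_{P \in U_\alpha} T_P(\mathcal{V}_n) \subseteq T(\mathcal{V}_n)$, together with the map $\tilde{\phi}_\alpha : \tilde{U}_\alpha \to \phi_\alpha(U_\alpha) \times \mathbb{R}^n \subseteq \mathbb{R}^{2n}$ that sends a tangent vector $X_P = \sum_{i=1}^{n} X_P^i \frac{\partial}{\partial x^i_\alpha}\big|_P$ with $P \in U_\alpha$ to the $2n$-tuple $\big(x^1_\alpha(P), \ldots, x^n_\alpha(P),\, X_P^1, \ldots, X_P^n\big)$, i.e. the coordinates of its foot-point together with its components in the natural basis \equref{basistangsp}. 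Bijectivity of $\tilde{\phi}_\alpha$ onto the open product $\phi_\alpha(U_\alpha) \times \mathbb{R}^n$ is immediate from the fact that $\{\partial/\partial x^i_\alpha|_P\}$ is a basis of $T_P(\mathcal{V}_n)$ for every $P \in U_\alpha$.

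Next, I would equip $T(\mathcal{V}_n)$ with the coarsest topology for which every $\tilde{U}_\alpha$ is open and every $\tilde{\phi}_\alpha$ is a homeomorphism onto its image; concretely, a set $W \subseteq T(\mathcal{V}_n)$ is declared open precisely when $\tilde{\phi}_\alpha(W \cap \tilde{U}_\alpha)$ is open in $\mathbb{R}^{2n}$ for every $\alpha$. Verifying that this is a genuine topology and that it is Hausdorff and connected is the first routine check: two distinct vectors either lie over distinct points of $\mathcal{V}_n$ (separable by the Hausdorff property of the base, pulled up through an appropriate $\tilde{U}_\alpha$) or they lie over the same point and are separated inside the fibre $T_P(\mathcal{V}_n) \cong \mathbb{R}^n$; connectedness follows because each $\tilde{U}_\alpha$ is homeomorphic to the connected set $\phi_\alpha(U_\alpha) \times \mathbb{R}^n$ and the overlaps $\tilde{U}_\alpha \cap \tilde{U}_\beta$ are non-empty whenever $U_\alpha \cap U_\beta$ is, which suffices since $\mathcal{V}_n$ is connected.

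The substantive step is the $\mathcal{C}^r$-compatibility of the charts $(\tilde{U}_\alpha, \tilde{\phi}_\alpha)$. On the overlap $\tilde{U}_\alpha \cap \tilde{U}_\beta$, the base part of the transition is the original $\phi_\beta \circ \phi_\alpha^{-1}$, while the fibre part is dictated by the change-of-basis formula \equref{trasflawfin}: components transform as $X^{\,j}_{(\beta)} = \big(\frac{\partial x^j_\beta}{\partial x^i_\alpha}\big)_P X^{\,i}_{(\alpha)}$. Writing $y = \phi_\beta \circ \phi_\alpha^{-1}(x)$, the transition map on $T(\mathcal{V}_n)$ reads
\[
(x^1, \ldots, x^n, X^1, \ldots, X^n) \longmapsto \Big(y^1(x), \ldots, y^n(x),\ \frac{\partial y^1}{\partial x^i}(x)\, X^i,\ \ldots,\ \frac{\partial y^n}{\partial x^i}(x)\, X^i\Big).
\]
This is jointly differentiable in $(x, X)$ because the $y^j(x)$ are $\mathcal{C}^r$ by hypothesis and the fibre output is \emph{linear} in the $X^i$ with coefficients that are partial derivatives of $\mathcal{C}^r$ functions. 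The inverse has exactly the same shape, with the Jacobian of $\phi_\alpha \circ \phi_\beta^{-1}$ playing the analogous role, so each transition is a diffeomorphism.

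The main subtlety I expect lies precisely in the regularity of those fibre coefficients: partial derivatives of a $\mathcal{C}^r$ function are in general only $\mathcal{C}^{r-1}$, so a fully meticulous statement would conclude that $T(\mathcal{V}_n)$ carries the structure of a $\mathcal{C}^{r-1}$-manifold. For the principal cases $r = \infty$ and $r = \omega$ this loss is invisible and the theorem holds verbatim, which is the setting one has in mind; for finite $r$ the statement should be read with this standard convention. Once this is acknowledged, the remaining items are immediate: the $\tilde{U}_\alpha$ cover $T(\mathcal{V}_n)$ because the $U_\alpha$ cover $\mathcal{V}_n$, and the dimension count $n + n = 2n$ follows from the construction of $\tilde{\phi}_\alpha$, completing the $\mathcal{C}^r$-structure on the tangent bundle.
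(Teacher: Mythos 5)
Your construction is correct, and it is worth noting that the paper actually states this theorem without supplying any proof at all, so there is nothing to compare against: your proposal fills that gap with the standard argument (lift each chart $(U_\alpha,\phi_\alpha)$ to $\tilde U_\alpha=\bigcup_{P\in U_\alpha}T_P(\mathcal{V}_n)$, use foot-point coordinates together with components in the natural basis \equref{basistangsp}, topologize so the lifted charts are homeomorphisms, and read off the fibre part of the transition maps from the change-of-basis law \equref{trasflawfin}). Two remarks. First, your observation about regularity is the genuinely substantive point and is exactly right: the fibre coefficients are entries of the Jacobian of $\phi_\beta\circ\phi_\alpha^{-1}$, hence only $\mathcal{C}^{r-1}$, so for finite $r\geq 1$ the tangent bundle is naturally a $\mathcal{C}^{r-1}$-manifold and the theorem as literally stated is only accurate for $r=\infty$ or $r=\omega$ (or under the usual convention you mention); the paper glosses over this entirely. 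Second, your connectedness argument via chart overlaps requires the images $\phi_\alpha(U_\alpha)$ to be connected (which the paper's use of the word ``domain'' arguably licenses); a cleaner route that avoids this is to note that every fibre $T_P(\mathcal{V}_n)$ is connected and meets the zero section, which is homeomorphic to the connected space $\mathcal{V}_n$, so the union is connected. With these caveats the proof is complete.
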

%\begin{proof}
%\end{proof}

\subsection{(Tangent) Vector fields}

From now on and for the rest of these notes all manifolds will be considered to be $C^{\infty}$ without further mentioning it. We are going to present three different -but equivalent as we shall see- definitions of the notion of a smooth vector field on a manifold.

Let $\mathcal{V}_{n}$ be an $n$-dimensional, ($\mathcal{C}^{\infty}$)-differentiable manifold.
\begin{definition}[\textbf{(Smooth) vector field} on a manifold]  \delabel{def-svf1}
Any map of the form
\begin{equation} \eqlabel{svf1}
\begin{array}{c}
\mathbf{X: \mathcal{V}_{n} \rightarrow T(\mathcal{V}_{n})}  \\
    \\
\mathbf{P \mapsto X(P) \equiv X_{P} \in T_{P}(\mathcal{V}_{n})}
\end{array}
\end{equation}
will be called a \textbf{vector field} on the manifold $\mathcal{V}_{n}$.

According to \equref{lincombbasveclccoordsyst} we have that if $(U_{P}, \phi)$ is a local chart around $P$ (equiv.: a local coordinate system $(x^{1}, x^{2}, ..., x^{n})$ in some neighborhood of $P$) then
$$
X_{P} = \sum_{i=1}^{n} X_{P}^{i} e_{i} \equiv \sum_{i=1}^{n} X_{P}^{i} \frac{\partial}{\partial x^{i}} \Big|_{P}
$$
If the functions $X^{i} \equiv X(x^{i}): U_{P} \rightarrow \mathbb{R}$ whose values at $P \in U_{P}$ are $X_{P}^{i} \equiv X_{P}(x^{i}) \in \mathbb{R}$ are smooth functions then the vector field \equref{svf1} is called a \textbf{smooth vector field} $\mathbf{X}$ on the manifold $\mathcal{V}_{n}$.
\end{definition}

\begin{definition}[\textbf{(Smooth) vector field} on a manifold]  \delabel{def-svf2}
A \textbf{smooth vector field} $\mathbf{X}$ on $\mathcal{V}_{n}$ is a smooth assignment of a tangent vector $X_{P} \in T_{P}(\mathcal{V}_{n}$ for any $P \in \mathcal{V}_{n}$ where smooth is defined to mean that for any $f \in D^{\infty}(\mathcal{N})$ \footnote{see the notation introduced in \equref{algdiffunctonN}} the function
\begin{equation} \eqlabel{svf2}
\begin{array}{c}
\mathbf{Xf: \mathcal{V}_{n} \rightarrow \mathbb{R}} \\
   \\
\mathbf{P \mapsto (Xf)(P) = X_{P}(f)}
\end{array}
\end{equation}
is infinitely differentiable.
\end{definition}
\begin{remark}
Notice that an immediate consequence of the \deref{def-svf2} is that a vector field gives rise to a map
\begin{equation}
\begin{array}{ccccc}
D^{\infty}(\mathcal{N}) \rightarrow D^{\infty}(\mathcal{N}) & & given \ by & & f \mapsto Xf
\end{array}
\end{equation}
The image of $f \in D^{\infty}(\mathcal{N})$ under the above map is $Xf \in D^{\infty}(\mathcal{N})$ and is often known as the \emph{Lie derivative of the function $f$ along the vector field $X$} and is then denoted by \emph{$L_{X}f$} or \emph{$\pounds_{X}f$}
\end{remark}

\begin{definition}[\textbf{(Smooth) vector field} on a manifold]  \delabel{def-svf3}
A \textbf{smooth vector field} $\mathbf{X}$ on $\mathcal{V}_{n}$ is a smooth \footnote{in the sense of \deref{smooth}} map
\begin{equation} \eqlabel{svf3}
\begin{array}{c}
\mathbf{X: \mathcal{V}_{n} \rightarrow T(\mathcal{V}_{n})}  \\
    \\
\mathbf{P \mapsto X(P) \equiv X_{P} \in T_{P}(\mathcal{V}_{n})}
\end{array}
\end{equation}
from the $n$-dimensional manifold $\mathcal{V}_{n}$ to the $2n$-dimensional manifold $T(\mathcal{V}_{n})$.
\end{definition}

\textbf{\underline{Exercises:}}   \\
$\mathbf{4.}$ Prove the equivalence of the three previously stated definitions of the notion of smooth vector field. \\

%\subsection{Cotangent space, Cotangent bundle, and (cotangent) 1-form fields}

%\section{Differential of a function}

%\section{Pullbacks - Pushouts}

%\section{Implicit function theorem on a manifold}

%\section{Immersions - Imbeddings}

%\section{Inverse mapping theorem on a manifold}

%\section{Submanifolds}

%\section{Affine connections, Covariant Derivatives, Christoffle symbols}

%\section{Torsion (of a connection)}

%\section{Curvature (of the connection)}

%\section{Lie groups, Lie algebras and the exponential mapping}

%\section{Riemannian metrics and manifolds}

%\section{Introduction to curvature (scalar, Ricci, Gauss, ...)}

%\section{Gauss's Lemma, Bianchi identities}

\end{document}